\definecolor{ao}{rgb}{0.0, 0.5, 0.0}
\tikzset{->-/.style={decoration={
			markings,
			mark=at position .58 with {\arrow{>[scale=2]}}},postaction={decorate}}}
\newenvironment{eqaligned}
{%
\begin{equation}
    \begin{aligned}
    } 
{%
\end{aligned}
\end{equation}
\ignorespacesafterend}
\newenvironment{eqaligned*}
{%
\begin{equation*}
    \begin{aligned}
    } 
{%
\end{aligned}
\end{equation*}
\ignorespacesafterend}
\newenvironment{eqgathered}
{%
\begin{equation}
    \begin{gathered}
    } 
{%
\end{gathered}
\end{equation}
\ignorespacesafterend}
\newenvironment{eqgathered*}
{%
\begin{equation*}
    \begin{gathered}
    } 
{%
\end{gathered}
\end{equation*}
\ignorespacesafterend}
\newcommand{\mfk}[1]{\mathfrak{#1}}
\newcommand{\mbb}[1]{\mathbb{#1}}
\newcommand{\msf}[1]{\mathsf{#1}}
\newcommand{\mscr}[1]{\mathscr{#1}}
\newcommand{\mcal}[1]{\mathcal{#1}}
\newcommand{\wt}[1]{\widetilde{#1}}
\newcommand*\ovl[1]{%
  \vbox{%
    \hrule height 0.8pt%
    \kern 0.15ex%
    \hbox{%
      \kern 0.0em%
      \ifmmode#1\else\ensuremath{#1}\fi%
      \kern 0.05em%
    }%
  }%
}
\newcommand{\A}{\msf{A}}
\newcommand{\B}{\msf{B}}
\newcommand{\bpsi}{\ovl{\psi}}
\newcommand{\bpsii}[1]{\bpsi^{\,#1}}
\newcommand{\R}{\mbb{R}}
\newcommand{\pois}[2]{\big\{#1,#2\big\}_\star}
\newcommand{\poisb}[2]{\big\{#1,#2\big\}_{\star,\msf{b}}}
\newcommand{\wh}[1]{\widehat{#1}}
\numberwithin{equation}{section}
\def\ben{\begin{eqnarray*}}
\def\een{\end{eqnarray*}}
\newcommand{\C}{\mathbb{C}}
\newcommand{\GL}{\mathrm{GL}}
\newcommand{\SL}{\mathrm{SL}}
\newcommand{\bC}{\mathbb{C}}
\newcommand{\bN}{\mathbb{N}}
\newcommand{\bR}{\mathbb{R}}
\newcommand{\bZ}{\mathbb{Z}}
\newcommand{\cA}{\mathcal{A}}
\newcommand{\cM}{\mathcal{M}}
\newcommand{\cO}{\mathcal{O}}
\newcommand{\fgl}{\mathfrak{gl}}
\newcommand{\fsl}{\mathfrak{sl}}
\newcommand{\Gr}{\operatorname{Gr}}
\newcommand{\Hom}{\operatorname{Hom}}
\newcommand{\End}{\operatorname{End}}
\newcommand{\Tr}{\operatorname{Tr}}
\newcommand{\Id}{\operatorname{Id}}
\newcommand{\im}{\operatorname{im}}
\newcommand{\Proj}{\operatorname{Proj}}
\newcommand{\Spec}{\operatorname{Spec}}
\newcommand{\tr}{\operatorname{tr}}
\theoremstyle{plain}
  \newtheorem{theorem}{Theorem}[section]
  \newtheorem{proposition}[]{Proposition}[section]
  \newtheorem{lemma}[]{Lemma}[section]
  \newtheorem{corollary}[]{Corollary}[section]
\theoremstyle{definition}
  \newtheorem{definition}{Definition}[section]
  \newtheorem{example}[]{Example}[section]
  \newtheorem{remark}[]{Remark}[section]
\numberwithin{equation}{section}
\newlength{\bibitemsep}\setlength{\bibitemsep}{.25\baselineskip plus .05\baselineskip minus .05\baselineskip}
\newlength{\bibparskip}\setlength{\bibparskip}{0pt}
\let\oldthebibliography\thebibliography
\renewcommand\thebibliography[1]{%
  \oldthebibliography{#1}%
  \setlength{\parskip}{\bibitemsep}%
  \setlength{\itemsep}{\bibparskip}%
}
\renewcommand{\paragraph}{%
  \@startsection{paragraph}{4}%
  {\z@}{2.25ex \@plus 1ex \@minus .2ex}{-1em}%
  {\normalfont\normalsize\bfseries}%
}
\numberwithin{equation}{section}
\pgfplotsset{compat=1.18}
\newcommand{\TheSetup}{
\begin{tikzpicture}[scale=.8]

		\path[line width=2pt, rounded corners=4pt,pattern=crosshatch dots gray	, pattern color=gray, opacity=.4] (1,1) -- (-2,-2) -- (6,-2) -- (9,1) -- (-1,1);
		
		\draw[line width=3pt,Green] (5,1) -- (2,-2);
		
		\draw[fill] (4,0) circle (3pt);
		\draw[fill] (3,-1) circle (3pt);

	    \node[rotate=45] at (-.7,-.2) {$\mfk{gl}_N$ BF theory};

	    \draw[line width=1pt,-stealth] (7,2.5) to [out=270,in=90] (5,+1.1);
	    \node[align=center] at (7,3.2) {the line defect \\ along $\mbb{R}_x$};

	    \draw[line width=1pt,-stealth] (1,2.5) to [out=270,in=90] (3.9,.2);
	    \draw[line width=1pt,-stealth] (1,2.5) to [out=270,in=90] (3,-.8);
	    \node[align=center] at (1,2.8) {local operators $\mathcal{O}^i_j[n]$};

        \draw[fill] (6,-1) circle (3pt);

        \draw[line width=1pt,-stealth] (9,1.5) to [out=180,in=90] (6,-.8);
        \node[align=center] at (10.5,1.6) {local operators \\ $\mathcal{O}[n]$};
\end{tikzpicture}
}
\title{\bf\Large\centering 2d BF Theory Coupled to 1d Quantum Mechanics:\\ The Phase Space and Its Quantization}
\author[a]{Seyed Faroogh Moosavian\footnote{\href{mailto:sfmoosavian@gmail.com}{\texttt{sfmoosavian@gmail.com}}}}
\author[b,c,d,e]{Yehao Zhou\footnote{\href{mailto:yehaozhou1994@gmail.com}{\texttt{yehaozhou1994@gmail.com}}}}
\affil[a]{\small Department of Physics, McGill University, Montr\'eal QC H3A 2T8, Canada}
\affil[b]{\small Perimeter Institute for Theoretical Physics, Waterloo ON N2L 2Y5, Canada}
\affil[c]{\small University of Waterloo, Waterloo ON N2L 3G1, Canada}
\affil[d]{Center for Mathematics and Interdisciplinary Sciences, Fudan University, Shanghai 200433, China}
\affil[e]{Shanghai Institute for Mathematics and Interdisciplinary Sciences, Block A, International Innovation Plaza, No. 657 Songhu Road, Yangpu District, Shanghai 200433, China}
\date{}
\begin{document}

\maketitle

\begin{abstract}

    We study the ring of functions on the (classical and quantized) phase space of 2-dimensional BF theory with the gauge group $\mathrm{GL}_N$ coupled to a 1-dimensional quantum mechanics with global symmetry $\mathrm{GL}_K$. These functions are gauge-invariant local observables of the coupled system. We first construct the classical phase space of this system and describe its ring of functions, as well as their large-$N$ limit. We next compute the Hilbert series of these algebras for finite $ N$ and in the large-$N$ limit. We then study the quantization of this phase space and the deformation quantization of its ring of functions, elaborate its relation to the Yangian, and construct its coproduct. Finally, we identify these quantized algebras with the quantized Coulomb branch algebras of certain 3d $\mscr{N}=4$ quiver gauge theories.

\end{abstract}

\tableofcontents

\section{Introduction}
\label{sec:introduction}
Holography is one of the main active areas of research in finding a theory of quantum gravity \cite{tHooft199310,Susskind199409}. The prime example of this concept is the AdS/CFT Correspondence \cite{Maldacena199711,Witten199802}.

Recently, Costello and Li have formulated a twisted version of the AdS / CFT correspondence \cite{Costello201610,Costello201705}. According to this framework, the holography can be understood as a certain algebraic relation, known as Koszul duality, between the algebra of operators in the two sides of the correspondence (see \cite{Costello201303} for an earlier example and also \cite{IshtiaqueMoosavianZhou201809,CostelloGaiotto1821,GaiottoOh201907,RaghavendranYoo201910,LiTroost201911,CostelloPaquette202001,OhZhou202002,GaiottoAbajian202004,LiTroost202005,OhZhou202103,OhZhou202105,BudzikGaiotto202106,GaiottoLee202109,CostelloLi201606,CostelloLi201905,CostelloWilliams202110} for follow-up and related works). For a recent and very readable review of Koszul duality aimed at physicists, we refer the reader to \cite{PaquetteWilliams202110}. An instance of this twisted version has been studied in \cite{IshtiaqueMoosavianZhou201809}, where it was shown that the algebra of local operators in $2d$ BF theory with gauge group $\GL_N$ coupled to a $1d$ fermionic quantum mechanics with global symmetry $\GL_K$ (the boundary side) and the algebra of scattering states computed using Witten digrams of $4d$ Chern-Simons theory with gauge group $\GL_K$ (the bulk side) match and in the large-$N$ limit approach the Yangian (see \cite[Theorem 1]{IshtiaqueMoosavianZhou201809}).

In this paper, we study a closely-related system, which is $2d$ BF theory with gauge group $\GL_N$ coupled to a $1d$ \textit{bosonic} quantum mechanics with global symmetry $\GL_K$. We will show that the quantized algebra of gauge-invariant local operators in this coupled theory is a certain truncation of the Yangian of $\fgl_K$, and it approaches $Y_\hbar(\fgl_K)$ in the $N\to\infty$ limit, see Theorem \ref{thm:main} below.

The action functional of this system is given by
\begin{equation}\label{eq:action}
    S=S_{\text{BF}}+S_{\text{QM}}=\frac{1}{2\pi}\bigintsss_{\mathbb R^2_{x,w}}\text{Tr}(B F_A)+\bigintsss_{\mathbb R_{x}\times\{w=0\}}\sum_{a=1}^KI_a(\partial_x+A)J_a.
\end{equation}
Here $A$ is the $\GL_N$ connection on $\bR^2$, $F_A$ is the curvature form associated to $A$, $B$ is a scalar field on $\bR^2$ valued in $\End(\bC^N)$, $I_a$ is a scalar field on $\bR_{x}\times \{w=0\}$ valued in $(\bC^N)^*$, and $J_a$ is a scalar field on $\bR_{x}\times \{w=0\}$ valued in $\bC^N$. We would like to analyze the geometry of the phase space of the system \eqref{eq:action} and its quantization. One of the reasons we are studying the problem from the phase space perspective is that geometric tools allows us to make statements at finite $N$.

We choose the gauge $A=0$. Then the equations of motion are
\begin{align}\label{EOM}
\partial_x J_a=\partial_x I_a=0,\quad \partial_x B=0,\quad \partial_w B-\sum_{a=1}^KJ_aI_a\delta_{w=0}=0.
\end{align}
The solution is that $J_a$ and $I_a$ are constant along the line defect, $B$ is constant on the regions $w<0$ and $w>0$, and 
\begin{align}\label{moment map}
    B_{w>0}-B_{w<0}=\sum_{a=1}^KJ_aI_a.
\end{align}
So the phase space, denoted by $\cM(N,K)$, is parametrized by $B_{w>0}$, $J_a$ and $I_a$, modulo the $\GL_N$ action. This is the quiver variety associated to the framed quiver in Figure \ref{Fig: quiver}, i.e.
\begin{align}\label{eq: M(N,K)_def}
    \mathcal M(N,K)\cong \mathrm{Rep}(N,K)\sslash \GL_N,
\end{align}
where $\mathrm{Rep}(N,K):=\End(\bC^N)\oplus\Hom(\bC^N,\bC^K)\oplus \Hom(\bC^K,\bC^N)$ is the linear space of representations of quiver in Fig. \ref{Fig: quiver}. We study the geometry of $\mathcal M(N,K)$ in \S\ref{sec:geometry of the phase space}.

\begin{figure}[h!]
\centering
\begin{tikzpicture}[scale=.6,decoration={
    markings,
    mark=at position 0.55 with {\arrow{>}}}
    ] 
			\draw[line width=1pt] (0,0) circle (.5 cm)node[]{$N$};
			\draw[line width=1pt,postaction={decorate}] (-1.65,-2.2)node[yshift=+.8cm]{$J$} -- (-.35,-.35);
			\begin{scope}[xscale=-1]
			    \draw[line width=1pt,postaction={decorate}] (-.35,-.35) -- (-1.65,-2.2)node[yshift=+.8cm]{$I$};
			    \draw[line width=1pt] (-2.15,-2.2) rectangle ++(1,-1) node[xshift=+.3cm,yshift=+.3cm]{$K$};
			\end{scope}
			\draw[line width=1pt] (-2.15,-2.2) rectangle ++(1,-1) node[xshift=-.3cm,yshift=+.3cm]{$K$};
			\draw[line width=1 pt,looseness=7,postaction={decorate}] (-.35,.35) to [out=130,in=50] (.35,.35)node[xshift=-.2cm,yshift=1cm]{$B$};
	\end{tikzpicture}
	\caption{The quiver description of the fields contents.}\label{Fig: quiver}
\end{figure}

\subsection*{Summary of the results}

The basic logic of the paper is to first study the classical phase space $\mcal{M}(N,K)$ and its ring of functions $\mbb{C}[\mcal{M}(N,K)]$, and finally their large-$N$ limit. We then study modules for these algebras. Then, we considered the quantization of the classical phase space and the deformation quantization $\mbb{C}_\hbar[\mcal{M}(N,K)]$ of its ring of functions, which leads to the algebra. We study its structure, especially its coproduct and its identification with the Coulomb branch algebra of $3d$ $\mscr{N}=4$ theories. 

\smallskip Our main result is the following.

\begin{theorem}[{Corollary \ref{cor:M(N,K) as Coulomb branch}, Proposition \ref{prop:the surjective map between Yangian and deformed ring of functions}}]\label{thm:main}
There is an isomorphism between Poisson varieties
\begin{align*}
\cM(N,K)&\cong \text{Coulomb branch associated to the quiver in Figure \ref{fig:quiver for the Coulomb branch description_intro} with flavor symmetry}\\
&\cong \text{Beilinson-Drinfeld slice $\overline{\mathcal W}^{\underline{\lambda^*}}_0$ for $\GL_K$},
\end{align*}
where $\lambda=N\omega_1+N\omega_{K-1}$, and $\omega_i$ is the $i$-th fundamental coweight for $\GL_K$. There is an isomorphism between $\bC[\hbar]$ algebras
\begin{align*}
\bC_\hbar[\cM(N,K)]&\cong \text{quantized Coulomb branch algebra associated to the quiver in Figure \ref{fig:quiver for the Coulomb branch description_intro}}\\
&\quad \text{ with flavor symmetry}\\
&\cong \text{truncated Yangian $\mathbf Y^{\lambda}_0$}
\end{align*}
Moreover, there exists a surjective $\bC[\hbar]$ algebra homomorphism $$\rho_N:Y_\hbar(\fgl_K)\otimes \Lambda \twoheadrightarrow \bC_\hbar[\cM(N,K)]$$ such that $\bigcap_{N=1}^\infty \ker(\rho_N)=0$. Here $\Lambda:=\bC[p_i:i\in \bZ_{\ge 1}]$ is the polynomial ring of countably many generators, which is naturally identified with ring of symmetric polynomials with $p_i$ being the $i$-th power sum function.
\end{theorem}

We emphasize that the isomorphism between truncated Yangian $\mathbf Y^{\lambda}_0$ and quantized ring $\bC_\hbar[\cM(N,K)]$ is explicit, see the proof of Theorem \ref{theorem: quantum ideal}. We also provide a more conceptual (non-explicit) proof of the isomorphism $\mathbf Y^{\lambda}_0\cong \bC_\hbar[\cM(N,K)]$ using the tool of ring objects in the derived category of constructible sheaves on affine Grassmannian \cite{braverman2017ring}, see Remark \ref{rmk:another proof}.

\begin{figure}[h!]
    \centering
    \begin{tikzpicture}[scale=.5]
    
        \draw[line width=1pt] (-6,0) circle (.75cm);
        \draw[line width=1pt] (-5.25,0) -- (-4.25,0);
        \draw[line width=1pt] (-3.5,0) circle (.75cm);
        \draw[line width=1pt] (-2.75,0) -- (-1.75,0);
        \draw[line width=1pt,loosely dotted] (-1.75,0) -- (1.75,0);
        \draw[line width=1pt] (-6,-.75) -- (-6,-2.05);
        \draw[line width=1pt] (-6.65,-2.05) rectangle (-5.35,-3.35);
        
        \node at (-6, 0) {$N$};
        \node at (-6,-2.7) {$N$};
        \node at (-3.5,0) {$N$};

        \begin{scope}[xscale=-1]
        
        \draw[line width=1pt] (-6,0) circle (.75cm);
        \draw[line width=1pt] (-5.25,0) -- (-4.25,0);
        \draw[line width=1pt] (-3.5,0) circle (.75cm);
        \draw[line width=1pt] (-2.75,0) -- (-1.75,0);
        \draw[line width=1pt] (-6,-.75) -- (-6,-2.05);
        \draw[line width=1pt] (-6.65,-2.05) rectangle (-5.35,-3.35);

        \draw [line width=1pt, decorate,decoration={brace,amplitude=10pt,mirror,raise=.75cm}]
        (-6.8,0) -- (6.8,0) node[midway,yshift=1.4cm]{$K-1$ gauge nodes};

        \node at (-6, 0) {$N$};
        \node at (-6,-2.7) {$N$};
        \node at (-3.5,0) {$N$};
        \end{scope}

    \end{tikzpicture}
    \caption{The quiver whose Coulomb branch isomorphic to $\cM(N,K)$.}
    \label{fig:quiver for the Coulomb branch description_intro}
\end{figure}

\subsection*{Structure of the paper}
This paper is organized as follows.

\smallskip In \S\ref{sec:geometry of the phase space}, we investigate the geometry of the phase space of BF theory coupled to our quantum-mechanical system and study the algebra of functions in this phase space. The main results of this section are the following

\begin{enumerate}
    \item The first result is concerned with the structure of the phase space; we show that $\mathcal M(N,K)$ is a normal affine variety of dimension $2NK$. This is shown in Proposition \ref{prop:phase space is a normal affine variety}. 
    
    \item $\mbb{C}[\mcal{M}(N,K)]$, the algebra of functions on $\mcal{M}(N,K)$ is generated by the set $\{\Tr(B^n),I_aB^mJ_b:1\le n\le N,0\le m\le N-1,1\le a,b\le K\}$. Note that the operators $\text{Tr}(B^n)$ are dual to the gravitons while determinant $\det(B^n)$ and subdeterminant operators are dual to giant gravitons in the bulk. We then find that $\mcal{M}(N,1)\simeq\mbb{A}^{2N}$.

    \item Next, we define the following Poisson structure on $\mathcal{M}(N,K)$ by \begin{equation}\label{eq:Poisson bracket on the phase space (introduction)}
        \{J_{ia},I_{bj}\}=\delta_{ab}\delta_{ij},\quad 
        \{B_{mn},B_{pq}\}=\delta_{pn}B_{mq}-\delta_{mq}B_{pn},\quad
    \{B_{mn},I_{bj}\}=\{B_{mn},J_{ia}\}=0,
    \end{equation}
    and then define
    \begin{equation}
        T^{(n)}_{ab}:= I_aB^nJ_b=I_{ai_1}B_{i_1 i_2}B_{i_2 i_3}\cdots B_{i_n i_{n+1}}J_{i_{n+1}b},
    \end{equation}
    with the convention $T^{(-1)}_{ab}=\delta_{ab}$. Then we will see that
    \begin{equation}\label{eq:Poisson bracket in terms of T}
    \{T^{(p)}_{ab},T^{(q)}_{cd}\}=\sum_{i=-1}^{\min(p,q)-1}\left(T^{(p+q-1-i)}_{cb}T^{(i)}_{ad}-T^{(i)}_{cb}T^{(p+q-1-i)}_{ad}\right).
    \end{equation}
    In fact, this is the classical limit of the RTT relation \eqref{eq: RTT_intro}.

    \item A related side result of this section, whose details are expounded in Appendix \ref{sec:Relationship between two Poisson structures}, is the relationship between the natural Poisson brackets \eqref{eq:Poisson bracket on the phase space (introduction)} and the one inspired by the twisted holography computations in \cite{IshtiaqueMoosavianZhou201809}. We first derive the latter  (see Theorem \ref{thr:Poisson structure inspired by twisted holography} and the Poisson brackets \eqref{eq:Poisson structure of elementary fields from bosonic quantum mechanics}) from the classical limit of the algebra of gauge-invariant operators of the 2d BF theory coupled to a 1d quantum mechanics in loc. cit. We then show that they are related by a field redefinition (see relation \eqref{eq:relation between the natural Poisson brackets vs the one inspired by the twisted holography} and field redefinition \eqref{eq:redefinition of B field}).
\end{enumerate}

\smallskip In \S\ref{sec:large-N limit}, we consider the large-$N$ limit of $\mcal{M}(N,K)$ and its ring of functions $\mbb{C}[\mcal{M}(N,K)]$. The main results of this section as as follows
\begin{enumerate}
    \item We show that $\bigcup_N \mathcal{M}(N,K)$ is Zariski-dense in $L^-\GL_K\times L^-\GL_1$, where $L^-G$ is the negative loop group associated to an algebraic group $G$ defined in \eqref{eqn: loop group}. This result is the content of Theorem \ref{theorem: density 2}.
    
    \item Using this result, we then show that $\mathcal{M}(\infty,K)\cong L^-\GL_K\times L^-\GL_1$. This in turn would imply that 
    \begin{equation}
        \mathbb C[\mathcal{M}(\infty,K)]\cong \mathbb C[L^-\GL_K]\otimes \mathbb C[L^-\GL_1].
    \end{equation}
\end{enumerate}

\smallskip \S\ref{sec:the hilbert series of C[M(N,K)]} is devoted to study of Hilbert series of modules for $\mbb{C}[\mcal{M}(N,K)]$. The main result of this section is the computation of Hilbert series for $\mbb{C}[\mcal{M}(N,K)]$ in Theorem \ref{thr:the Hilbert series of C[M(N,K)]} and its large-$N$ limit $\mathbb C[\mathcal{M}(\infty,K)]$ in Proposition \ref{prop:the Hilbert series in the large-N limit}.

\smallskip In \S\ref{sec:quanization of phase space}, we study the quantization $\mbb{C}_\hbar[\mcal{M}(N,K)]$ of the ring of functions $\mbb{C}[\mcal{M}(N,K)]$ in the phase space. Quantization amounts to replace the Poisson brackets \eqref{eq:Poisson bracket on the phase space (introduction)} with commutators and studying the resulting algebras. The main results of this section are as follows.
\begin{enumerate}
    \item We first prove the commutator of $T^{(n)}_{ab}$:
    \begin{equation}\label{eq: RTT_intro}
[T^{(p)}_{ab},T^{(q)}_{cd}]=\hbar\sum_{i=-1}^{\min(p,q)-1}\left(T^{(i)}_{cb}T^{(p+q-1-i)}_{ad}-T^{(p+q-1-i)}_{cb}T^{(i)}_{ad}\right).
\end{equation}
This is equivalent to the RTT relation if one defines the generating functions $T_{ab}(z)$ (the RTT generators) of $T^{(n)}_{ab}$ by the following power-series expansion at $z\to \infty$
\begin{align*}
    T_{ab}(z):=\sum_{n\ge -1} T^{(n)}_{ab}z^{-n-1}=\delta_{ab}+I_a\frac{1}{z-B}J_b.
\end{align*}

\item We next construct two different surjective algebra maps from $Y_\hbar(\fgl_K)\otimes \Lambda$ to $\mathbb C_{\hbar}[\cM(N,K)]$, the first denoted $\rho_N$ is constructed in Proposition \ref{prop:the surjective map between Yangian and deformed ring of functions}, and the second denoted $\widetilde\rho_N$ is constructed in \S\ref{sec:another map between Yangian and deformed ring of functions}. Then we prove the key technical result Theorem \ref{theorem: quantum ideal} which characterizes the kernel of $\rho_N$. The proof of Theorem \ref{theorem: quantum ideal} leads to our main result Corollary \ref{cor:M(N,K) as Coulomb branch}.

\item In \S\ref{sec:quantized coproduct}, we construct the coproducts of $Y_\hbar(\fgl_K)\otimes \Lambda$ and $\bC_\hbar[\cM(N,K)]$, and they are compatible in the sense that the following diagram
\begin{equation}
\begin{tikzcd}
Y_\hbar(\fgl_K)\otimes \Lambda \ar[r,"\underset{n_1,n_2}{\mathbf \Delta}"] \ar[d,"\rho_{n_1+n_2}" '] & \left(Y_\hbar(\fgl_K)\otimes \Lambda\right)\otimes \left(Y_\hbar(\fgl_K)\otimes \Lambda\right) \ar[d,"\rho_{n_1}\otimes\rho_{n_2}"] \\
\bC_\hbar[\cM(n_1+n_2,K)] \ar[r,"\underset{n_1,n_2}{\Delta}"] & \bC_\hbar[\cM(n_1,K)]\otimes \bC_\hbar[\cM(n_2,K)]
\end{tikzcd}
\end{equation}
is commutative.

\item Finally, we explain the identification between the quantized ring of functions in phase space $\mbb{C}_\hbar[\mcal{M}(N,K)]$ and the Coulomb branch algebra of certain $3d$ $\mscr{N}=4$ quiver gauge theories. This leads to a conceptual proof of Corollary \ref{cor:M(N,K) as Coulomb branch}, see Remark \ref{rmk:another proof}.
\end{enumerate}

\smallskip Some details are relegated to the appendices. The Hall--Littlewood polynomial has been reviewed in Appendix \ref{sec: Hall-Littlewood Polynomials}. Geometrization of the Jing operators, which are used in giving a vertex-algebra definition of the Hall--Littlewood polynomials, is explained in Appendix \ref{appsec:affine grassamannians and geometrization of Jing operators}.

\section{Geometry of the Phase Space \texorpdfstring{$\mathcal M(N,K)$}{}}
\label{sec:geometry of the phase space}
This section is devoted to the elucidation of the geometry of the phase space $\mcal{M}(N,K)$ and its ring of functions.

\subsection{Generators of \texorpdfstring{$\mathbb C[\mathcal M(N,K)]$}{C[M(N,K)]}}\label{subsec:generators}

By invariant theory, the algebra of functions on $\mathcal M(N,K)$, denoted by $\mathbb C[\mathcal M(N,K)]$, is generated by 
\begin{align}\label{eq: gen}
    \Tr(B^n)=B_{i_1 i_2}B_{i_2 i_3}\cdots B_{i_n i_1},\;\text{ and }\; I_aB^mJ_b=I_{a i_1}B_{i_1 i_2}\cdots B_{i_m i_{m+1}}J_{i_{m+1} b},
\end{align}
where $n\ge 1$, $m\ge 0$, and $1\le a,b\le K$. In fact, the trace relations \footnote{The identity $\det(\Id-t\cdot B)=\exp\left(-\sum_{n=1}^{\infty}\frac{t^n}{n}\Tr(B^n)\right)$ implies that the right-hand-side is a polynomial of degree $N$, thus $\Tr(B^r)$ $(r>N)$ can be expressed as polynomials of $\{\Tr(B^n):1\le n\le N\}$. The Caylay-Hamilton identity $\det(x\cdot\Id-B)|_{x=B}=0$ shows that $B^s$ $(s\ge N)$ can be expressed as polynomials of $\{B^m:0\le m\le N-1\}$ and $\{\Tr(B^n):1\le n\le N\}$.} guarantees that the subset $\{\Tr(B^n),I_aB^mJ_b: 1\le n\le N, 0\le m\le N-1,1\le a,b\le K\}$ of generators \eqref{eq: gen} is sufficient to generate $\mathbb C[\mathcal M(N,K)]$.

When $K=1$, it turns out that $\{\Tr(B^n),IB^mJ: 1\le n\le N, 0\le m\le N-1\}$ freely generates $\mathbb C[\mathcal M(N,1)]$ so that we have $\mathcal M(N,1)\cong \mathbb A^{2N}$. In fact, we will see shortly in Proposition \ref{prop:phase space is a normal affine variety} that $\dim \mathcal M(N,1)=2N$. Since the map $\mathcal M(N,1)\to \mathbb A^{2N}$ induced by the generators $\{\Tr(B^n),IB^mJ: 1\le n\le N, 0\le m\le N-1\}$ is closed embedding, this map must be an isomorphism by dimensional reason.

For general $K$, let us fix a pair of integers $a,b$, then the functions $\Tr(B^n), I_aB^mJ_b$ give rise to a morphism $\eta_{ab}:\mathcal M(N,K)\to \mathcal M(1,K)$ sending a triple $(B,J,I)$ to $(B,J_b,I_a)$. From the above discussions, we have the following result.
\begin{proposition}\label{Prop: closed embedding}
The product of $\eta_{ab}$ is a closed embedding
\begin{align}\label{eq:closed embedding}
    \eta:=\prod_{1\le a,b\le K} \eta_{ab}: \mathcal M(N,K) \longhookrightarrow \mathcal M(N,1)\times _{\mathbb A^{(N)}}\cdots \times _{\mathbb A^{(N)}} \mathcal M(N,1),
\end{align}
where the right hand side has $K^2$ copies of $\mathcal M(N,1)$.
\end{proposition}

\subsection{Factorization} There is an obvious morphism: 
\begin{align}\label{eqn: factorization}
    \mathfrak{f}_{N_1,N_2}:\mathcal M(N_1,K)\times \mathcal M(N_2,K)&\longrightarrow \mathcal M(N_1+N_2,K),\\
    (B^{(1)},J^{(1)},I^{(1)})\times (B^{(2)},J^{(2)},I^{(2)})&\mapsto \left(
    \begin{bmatrix}
    B^{(1)} & 0\\
    0 & B^{(2)}
    \end{bmatrix},
    \begin{bmatrix}
    J^{(1)}\\
    J^{(2)}
    \end{bmatrix},
    \begin{bmatrix}
    I^{(1)} & I^{(2)}
    \end{bmatrix}.
    \right)
\end{align}
Consider the natural projection
\begin{align}
    \Phi_{N}: \mathcal M(N,K)\longrightarrow \mathbb A^{(N)}.
\end{align}
Here $\Phi_{N}$ maps a triple $(B,J,I)$ to the coefficients of the characteristic polynomial of $B$, and $\mathbb A^{(N)}$ is the $N$-th symmetric product of affine line $\mathbb A^1$, which parametrizes coefficients of the characteristic polynomial of $B$. Denote by $\left(\mathbb A^{(N_1)}\times \mathbb A^{(N_2)}\right)_{\mathrm{disj}}$ the open subset of $\mathbb A^{(N_1)}\times \mathbb A^{(N_2)}$ such that eigenvalues of $B^{(1)}$ is disjoint from eigenvalues of $B^{(2)}$. Analogous to the $K=1$ case discussed in \cite{finkelberg2014quantization}, we have the following factorization isomorphism
\begin{proposition}
The restriction of $\mathfrak{f}_{N_1,N_2}$ on $\left(\mathbb A^{(N_1)}\times \mathbb A^{(N_2)}\right)_{\mathrm{disj}}$ is isomorphism:
\begin{align*}
    \mathfrak{f}_{N_1,N_2}:\left(\mathcal M(N_1,K)\times \mathcal M(N_2,K)\right)_{\mathrm{disj}}\cong \mathcal M(N_1+N_2,K) \times _{\mathbb A^{(N_1+N_2)}} \left(\mathbb A^{(N_1)}\times \mathbb A^{(N_2)}\right)_{\mathrm{disj}}.
\end{align*}
Here $\left(\mathcal M(N_1,K)\times \mathcal M(N_2,K)\right)_{\mathrm{disj}}$ is the restriction of $\mathcal M(N_1,K)\times \mathcal M(N_2,K)$ on $\left(\mathbb A^{(N_1)}\times \mathbb A^{(N_2)}\right)_{\mathrm{disj}}$.
\end{proposition}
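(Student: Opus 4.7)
My plan is to exhibit a morphism inverse to $\mathfrak{f}_{N_1,N_2}|_{\mathrm{disj}}$ using the Chinese Remainder Theorem applied to $\mathbb{C}[B]$. The key observation is that for $(B,\psi,\overline\psi)\in \mathcal M(N_1+N_2,K)$ sitting over a point in $\left(\mathbb A^{(N_1)}\times \mathbb A^{(N_2)}\right)_{\mathrm{disj}}$, the characteristic polynomial factors as $\det(z-B)=P_1(z)P_2(z)$ with $\Res(P_1,P_2)\ne 0$. Bezout's identity then produces polynomials $A_1(z),A_2(z)\in \mathbb{C}[z]$ whose coefficients are rational functions of the coefficients of $P_1,P_2$ with denominator a power of the resultant, satisfying $A_1P_1+A_2P_2=1$. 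Setting
\begin{equation*}
    e_1\eqdef A_2(B)P_2(B),\qquad e_2\eqdef A_1(B)P_1(B),
\end{equation*}
yields commuting idempotents in $\mathbb{C}[B]$ with $e_1+e_2=\mathrm{id}$ and $e_1e_2=0$, and hence a canonical $B$-invariant splitting $\mathbb{C}^N=V_1\oplus V_2$ with $V_i=\ker P_i(B)$ of dimension $N_i$ (since the roots of $P_1$ and $P_2$ are disjoint).

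The candidate inverse is then $(B,\psi,\overline\psi)\mapsto (B|_{V_1},e_1\psi,\overline\psi|_{V_1})\times (B|_{V_2},e_2\psi,\overline\psi|_{V_2})$, up to a choice of isomorphisms $V_i\cong \mathbb{C}^{N_i}$. This choice of trivializations is exactly the ambiguity by $\GL_{N_1}\times \GL_{N_2}$, which is absorbed after passing to the categorical quotients defining $\mathcal M(N_i,K)$. To upgrade this pointwise recipe to a morphism of schemes over $\left(\mathbb A^{(N_1)}\times \mathbb A^{(N_2)}\right)_{\mathrm{disj}}$, I would work locally on affine opens of the base, choose a local frame trivializing $V_1\oplus V_2$, perform the construction at the level of $\mathrm{Rep}(N_i,K)$, and descend via $\GL_N$-invariants.

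The main obstacle is this descent to the GIT quotient: one must verify that the locally defined morphism into $\mathrm{Rep}(N_1,K)\times\mathrm{Rep}(N_2,K)$ becomes independent of the local frame after composition with the affinization $\mathrm{Rep}\to \mathcal M$. Concretely, every $\GL_{N_1}\times \GL_{N_2}$-invariant polynomial in $(B|_{V_i},e_i\psi,\overline\psi|_{V_i})$ should be expressible as a $\GL_N$-invariant polynomial in $(B,\psi,\overline\psi)$ after inverting $\Res(P_1,P_2)$, which follows from the fact that the idempotents $e_i$ are themselves polynomial in $B$ with coefficients regular on the disjoint locus. Once descent is in place, checking the two compositions is routine: on a block-diagonal input in the image of $\mathfrak{f}_{N_1,N_2}$ the idempotents $e_i$ reduce to the obvious block projectors, and conversely the splitting $\mathbb{C}^N=V_1\oplus V_2$ reassembles into the original triple under $\mathfrak{f}_{N_1,N_2}$. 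This parallels the strategy used for $K=1$ in \cite{finkelberg2014quantization}.
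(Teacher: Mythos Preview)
The paper does not actually supply a proof of this proposition: it simply states that the result is ``analogous to the $K=1$ case discussed in \cite{finkelberg2014quantization}'' and moves on to the corollary. Your Chinese Remainder / idempotent-splitting argument is exactly the standard mechanism behind such factorization statements and is what one would find upon unwinding the reference, so in that sense you are supplying the omitted details rather than taking a different route.

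Your write-up is essentially correct. The one place worth tightening is the descent step, which you flag yourself. Rather than speaking of ``local frames trivializing $V_1\oplus V_2$'', it is cleaner to bypass bundles entirely and argue directly with invariants, as you hint at the end: since $e_i=e_i(B)$ is a polynomial in $B$ with coefficients regular on the disjoint locus, the generators of $\mathbb C[\mathcal M(N_i,K)]$ pull back to
\[
\Tr\bigl((B|_{V_i})^n\bigr)=\Tr\bigl(e_iB^n\bigr),\qquad
\overline\psi_a|_{V_i}(B|_{V_i})^m(e_i\psi_b)=\overline\psi_a\,e_iB^m\psi_b,
\]
which are manifestly $\GL_{N_1+N_2}$-invariant regular functions on $\mathrm{Rep}(N_1+N_2,K)_{\mathrm{disj}}$. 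This gives a globally defined ring homomorphism $\mathbb C[\mathcal M(N_1,K)]\otimes\mathbb C[\mathcal M(N_2,K)]\to\mathbb C[\mathcal M(N_1+N_2,K)_{\mathrm{disj}}]$ over the base, with no need to patch local trivializations. The two compositions with $\mathfrak f_{N_1,N_2}^*$ are then checked on these generators exactly as you describe.
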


\begin{remark}
The factorization isomorphism $\mathfrak{f}_{N_1,N_2}$ is compatible with the embedding $\eta$ in \eqref{eq:closed embedding}.
\end{remark}

\begin{proposition}\label{prop:phase space is a normal affine variety}
$\mathcal M(N,K)$ is a normal affine variety of dimension $2NK$.
\end{proposition}

\begin{proof}
$\mathcal M(N,K)$ is normal and affine since is the quotient of an affine space by $\GL_N$, we only need to show that its dimension is $2NK$. By the factorization isomorphism, it suffices to show that $\dim \mathcal M(1,K)=2K$. Note that $\mathcal M(1,K)$ is isomorphic to the $\mathbb A^1$ times the variety of $K\times K$ matrices with rank $\le 1$, and it is known that the latter has dimension $2K-1$ \cite{bruns2006determinantal}.
\end{proof}

\begin{proposition}\label{prop: Phi_N is flat}
The morphism $\Phi_{N}: \mathcal M(N,K)\rightarrow \mathbb A^{(N)}$ is flat.
\end{proposition}

\begin{proof}
Since the map $\fgl_N\to \mathbb A^{(N)}$ sending a matrix $B$ to the coefficients of the characteristic polynomial of $B$ is flat \cite[\S 3.2]{Slodowy_1980}. Therefore the morphism $\mathrm{Rep}(N,K)\to \mathbb A^{(N)}$ sending $(B,J,I)$ to the coefficients of the characteristic polynomial of $B$ is flat. Since $\GL_N$ is reductive, the $\GL_N$-invariant subspace $\mathbb C[\mathcal M(N,K)]$ is a direct summand of $\mathbb C[\mathrm{Rep}(N,K)]$ as $\mathbb C[\mathbb A^{(N)}]$-module. Thus $\Phi_{N}: \mathcal M(N,K)\rightarrow \mathbb A^{(N)}$ is flat.
\end{proof}

\subsection{Singularities and Resolution} 

Let us pick a character $\zeta:\GL_N\to \bC^{\times}$ such that $\zeta(g)=\det(g)$. According to \cite[Proposition 3.1]{king1994moduli}, a point in $\mathrm{Rep}(N,K)$ is $\zeta$-semistable if and only if the following condition is satisfied:
\begin{itemize}
    \item[] if $S\subseteq\mathbb C^N$ is a linear subspace such that $B(S)\subseteq S$, and $\im(J)\subseteq S$, then $S=\mathbb C^N$.
\end{itemize}
It is straightforward to see that a point in $\mathrm{Rep}(N,K)$ is $\zeta$-semistable if and only if it is $\zeta$-stable. The $\zeta$-stable locus in $\mathrm{Rep}(N,K)$ is denoted $\mathrm{Rep}(N,K)^{\zeta-s}$. It is easy to see that $\mathrm{Rep}(N,K)^{\zeta-s}$ is nonempty. Define the GIT quotient
\begin{align}\label{eq: M^s(N,K)_def}
    \mathcal M^\zeta(N,K):=\mathrm{Rep}(N,K)\sslash_\zeta \GL_N=\mathrm{Rep}(N,K)^{\zeta-s}\sslash \GL_N.
\end{align}
Then $\mathcal M^\zeta(N,K)$ is a smooth variety of dimension $2NK$, since $\GL_N$ acts on $\mathrm{Rep}(N,K)^{\zeta-s}$ freely. According to \cite[\S 2]{king1994moduli} there is a natural projective morphism 
\begin{align}\label{eq: res of sing}
    f:\mathcal{M}^\zeta(N,K)\longrightarrow \mathcal{M}(N,K).
\end{align}

\begin{proposition}\label{prop: res of sing}
The projective morphism $f:\mathcal{M}^\zeta(N,K)\to \mathcal{M}(N,K)$ is a resolution of singularities. Moreover, we have $\mathcal O_{\mathcal M(N,K)}\cong \mathbf Rf_*\mathcal O_{\mathcal M^\zeta(N,K)}$, i.e.
\begin{itemize}
    \item[(1)] $\mathbf R^if_*\mathcal O_{\mathcal M^\zeta(N,K)}=0$ for $i>0$,
    \item[(2)] the natural homomorphism $\mathcal O_{\mathcal M(N,K)}\to f_*\mathcal O_{\mathcal M^\zeta(N,K)}$ is an isomorphism.
\end{itemize}
\end{proposition}

\begin{proof}
Let us prove the second statement first. Since $\cM(N,K)$ is affine, it suffices to show that
\begin{align}\label{eq: quatization theorem}
H^i(\mathcal M^\zeta(N,K),\mathcal O_{\mathcal M^\zeta(N,K)})=
\begin{cases}
    \bC[\mathcal M(N,K)] &, \text{ if }i=0,\\
    0 &, \text{ if }i>0.
\end{cases}
\end{align}
Then \eqref{eq: quatization theorem} is a special case of \cite[Theorem 3.29]{Halpern_Leistner_2014} by taking $F^{\bullet}=G^{\bullet}=\mathcal O_{[\mathrm{Rep}(N,K)/\GL_N]}$. 

To show that $f:\mathcal{M}^\zeta(N,K)\to \mathcal{M}(N,K)$ is a resolution of singularities, it is enough to show that $f$ is birational, i.e. there exists an open subscheme $U\subset \cM(N,K)$ such that $f$ induces isomorphism $f^{-1}(U)\cong U$. As we have shown that the natural homomorphism $\mathcal O_{\mathcal M(N,K)}\to f_*\mathcal O_{\mathcal M^\zeta(N,K)}$ is an isomorphism, $f$ must be a dominant morphism, i.e. the image of $f$ is dense in $\cM(N,K)$. Since $f$ is proper, it follows that $f$ is surjective. Then there exists an open subscheme $U\subset \cM(N,K)$ such that $f^{-1}(U)\to U$ is finite, because $\dim \cM^\zeta(N,K)=2NK=\dim\cM(N,K)$. The isomorphism $\mathcal O_{\mathcal M(N,K)}\cong f_*\mathcal O_{\mathcal M^\zeta(N,K)}$ implies that $f^{-1}(U)\to U$ is an isomorphism.
\end{proof}

Recall that an algebraic variety $X$ is said to have \textit{rational singularities} if for every resolution of singularities $\pi:Y\to X$ we have $\cO_{X}\cong \mathbf R\pi_*\cO_{Y}$. An algebraic variety $X$ is said to have \textit{Gorenstein singularities} if the dualizing complex $\omega_X$ is a locally free sheaf.

\begin{corollary}\label{cor: rational Gorenstein}
$\mathcal M(N,K)$ has rational and Gorenstein singularities.
\end{corollary}

\begin{proof}
By \cite[Theorem 5.10]{kollar1998birational}, an algebraic variety $X$ has rational singularities if and only if there exists a resolution of singularities $\pi:Y\to X$ such that $\cO_{X}\cong \mathbf R\pi_*\cO_{Y}$. In particular, $\mathcal M(N,K)$ has rational singularities by Proposition \ref{prop: res of sing}. By Lemma \ref{lem: triv can bun} below, the canonical line bundle on $\cM^\zeta(N,K)$ is trivial, i.e. $\mathcal K_{\mathcal M^\zeta(N,K)}\cong \cO_{\mathcal M^\zeta(N,K)}$. Then the dualizing sheaf $\omega_{\mathcal M(N,K)}$ is computed by $$\omega_{\mathcal M(N,K)}\cong \mathbf Rf_* \mathcal K_{\mathcal M^\zeta(N,K)}\cong \mathbf Rf_*\mathcal O_{\mathcal M^\zeta(N,K)}\cong \mathcal O_{\mathcal M(N,K)},$$which is a line bundle. Thus $\mathcal M(N,K)$ has Gorenstein singularities.
\end{proof}

\begin{lemma}\label{lem: triv can bun}
The canonical line bundle on $\mathcal{M}^\zeta(N,K)$ is trivial.
\end{lemma}
\begin{proof}
Denote by $\mathcal V$ the tautological sheaf on $\mathcal{M}^\zeta(N,K)$, which is the descent of $\mathbb C^N$ along the quotient $\mathrm{Rep}(N,K)^{\zeta-s}\to \mathcal{M}^\zeta(N,K)$, and denote by $W$ the framing vector space, then there is a short exact sequence
\begin{align}
    0\longrightarrow \End(\mathcal V)\longrightarrow \End(\mathcal V)\oplus W\otimes \mathcal V^*  \oplus W^*\otimes \mathcal V \longrightarrow T_{\mathcal{M}^\zeta(N,K)} \longrightarrow 0.
\end{align}
Here $T_{\mathcal{M}^\zeta(N,K)}$ is the tangent sheaf of $\mathcal{M}^\zeta(N,K)$. From this short exact sequence we get
\begin{align*}
    \mathcal K_{\mathcal{M}^\zeta(N,K)}=\det T_{\mathcal{M}^\zeta(N,K)}^*\cong \det (W\otimes \mathcal V^*)\otimes \det(W^*\otimes \mathcal V)\cong \mathcal O_{\mathcal{M}^\zeta(N,K)}.
\end{align*}
\end{proof}

\subsection{A Poisson Structure on \texorpdfstring{$\mcal{M}(N,K)$}{}}\label{sec:a natural Poisson structure}

Let us introduce a Poisson structure on the space of $(B,J,I)$ as following
\begin{eqgathered}
    \label{Poisson before quotient}
\{J_{ia},I_{bj}\}=\delta_{ab}\delta_{ij},\qquad \{B_{mn},I_{bj}\}=\{B_{mn},J_{ia}\}=0,
\\
\{B_{mn},B_{pq}\}=\delta_{pn}B_{mq}-\delta_{mq}B_{pn}.
\end{eqgathered}
Here we treat $J,I$ as usual bosonic variables, i.e. commute instead of anti-commute with each other. This Poisson structure comes from the classical limit of $U_{\hbar}(\mathfrak{gl}_N)\otimes \mathrm{Weyl}^{\otimes NK}_{\hbar}$, where $\mathrm{Weyl}^{\otimes NK}_{\hbar}$ is the Weyl algebra generated by $J,I$. It is easy to see that the Poisson structure is equivariant under the $\GL_N$ action, so it descends to $\mathcal M(N,K)$.

\begin{remark}
This is \textit{not} the Poisson structure for the Zastava space. In fact, when $K=1$, this Poisson structure on $\mathbb C[\mathcal M(N,1)]$ is trivial, see the Theorem \ref{prop: Poisson structure} below.
\end{remark}

Define $T^{(n)}_{ab}=I_aB^nJ_b$, and we use the convention $T^{(-1)}_{ab}=\delta_{ab}$, then denote by $T_{ab}(z)$ the power series expanded at $z\to \infty$:
\begin{align*}
    T_{ab}(z)=\sum_{n\ge -1} T^{(n)}_{ab}z^{-n-1}=\delta_{ab}+I_a\frac{1}{z-B}J_b.
\end{align*}

\begin{proposition}\label{prop: Poisson structure}
The Poisson brackets between $T_{ab}^{(k)}$ are:
\begin{align}\label{eqn: Poisson structure of RTT}
    \{T^{(p)}_{ab},T^{(q)}_{cd}\}=\sum_{i=-1}^{\min(p,q)-1}\left(T^{(p+q-1-i)}_{cb}T^{(i)}_{ad}-T^{(i)}_{cb}T^{(p+q-1-i)}_{ad}\right).
\end{align}
And for all $n\ge 1$, $\mathrm{Tr}(B^n)$ is Poisson central.
\end{proposition}

\begin{proof}
This is the classical limit of \eqref{eqn: commutators}, which will be proven independently.
\end{proof}

\begin{remark}[The Poisson Structure from Twisted Holography]
    The algebra of gauge-invariant operators of the systems consisting of 2d BF theory coupled to 1d quantum mechanics can be constructed using Feynman diagram computations \cite{IshtiaqueMoosavianZhou201809}. From this, one can construct a Poisson structure on $\C[\mcal{M}(N,K)]$. Hence, a natural question is the connection between this Poisson structure and the one in \eqref{Poisson before quotient}. It can be shown that (a bosonic version of) the former is the same as the latter. The details of this has been expounded in Appendix \ref{sec:Relationship between two Poisson structures}.\qed 
\end{remark}

\subsection{Multiplication Morphism} 
Apart from the obvious factorization map \eqref{eqn: factorization}, there is another map
\begin{align}\label{eqn: multiplication}
    \mathfrak{m}_{N_1,N_2}:\mathcal M(N_1,K)\times \mathcal M(N_2,K)&\longrightarrow \mathcal M(N_1+N_2,K),\\
    (B^{(1)},J^{(1)},I^{(1)})\times (B^{(2)},J^{(2)},I^{(2)})&\mapsto \left(
    \begin{bmatrix}
    B^{(1)} & J^{(1)} I^{(2)}\\
    0 & B^{(2)}
    \end{bmatrix},
    \begin{bmatrix}
    J^{(1)}\\
    J^{(2)}
    \end{bmatrix},
    \begin{bmatrix}
    I^{(1)} & I^{(2)}
    \end{bmatrix}
    \right).
\end{align}
We have the following elementary property of the multiplication morphism.
\begin{proposition}\label{prop: multiplication is dominant}
The multiplication morphism $\mathfrak{m}_{N_1,N_2}$ is dominant.
\end{proposition}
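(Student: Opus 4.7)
The plan is to exploit the dimension equality $\dim \mathcal M(N_1, K)\times \mathcal M(N_2,K) = \dim \mathcal M(N_1+N_2, K) = 2(N_1+N_2)K$ coming from the corollary above: since source and target are irreducible affine varieties of the same dimension, $\mathfrak{m}_{N_1,N_2}$ is dominant if and only if its image has full dimension, equivalently if and only if some fiber (hence the generic fiber) is zero-dimensional. So it suffices to exhibit a single point in the source at which the differential of $\mathfrak{m}_{N_1,N_2}$ is surjective.

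As a first step, I would compute the effect of $\mathfrak{m}_{N_1,N_2}$ on the characteristic-polynomial projection $\Phi_{N_1+N_2}$. A Schur-complement calculation on
\[
zI-M=\begin{pmatrix} zI-B^{(1)} & -\psi^{(1)}\overline{\psi}^{(2)} \\ \psi^{(2)}\overline{\psi}^{(1)} & zI-B^{(2)}\end{pmatrix}
\]
yields
\[
\det(zI-M)=\det(zI-B^{(1)})\det(zI-B^{(2)})\,\det\!\bigl(I_K+\overline{\psi}^{(1)}(z-B^{(1)})^{-1}\psi^{(1)}\cdot\overline{\psi}^{(2)}(z-B^{(2)})^{-1}\psi^{(2)}\bigr),
\]
which is a polynomial of degree $N_1+N_2$ in $z$ that is \emph{not} the product of the two characteristic polynomials. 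By inspection this composite map $\Phi_{N_1+N_2}\circ\mathfrak{m}_{N_1,N_2}$ is already dominant onto $\mathbb A^{(N_1+N_2)}$: for fixed diagonal $B^{(i)}$ with disjoint spectra, varying $\psi,\overline{\psi}$ already moves the coefficients of $\det(zI-M)$ generically. Combined with the equidimensionality of $\Phi_{N_1+N_2}$, this reduces dominance of $\mathfrak{m}_{N_1,N_2}$ to showing that over a generic fiber of $\Phi_{N_1+N_2}$ the induced map still has full-dimensional image.

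Second, I would carry out the differential check at a convenient test point, say with $B^{(1)},B^{(2)}$ diagonal with pairwise distinct entries and $\psi^{(i)},\overline{\psi}^{(i)}$ generic. Pre-quotient, the image of $\mathfrak{m}_{N_1,N_2}$ inside $\mathrm{Rep}(N_1+N_2,K)$ lies in the subvariety where the two off-diagonal $N_1\times N_2$ and $N_2\times N_1$ blocks of $B$ have rank at most $K$, which has codimension precisely $2N_1N_2$ in $\mathrm{Rep}(N_1+N_2,K)$. This matches the dimension of the homogeneous space $\GL_{N_1+N_2}/(\GL_{N_1}\times\GL_{N_2})$, so what must be shown is that the tangent direction of the $\GL_{N_1+N_2}$-action complements the tangent space of the pre-quotient image transversally at our test point. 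Concretely, writing $g=I+\epsilon X$ with $X=\bigl(\begin{smallmatrix}0&X_{12}\\X_{21}&0\end{smallmatrix}\bigr)$, the infinitesimal action on the block structure of $B$ is essentially the Sylvester map $X_{12}\mapsto B^{(1)}X_{12}-X_{12}B^{(2)}$ (plus analogous terms), which is invertible since the spectra of $B^{(1)},B^{(2)}$ are disjoint. This invertibility is exactly what fills in the $2N_1N_2$ missing directions, and proves surjectivity of the differential.

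The main obstacle, and the content of the argument, is the final linear-algebra verification: making sure that the image of the Sylvester-type map together with the image of $d\mathfrak{m}_{N_1,N_2}^{\text{pre}}$ really does exhaust the tangent space of $\mathrm{Rep}(N_1+N_2,K)$ at our test point. The disjointness of spectra handles the $B$-block directions via the invertibility of Sylvester, but one must also check that the $\psi,\overline{\psi}$ deformations are hit: generically this is automatic since $d\mathfrak{m}_{N_1,N_2}^{\text{pre}}$ acts as the identity on the $\psi^{(i)},\overline{\psi}^{(i)}$ directions separately. Together this gives surjectivity of the differential at a generic point, hence zero-dimensional generic fiber, hence dominance of $\mathfrak{m}_{N_1,N_2}$.
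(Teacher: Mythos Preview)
Your argument is essentially correct and shares with the paper the key ingredient: the Sylvester map $X\mapsto XB^{(2)}-B^{(1)}X$ is invertible when the spectra are disjoint. However, the paper organizes things more efficiently. Rather than working with the original $\mathfrak{m}_{N_1,N_2}$ and arguing that the image of the pre-quotient differential is complemented transversally by the $\GL_{N_1+N_2}/(\GL_{N_1}\times\GL_{N_2})$ directions, the paper first conjugates by the automorphism $\tau:(B,\psi,\overline\psi)\mapsto(B+\psi\overline\psi,\psi,\overline\psi)$. This makes the image matrix upper-triangular, and then the Sylvester equation is used not to complement the image but to \emph{block-diagonalize} it explicitly, producing a self-map $\widetilde{\mathfrak m}_{N_1,N_2}$ of the product pre-quotient over the disjoint locus. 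Checking that the tangent of this self-map is an isomorphism at the simple point $(B^{(1)},0,\overline\psi^{(1)})\times(B^{(2)},\psi^{(2)},0)$ is then immediate. Your approach works but requires a slightly fussier transversality verification (in particular, showing that the two subspaces you name intersect trivially amounts to the Sylvester condition for $\widetilde B^{(1)},\widetilde B^{(2)}$, not $B^{(1)},B^{(2)}$).

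Two small corrections. First, your characteristic-polynomial detour via $\Phi_{N_1+N_2}$ is unnecessary: once you show the differential is surjective at one point, dominance follows directly from irreducibility and the dimension equality. Second, the claim that the rank-$\le K$ locus of the off-diagonal blocks has codimension $2N_1N_2$ is false in general (it is $2(N_1-K)(N_2-K)$ when $K<\min(N_1,N_2)$ and zero otherwise). Fortunately you do not actually use this; the relevant $2N_1N_2$ is the dimension of $\GL_{N_1+N_2}/(\GL_{N_1}\times\GL_{N_2})$, which matches the source-target dimension drop of $\mathfrak m^{\mathrm{pre}}$.
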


\begin{proof}

It suffices to prove that the composition $\mathfrak{f}_{N_1,N_2}^{-1}\circ \mathfrak{m}_{N_1,N_2}$ is dominant when restricted on $\left(\mathbb A^{(N_1)}\times \mathbb A^{(N_2)}\right)_{\mathrm{disj}}$. First of all, we construct a $\GL_{N_1}\times \GL_{N_2}$ equivariant map $$\widetilde {\mathfrak{m}}_{N_1,N_2}: \left(\mathrm{Rep}(N_1,K)\times \mathrm{Rep}(N_2,K)\right)_{\mathrm{disj}}\longrightarrow \left(\mathrm{Rep}(N_1,K)\times \mathrm{Rep}(N_2,K)\right)_{\mathrm{disj}},$$such that $\widetilde {\mathfrak{m}}_{N_1,N_2}$ descends to $\mathfrak{f}_{N_1,N_2}^{-1}\circ \mathfrak{m}_{N_1,N_2}$ after taking the quotient by $\GL_{N_1}\times \GL_{N_2}$. The construction is as follows. If the spectra of $B_1$ and $B_2$ are disjoint from each other, then linear map $\mathrm{Mat}(N_1,N_2)\to\mathrm{Mat}(N_1,N_2), X\mapsto B_1X-XB_2$ is an isomorphism. Let $A$ be the unique $N_1\times N_2$ matrix such that
$$B^{(1)}A-AB^{(2)}=J^{(1)} I^{(2)}\;\text{holds.}$$
Then we can use the matrix
\begin{align*}
    \begin{bmatrix}
    \Id & A\\
    0 & \Id
    \end{bmatrix}\; \text{to diagonalize}\; 
    \begin{bmatrix}
    B^{(1)} & J^{(1)} I^{(2)}\\
    0 & B^{(2)}
    \end{bmatrix}
\end{align*}
and it accordingly maps $\left[ I^{(1)} ,I^{(2)} \right]$ to $\left[ I^{(1)},I^{(2)} -I^{(1)}A  \right]$ and $\left[J^{(1)},J^{(2)} \right]^{\mathrm{t}}$ to $\left[ J^{(1)}+AJ^{(2)} ,J^{(2)}\right]^{\mathrm{t}}$. Hence we define $\widetilde {\mathfrak{m}}_{N_1,N_2}$ as
\begin{align}
    (B^{(1)},J^{(1)},I^{(1)})\times (B^{(2)},J^{(2)},I^{(2)})\mapsto (B^{(1)},J^{(1)}+AJ^{(2)},I^{(1)})\times (B^{(2)},J^{(2)},I^{(2)}-I^{(1)}A).
\end{align}
Notice that the tangent map $d\widetilde {\mathfrak{m}}_{N_1,N_2}$ is an isomorphism at any point $(B^{(1)},0,I^{(1)})\times (B^{(2)},J^{(2)},0)$, so $\widetilde {\mathfrak{m}}_{N_1,N_2}$ is generically \' etale thus it is dominant. Then it follows that $\mathfrak{m}_{N_1,N_2}$ is dominant.
\end{proof}

\begin{proposition}\label{prop: multiplication is Poisson}
The multiplication morphism $\mathfrak{m}_{N_1,N_2}$ has following properties
\begin{itemize}
    \item[(1)] $\mathfrak{m}_{N_1,N_2}$ is Poisson,
    \item[(2)] $\mathfrak{m}_{N_1+N_2,N_3}\circ(\mathfrak{m}_{N_1,N_2}\times \mathrm{Id})=\mathfrak{m}_{N_1,N_2+N_3}\circ(\mathrm{Id}\times \mathfrak{m}_{N_2,N_3})$, i.e. multiplication is associative.
\end{itemize}
\end{proposition}
The proposition will be evident once we make connection to the multiplication map on the loop group in the next section. Note that the factorization map $\mathfrak{f}_{N_1,N_2}$ is not Poisson in general.

\subsection{Embedding \texorpdfstring{$\mathcal M(N,K)\hookrightarrow \mathcal M(N',K)$}{M(N,K) in M(N',K)}} Suppose that $N<N'$, then we have a morphism
\begin{align}\label{eqn: embedding}
    \iota_{N,N'}:\mathcal M(N,K)&\longrightarrow \mathcal M(N',K),\\
    (B,J,I)&\mapsto \left(
    \begin{bmatrix}
    B & 0\\
    0 & 0
    \end{bmatrix},
    \begin{bmatrix}
    J\\
    0
    \end{bmatrix},
    \begin{bmatrix}
    I & 0
    \end{bmatrix}
    \right).
\end{align}
Note that $\iota_{N,N'}^*(\mathrm{Tr}(B^n))=\mathrm{Tr}( B^n),\iota_{N,N'}^*(T_{ab}^{(m)})=T_{ab}^{(m)}$, so $\iota_{N,N'}^*$ is surjective, thus $\iota_{N,N'}$ is a closed embedding.

\begin{proposition}\label{prop: properties of embedding}
The embedding $\iota_{N,N'}$ has following properties
\begin{itemize}
    \item[(1)] $\iota_{N',N''}\circ\iota_{N,N'}=\iota_{N,N''}$,
    \item[(2)] $\iota_{N,N'}$ is Poisson,
    \item[(3)] $\mathfrak{m}_{N_1',N_2'}\circ\left(\iota_{N_1,N'_1}\times \iota_{N_2,N'_2}\right)=\iota_{N_1+N_2,N_1'+N_2'}\circ \mathfrak{m}_{N_1,N_2}$.
\end{itemize}
\end{proposition}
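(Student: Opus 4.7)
My plan is to handle the three properties in turn, with property (3) as the main source of bookkeeping but not conceptual difficulty.

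For (1), I would simply verify the identity on a chosen set of representatives $(B,\psi,\overline{\psi})$ of a point in $\mathcal M(N,K)$. Composing $\iota_{N,N'}$ and then $\iota_{N',N''}$ produces a block matrix of size $N''$ whose upper-left $N\times N$ block is $B$ and all other entries are zero, and vectors $\psi,\overline{\psi}$ padded with zeros to dimensions $N''$ and $K\times N''$, respectively. This matches $\iota_{N,N''}(B,\psi,\overline{\psi})$ on the nose, even before passing to the $\GL_{N''}$-quotient.

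For (2), I would use the observation already made just before the statement that $\iota_{N,N'}^*$ sends the generators $t_{ab}^{(m)}$ and $\mathrm{Tr}(\widetilde B^n)$ of $\mathbb C[\mathcal M(N',K)]$ to the generators with the same names in $\mathbb C[\mathcal M(N,K)]$. Combined with Theorem~\ref{Theorem: Poisson structure}, the Poisson brackets among these generators are given by formulas (namely \eqref{eqn: Poisson structure of RTT} and the identity $\{t^{(p)}_{ab},\mathrm{Tr}(\widetilde B^n)\}=0$) that are entirely universal in $N$. Hence for any two generators $f,g$ one has $\iota_{N,N'}^*\{f,g\}=\{\iota_{N,N'}^*f,\iota_{N,N'}^*g\}$, and since being Poisson on generators implies being Poisson globally, this finishes (2).

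For (3), the strategy is to write out each composition on representatives and exhibit an explicit element of $\GL_{N_1'+N_2'}$ that conjugates one representative into the other. Applied to a pair $(B^{(i)},\psi^{(i)},\overline{\psi}^{(i)})\in\mathcal M(N_i,K)$, the left hand side $\mathfrak{m}_{N_1',N_2'}\circ(\iota_{N_1,N_1'}\times\iota_{N_2,N_2'})$ yields a matrix, vector, and covector with natural block decomposition $N_1+(N_1'-N_1)+N_2+(N_2'-N_2)$, where the only nonzero entries sit in the first and third blocks. The right hand side $\iota_{N_1+N_2,N_1'+N_2'}\circ\mathfrak{m}_{N_1,N_2}$ yields the analogous data with block decomposition $N_1+N_2+(N_1'-N_1)+(N_2'-N_2)$. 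Conjugation by the permutation matrix interchanging the second and third blocks carries one to the other, and the corresponding permutations of $\psi$ and $\overline{\psi}$ match as well. Hence both sides define the same point in $\mathcal M(N_1'+N_2',K)$.

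The only real obstacle is being careful with the permutation matrices and the block sizes in (3); conceptually all three parts reduce to unwinding the definitions of $\iota_{N,N'}$ and $\mathfrak{m}_{N_1,N_2}$ at the level of raw representations before quotienting.
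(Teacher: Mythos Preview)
Your proposal is correct and follows essentially the same approach as the paper for parts (1) and (2): the paper also says (1) is obvious from the definition and derives (2) as a corollary of Theorem~\ref{Theorem: Poisson structure}, exactly as you do.

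For part (3) there is a minor tactical difference. The paper first uses property (1) to reduce to the elementary cases $N_1'=N_1,\; N_2'=N_2+1$ (which is immediate) and $N_1'=N_1+1,\; N_2'=N_2$, and then in the latter case exhibits a permutation matrix (written as a product of adjacent transpositions) conjugating one representative into the other. You instead handle the general $(N_1',N_2')$ in one stroke by the block permutation that swaps the $(N_1'-N_1)$-block with the $N_2$-block. Both arguments rest on the same idea---find an explicit element of $\GL_{N_1'+N_2'}$ identifying the two representatives---so the difference is purely organizational; your direct approach avoids the inductive reduction at the cost of keeping track of four block sizes at once.
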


\begin{proof}
Property (1) is obvious from definition of $\iota_{N,N'}$, (2) is a corollary of Proposition \ref{prop: Poisson structure}, only (3) needs explanation. Using property (1), the proof of (3) reduces to the cases of either $N_1'=N_1, N_2'=N_2+1$ or $N_1'=N_1+1, N_2'=N_2$. The first case is obvious from the definition of embedding and multiplication morphism, so we only need to consider the case when $N_1'=N_1+1, N_2'=N_2$. It amounts to showing that
\begin{align*}
    \left(
    \begin{bmatrix}
    B^{(1)} & 0 & J^{(1)} I^{(2)} \\
    0 & 0 & 0 \\
   0 & 0 & B^{(2)} 
    \end{bmatrix},
    \begin{bmatrix}
    J^{(1)}\\
    0\\
    J^{(2)}
    \end{bmatrix},
    \begin{bmatrix}
    I^{(1)} & 0 & I^{(2)}
    \end{bmatrix}\right)
\end{align*}
is equivalent to 
\begin{align*}
    \left(
    \begin{bmatrix}
    B^{(1)} & J^{(1)} I^{(2)} & 0 \\
    0 & B^{(2)}  & 0 \\
    0 & 0 & 0 
    \end{bmatrix},
    \begin{bmatrix}
    J^{(1)}\\
    J^{(2)}\\
    0
    \end{bmatrix},
    \begin{bmatrix}
    I^{(1)} & I^{(2)} & 0
    \end{bmatrix}
    \right)
\end{align*}
under the action of some matrix $W\in\GL_{N_1+N_2+1}$. It is elementary to check that
\begin{align*}
    W=\begin{bmatrix}
    \mathrm{Id}_{N_1} & 0  \\
    0 & w_{N_2}w_{N_2-1}\cdots w_1  
    \end{bmatrix}
\end{align*}
does the job, where $w_i\in \GL_{N_2+1}$ switches row $i$ and row $i+1$.
\end{proof}

\subsection{\texorpdfstring{$\cM(N,K)$}{M(N,K)} as a Hamiltonian Reduction}\label{subsec:Hamiltonian reduction}

The phase space $\cM(N,K)$ can be equivalently described by a Hamiltonian reduction. Namely, we consider an enlarged space of representations 
\begin{align}
    \widetilde{\mathrm{Rep}}(N,K):=\mathrm{Rep}(N,K)\oplus \End(\bC^N),
\end{align}
where the $N\times N$ matrices in the newly-added component $\End(\bC^N)$ will be denoted $\widetilde{B}$. $\End(\bC^N)$ is endowed with the Poisson structure $$\{\widetilde B_{mn},\widetilde B_{pq}\}=\delta_{pn}\widetilde B_{mq}-\delta_{mq}\widetilde B_{pn}.$$
The action of gauge group $\GL_N$ on $\widetilde{\mathrm{Rep}}(N,K)$ is Hamiltonian with the moment map
\begin{align}
    \mu:\widetilde{\mathrm{Rep}}(N,K)\to \fgl_N^*,\quad \mu(B,J,I,\widetilde B)=B+\widetilde{B}-JI.
\end{align}
We note that the equation $\mu=0$ solves $\widetilde{B}$ in terms of $(B,J,I)$, so $\mathrm{Rep}(N,K)\cong \mu^{-1}(0)$. Therefore, $\cM(N,K)$ is isomorphic to the Hamiltonian reduction $\mu^{-1}(0)\sslash \GL_N$.

\begin{remark}
Recall that $B$ is $B_{w>0}$ in the solution to EOM of BF theory, see \eqref{moment map}. Then $\widetilde{B}$ has a natural identification with $-B_{w<0}$ in \eqref{moment map}.
\end{remark}

\section{Large-\texorpdfstring{$N$}{} Limit}
\label{sec:large-N limit}

In this section we use the embeddings $\iota_{N,N'}: \cM(N,K)\hookrightarrow\cM(N',K)$ constructed in the previous section to define the large-$N$ limit of the family $\cM(N,K)$ as the spectrum of $\mathbb C^{\times}$-finite elements in the inverse limit of algebras $\mathbb C[\cM(N,K)]$, and show that the large-$N$ limit is isomorphic to the Poisson group $L^-(\GL_K\times \GL_1)$, defined below. It is known that $L^-(\GL_K\times \GL_1)$ quantizes to the Yangian $Y_\hbar(\fgl_K)\otimes \Lambda$, and we will explore the quantized version of the large-$N$ limit in the next section.

\begin{definition}\label{defn: large N limit of M(N,K)}
Define $\mathbb C[\mathcal{M}(\infty,K)]$ to be the subalgebra of $\underset{\substack{\longleftarrow\\N}}{\lim}\: \mathbb C[\mathcal M(N,K)]$ generated by $\{T^{(n)}_{ab},\Tr(B^m):n\in \bZ_{\ge 0}, m\in \bZ_{\ge 1},1\le a,b\le K\}$. Then define $\mathcal{M}(\infty,K)=\Spec \mathbb C[\mathcal{M}(\infty,K)]$.
\end{definition}

Define $L^{-}\GL_K$ to be the group of power series
\begin{align}\label{eqn: loop group}
    \Id+\sum_{i=1}^{\infty}g_i z^{-i},\; g_i\in \mathfrak{gl}_{K}.
\end{align}
Here the group structure on $L^{-}\GL_K$ is the multiplication of power series in matrices. $L^{-}\GL_K$ is endowed with scheme structure of an infinite dimensional affine space. Consider the morphism
\begin{align}
    i_N=(\pi_N,\varphi_N): &\mathcal M(N,K)\rightarrow L^{-}\GL_K\times L^{-}\GL_1,\\
    (B,J,I)&\mapsto \left(\Id+I\frac{1}{z-{B}}J,\:\frac{1}{z^N}\det(z-B)\right),
\end{align}
which is a closed embedding because $T^{(n)}_{ab}$ and $\Tr(B^m)$ generate $\mathbb C[\mathcal{M}(N,K)]$. Here $(z-{B})^{-1}$ is expanded as a power series of matrices in $z^{-1}$. It is known that $L^{-}\GL_K$ is a Poisson-Lie group scheme whose Poisson structure comes from the Manin triple \cite[\S 2.3]{kamnitzer2014yangians} $$(\mathfrak{gl}_K(\!(z^{-1})\!),z^{-1}\mathfrak{gl}_K[\![z^{-1}]\!],\mathfrak{gl}_K[z]).$$ Explicitly, let $T_{ab}^{(n)}, n\ge -1$ be the function on $L^{-}\GL_K$ that takes the value of $ab$ component of $g_{n+1}$ and we use the convention that $T_{ab}^{(-1)}=\delta_{ab}$, then the Poisson structure on $L^{-}\GL_K$ is determined by the equation
\begin{align}\label{eqn: Poisson structure of loop group}
    (u-v)\{T_{ab}(u),T_{cd}(v)\}=T_{ad}(v)T_{cb}(u)-T_{ad}(u)T_{cb}(v),\; \text{where}\; T_{ab}(u)=\sum_{i=-1}^{\infty}T^{(i)}_{ab}u^{-i-1}.
\end{align}
Compare equation \eqref{eqn: Poisson structure of loop group} with equation \eqref{eqn: Poisson structure of RTT}, and we have
\begin{proposition}\label{prop: projection is Poisson}
The morphism $i_N: \mathcal M(N,K)\rightarrow L^{-}\GL_K\times L^{-}\GL_1$ is Poisson.
\end{proposition}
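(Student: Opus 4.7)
The plan is to reduce the statement to a matching of generating functions. First I observe that $\pi_N$ is well-defined as a morphism $\mathcal M(N,K)\to L^-\GL_K$ since the matrix entries of $1+\overline{\psi}(z-\widetilde B/2)^{-1}\psi$ are $\GL_N$-invariant and lie in $\mathbb C[\mathcal M(N,K)]$; hence the pullback $\pi_N^*:\mathbb C[L^-\GL_K]\to \mathbb C[\mathcal M(N,K)]$ is a well-defined algebra homomorphism. The ring $\mathbb C[L^-\GL_K]$ is generated (as a commutative algebra) by the coordinate functions $T_{ab}^{(n)}$ for $n\ge 0$, so by the derivation property of Poisson brackets, it suffices to verify the Poisson-morphism condition $\pi_N^*\{f,g\}=\{\pi_N^*f,\pi_N^*g\}$ on pairs of such generators.

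Next I use the explicit expansion $\pi_N(B,\psi,\overline{\psi})=1+\sum_{i\ge 1}\frac{1}{2^{i-1}}\overline{\psi}\widetilde B^{i-1}\psi\, z^{-i}$, which gives $\pi_N^*(T_{ab}^{(n)})=\frac{1}{2^n}\overline{\psi}_a\widetilde B^n\psi_b=t_{ab}^{(n)}$ for $n\ge 0$, together with the convention $\pi_N^*(T_{ab}^{(-1)})=\delta_{ab}=t_{ab}^{(-1)}$. Packaging these into generating functions, $\pi_N^*$ sends $T_{ab}(u)$ to $t_{ab}(u)$ coefficient-by-coefficient.

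The key step is then to compare the defining equations \eqref{eqn: Poisson structure of loop group} and \eqref{eqn: Poisson structure of RTT 2} directly. Applying $\pi_N^*$ to \eqref{eqn: Poisson structure of loop group} and using that $\pi_N^*$ is an algebra homomorphism (extended to formal power series in $u,v$) yields
\begin{equation*}
(u-v)\,\pi_N^*\{T_{ab}(u),T_{cd}(v)\} = t_{ad}(v)t_{cb}(u)-t_{ad}(u)t_{cb}(v).
\end{equation*}
On the other hand, Remark \ref{Remark: equivalence of Poisson} together with Theorem \ref{Theorem: Poisson structure} gives
\begin{equation*}
(u-v)\{t_{ab}(u),t_{cd}(v)\} = t_{ad}(v)t_{cb}(u)-t_{ad}(u)t_{cb}(v).
\end{equation*}
Since $u-v$ is a non–zero-divisor in the relevant ring of formal series, we can cancel it and conclude $\pi_N^*\{T_{ab}(u),T_{cd}(v)\}=\{\pi_N^*T_{ab}(u),\pi_N^*T_{cd}(v)\}$, which by matching coefficients of $u^{-m-1}v^{-n-1}$ gives the identity $\pi_N^*\{T_{ab}^{(m)},T_{cd}^{(n)}\}=\{t_{ab}^{(m)},t_{cd}^{(n)}\}$ for all $m,n\ge 0$.

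There is no real obstacle here; the one point that needs a brief justification is the legitimacy of applying $\pi_N^*$ to a formal identity in $u,v$ and then extracting coefficients, but this is automatic because both sides of \eqref{eqn: Poisson structure of loop group} and \eqref{eqn: Poisson structure of RTT 2} are finite sums at each fixed bidegree in $u^{-1},v^{-1}$. Spreading the verification from the generators to all of $\mathbb C[L^-\GL_K]$ is then purely formal via the Leibniz rule, completing the proof that $\pi_N$ is Poisson.
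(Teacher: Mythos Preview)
Your proof is correct and follows essentially the same approach as the paper: both observe that $\pi_N^*$ sends $T_{ab}(u)$ to $t_{ab}(u)$ and then compare \eqref{eqn: Poisson structure of loop group} with \eqref{eqn: Poisson structure of RTT 2}. Your version supplies more detail (reducing to generators via the Leibniz rule, and justifying the cancellation of $u-v$), whereas the paper simply states the comparison; note that you could also bypass the $(u-v)$ cancellation entirely by invoking the explicit coefficient form \eqref{eqn: Poisson structure of RTT} on both sides.
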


\begin{proposition}\label{prop: projection compatible with multiplication}
$i_N$ is compatible with embedding $\iota_{N,N'}$ and multiplication $\mathfrak{m}_{N_1,N_2}$, i.e.
\begin{itemize}
    \item[(1)] $i_{N'}\circ \iota_{N,N'}=i_N$,
    \item[(2)] $i_{N_1+N_2}\circ\mathfrak{m}_{N_1,N_2}= \mathfrak{m}\circ(i_{N_1}\times i_{N_2})$.
\end{itemize}
Here $\mathfrak{m}:L^{-}(\GL_K\times\GL_1)\times L^{-}(\GL_K\times\GL_1)\to L^{-}(\GL_K\times\GL_1)$ is the multiplication map of the group $L^{-}(\GL_K\times\GL_1)$.
\end{proposition}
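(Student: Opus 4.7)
Both assertions are proven by direct matrix computation using the definitions of $\widetilde B$, $\iota_{N,N'}$, $\mathfrak m_{N_1,N_2}$, and $\pi_N$. The plan is: first determine the block structure of $\widetilde B=B+\psi\overline\psi$ after each operation, then expand the resolvent $(z-\widetilde B/2)^{-1}$ in block form. Everything reduces to the observation that $\pi_N$ depends on $(B,\psi,\overline\psi)$ only through $\widetilde B$ and through the sandwich $\overline\psi(z-\widetilde B/2)^{-1}\psi$, so once the block form is clear the identities become essentially algebraic.

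For (1), the first step is to note that padding $(B,\psi,\overline\psi)$ by zeros makes every off-diagonal and lower-right block of $\psi'\overline\psi'$ vanish, so that under $\iota_{N,N'}$ one has $\widetilde B' = \mathrm{diag}(\widetilde B,0)$. Consequently $(z-\widetilde B'/2)^{-1}$ splits as a direct sum with blocks $(z-\widetilde B/2)^{-1}$ and $z^{-1}\mathrm{Id}$, and the zero entries of the padded $\psi',\overline\psi'$ annihilate the spurious $z^{-1}\mathrm{Id}$ contribution. What remains is exactly $\pi_N$.

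For (2), the crucial step is a cancellation in the lower-left block of $B+\psi\overline\psi$: the built-in term $-\psi^{(2)}\overline\psi^{(1)}$ of $\mathfrak m_{N_1,N_2}$ cancels the $\psi^{(2)}\overline\psi^{(1)}$ arising from $\psi\overline\psi$, while the upper-right entries add to $2\psi^{(1)}\overline\psi^{(2)}$. Hence $\widetilde B$ becomes block upper triangular with diagonal blocks $\widetilde B^{(1)},\widetilde B^{(2)}$. Inverting $z-\widetilde B/2$ by the standard block-triangular formula and sandwiching with $\overline\psi=[\overline\psi^{(1)},\overline\psi^{(2)}]$ and $\psi=[\psi^{(1)};\psi^{(2)}]^{\mathrm t}$ produces the three terms
\begin{align*}
\overline\psi^{(1)}(z-\widetilde B^{(1)}/2)^{-1}\psi^{(1)} + \overline\psi^{(2)}(z-\widetilde B^{(2)}/2)^{-1}\psi^{(2)} + \overline\psi^{(1)}(z-\widetilde B^{(1)}/2)^{-1}\psi^{(1)}\overline\psi^{(2)}(z-\widetilde B^{(2)}/2)^{-1}\psi^{(2)}.
\end{align*}
Setting $\alpha=\overline\psi^{(1)}(z-\widetilde B^{(1)}/2)^{-1}\psi^{(1)}$ and $\beta=\overline\psi^{(2)}(z-\widetilde B^{(2)}/2)^{-1}\psi^{(2)}$, adding $1$ rewrites this as $(1+\alpha)(1+\beta)$, which is exactly the product $\pi_{N_1}\cdot\pi_{N_2}$ computed in $L^-\GL_K$.

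There is no real obstacle here; the only point that is not purely formal is the asymmetric sign $-\psi^{(2)}\overline\psi^{(1)}$ appearing in the definition of $\mathfrak m_{N_1,N_2}$, which is precisely engineered so that $\widetilde B$ is block upper triangular and the resolvent factorizes multiplicatively. In this sense the present proposition is the a posteriori justification for that sign convention.
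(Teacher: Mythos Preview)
Your proposal is correct and follows essentially the same approach as the paper. The paper likewise observes that $\widetilde B$ becomes block upper triangular with diagonal blocks $\widetilde B^{(1)},\widetilde B^{(2)}$ and upper-right block $2\psi^{(1)}\overline\psi^{(2)}$, then expands the resolvent (there as a power series in $z^{-1}$, you via the block-triangular inversion formula) to obtain $1+\alpha+\beta+\alpha\beta=(1+\alpha)(1+\beta)$; for part (1) the paper simply says ``obvious from definition,'' which your argument makes explicit.
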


\begin{proof}
(1) is obvious from definition. (2) can be shown by direct computation. If $(B^{(1)},J^{(1)},I^{(1)})$ is a point in $ \mathcal M(N_1,K)$ and $ (B^{(2)},J^{(2)},I^{(2)})$ is a point in $ \mathcal M(N_2,K)$, then $\pi_{N_1+N_2}\circ\mathfrak{m}_{N_1,N_2}$ maps this pair of representations to
\begin{align*}
    &\Id+\begin{bmatrix}
    I^{(1)} & I^{(2)}
    \end{bmatrix}
    \left(z-
    \begin{bmatrix}
    B^{(1)} & J^{(1)} I^{(2)}\\
    0 & B^{(2)}
    \end{bmatrix}
    \right)^{-1}
    \begin{bmatrix}
    J^{(1)}\\
    J^{(2)}
    \end{bmatrix}\\
    &=\Id+\begin{bmatrix}
    I^{(1)} & I^{(2)}
    \end{bmatrix}
    \left(z-
    \begin{bmatrix}
    {B}^{(1)} & J^{(1)} I^{(2)}\\
    0 & {B}^{(2)}
    \end{bmatrix}
    \right)^{-1}
    \begin{bmatrix}
    J^{(1)}\\
    J^{(2)}
    \end{bmatrix}\\
    &=\Id+I^{(1)}\frac{1}{z-{B}^{(1)}}J^{(1)}+I^{(2)}\frac{1}{z-{B}^{(2)}}J^{(2)}\\
    &~+\sum_{i,j=0}^{\infty}I^{(1)}\left({B}^{(1)}\right)^iJ^{(1)}I^{(2)}\left({B}^{(2)}\right)^jJ^{(2)}z^{-i-j-2}\\
    &=\left(\Id+I^{(1)}\frac{1}{z-{B}^{(1)}}J^{(1)}\right)\left(\Id+I^{(2)}\frac{1}{z-{B}^{(2)}}J^{(2)}\right).
\end{align*}
And we also have $\varphi_{N_1+N_2}\circ\mathfrak{m}_{N_1,N_2}= \mathfrak{m}\circ(\varphi_{N_1}\times \varphi_{N_2})$ since determinant of a block diagonal matrix is the product of determinants of each block.
\end{proof}

\begin{proof}[Proof of Proposition \ref{prop: multiplication is Poisson}]
(1) follows from Proposition \ref{prop: projection is Poisson} and the fact that the Poisson structure on $L^{-}(\GL_K\times \GL_1)$ makes it a Poisson-Lie group, i.e. $\mathfrak m$ is Poisson. (2) is a direct consequence of Proposition \ref{prop: projection compatible with multiplication}.
\end{proof}

Since $i_N$ is compatible with $\iota_{N,N'}$ and the generators of $\bC[L^{-}\GL_K\times L^{-}\GL_1]$ are mapped to polynomials in $\{T^{(n)}_{ab},\Tr(B^m):n,m\in \bZ_{\ge 0},1\le a,b\le K\}$, we obtain a morphism
\begin{align}\label{eq: i_infty}
    i_\infty: \cM(\infty,K)\longrightarrow L^{-}\GL_K\times L^{-}\GL_1.
\end{align}

\begin{theorem}\label{theorem: density 2}
$\bigcup_N \mathcal{M}(N,K)$ is Zariski-dense in $L^-\GL_K\times L^-\GL_1$.
\end{theorem}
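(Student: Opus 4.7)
My plan is to mirror the proof of Theorem \ref{theorem: density}, exploiting the group structure on $L^-\GL_K \times L^-\GL_1$ and the fact---coming from Proposition \ref{prop: projection compatible with multiplication} together with the analogous compatibility for $\varphi_N$ stated in the lemma just above---that $i_N$ is compatible with multiplication, so that if $Y_N\subset L^-\GL_K \times L^-\GL_1$ denotes the scheme-theoretic image of $i_N$, then $Y_{N_1} \cdot Y_{N_2} \subset Y_{N_1+N_2}$. Since $\bigcup_N L^-_N\GL_K$ is Zariski-dense in $L^-\GL_K$ by \cite[Lemma 2.3]{kamnitzer2020hamiltonian} and the analogous statement for $L^-_N\GL_1$ is automatic (it is a union of affine subspaces of an infinite-dimensional affine space), it will suffice to prove by induction on $N$ that there exists $M_N$ with $L^-_N\GL_K \times L^-_N\GL_1 \subset Y_{M_N}$ as a closed subscheme.

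For the base case $N=1$, I will exhibit two complementary loci in $\mathcal{M}(1,K)$ whose images multiply to fill out $L^-_1\GL_K \times L^-_1\GL_1$. On the locus $\widetilde{B} = 0$, i.e.\ $B = -\mathrm{tr}(\overline{\psi}\psi)$, one has $\varphi_1 = 1$ and $\pi_1 = 1 + \overline{\psi}\psi/z$ with $\mathrm{rank}(\overline{\psi}\psi)\le 1$; this is precisely the subvariety $S\times\{1\}$ coming from Lemma \ref{lemma: N=1 case}, and the $K$-fold multiplication argument of that lemma gives $L^-_1\GL_K \times \{1\} \subset Y_K$. On the locus $\psi = 0$, instead $\pi_1 = 1$ while $\varphi_1 = 1 - B/z$ ranges over all of $L^-_1\GL_1$ as $B$ varies over $\mathbb{A}^1$, so $\{1\} \times L^-_1\GL_1 \subset Y_1$. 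Multiplying these two inclusions via $\mathfrak{m}_{K,1}$ will yield $L^-_1\GL_K \times L^-_1\GL_1 \subset Y_{K+1}$.

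For the induction step, assuming $L^-_N\GL_K \times L^-_N\GL_1 \subset Y_{M_N}$, I will multiply by $L^-_1\GL_K \times L^-_1\GL_1 \subset Y_{K+1}$. The key input is that the multiplication $L^-_N\GL_K \times L^-_1\GL_K \to L^-_{N+1}\GL_K$ is dominant---this is exactly the tangent-map computation \eqref{eqn: tangent map} used in the proof of Theorem \ref{theorem: density}---and the analogous statement for the abelian factor $L^-_N\GL_1 \times L^-_1\GL_1 \to L^-_{N+1}\GL_1$ is immediate. A product of two dominant maps between irreducible varieties is dominant, so the image is Zariski-dense in the irreducible affine space $L^-_{N+1}\GL_K \times L^-_{N+1}\GL_1$; since $Y_{M_N+K+1}$ is closed, the containment $L^-_{N+1}\GL_K \times L^-_{N+1}\GL_1 \subset Y_{M_N+K+1}$ follows, and one may take $M_N = N(K+1)$.

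The main subtlety, which I expect to be the only non-routine point, is the scheme-theoretic versus set-theoretic bookkeeping at the end of the induction: the set $L^-_N\GL_K \cdot L^-_1\GL_K$ need not exhaust $L^-_{N+1}\GL_K$, so the claimed containment must be deduced from Zariski-density of the image (via dominance and irreducibility of the target) together with the closedness of $Y_{M_N+K+1}$ inside $L^-\GL_K \times L^-\GL_1$. Everything else is a direct import from Lemma \ref{lemma: N=1 case} and the tangent-map argument in the proof of Theorem \ref{theorem: density}.
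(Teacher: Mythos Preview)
Your proposal is correct and follows essentially the same approach as the paper's proof: both reduce to showing $L^-_N\GL_K\times L^-_N\GL_1\subset \mathcal M(N(K+1),K)$ by induction, using the base case $L^-_1\GL_K\times\{1\}\subset\mathcal M(K,K)$ (from the $\widetilde B=0$ locus and Lemma~\ref{lemma: N=1 case}) together with $\{1\}\times L^-_1\GL_1\subset\mathcal M(1,K)$ (from the $\psi=0$ locus), and the dominance of multiplication for the induction step. The only cosmetic difference is that the paper takes $(\psi,\overline\psi)=(0,0)$ rather than just $\psi=0$ for the second locus, and orders the factors as $L^-_1\times L^-_N$ rather than $L^-_N\times L^-_1$ in the induction; neither affects the argument.
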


\begin{proof}
Define $L^-_N\GL_K$ to be the closed subscheme of $L^-\GL_K$ consisting of Laurent polynomials of the form
\begin{align}
    \Id+\sum_{i=1}^N g_iz^{-i},\; g_i\in \fgl_K.
\end{align}
It is easy to see that $\bigcup_N L^-_N\GL_K\times L^-_N\GL_1$ is Zariski-dense in $L^-\GL_K\times L^-\GL_1$.
To prove the theorem, it suffices to show that for every $N$, there exists $N'$ such that $L^-_N\GL_K\times L^-_N\GL_1$ is a closed subscheme of $\mathcal M(N',K)$. Denote by ${\mathfrak{m}}=\mathfrak{m}_{L^-\GL_K}\times \mathfrak{m}_{L^-_N\GL_1}$ the multiplication map on $L^-\GL_K\times L^-\GL_1$. We make two observations 
\begin{itemize}
    \item[(1)] $L^-_1\GL_K\times \{1\}\subseteq \mathcal M(K,K)$. This is because $S\times \{1\}\subseteq \mathcal M(K,K)$, where $S$ is the subvariety of $L^-_1\GL_K$:
    \begin{align*}
    \Id+\frac{g}{z},\; g\in \mathfrak{gl}_K\;\text{such that}\; \mathrm{rank}(g)\le 1.
    \end{align*}
    then we can apply the multiplication ${\mathfrak{m}}$ $K$ times to obtain $L^-_1\GL_K\times \{1\}$, more precisely, we have following linear algebra fact:
\begin{itemize}
    \item Every matrix $M\in \mathfrak{gl}_K$ can be written as a linear combination $M=X_1+\cdots+X_K$ such that $\mathrm{rank}(X_i)\le 1$ and $X_iX_j=0$ if $i<j$.
\end{itemize}
This can be interpreted as 
\begin{align*}
    \Id+\frac{M}{z}=\left(\Id+\frac{X_1}{z}\right)\cdots\left(\Id+\frac{X_K}{z}\right),
\end{align*}
which is exactly what we want to show. To show this fact, we notice that the statement is true for $M$ if and only if it is true for $AMA^{-1}$ for some $A\in \GL_K$, so without loss of generality, we assume that $M$ is a Jordan block $J_{\lambda}$, and then take $X_i=a_i^{\mathrm{t}}b_i$, where 
\begin{align*}
    a_i&=(0,\cdots,0,\lambda,1,0,\cdots,0),\; i<K\;\text{and}\;i\text{-th component is }\lambda,\\
    a_K&=(0,\cdots,0,\lambda),\\
    b_i&=(0,\cdots,0,1,0,\cdots,0),\; i\text{-th component is }1.
\end{align*}
If $M$ is a direct sum of Jordan blocks, then we take $X_i$ associated to each individual block.
    \item[(2)] $ \{1\}\times L^-_1\GL_1\subseteq \mathcal M(1,K)$, because $\{1\}\times L^-_1\GL_1$ is exactly the image of the subset $\{(B,J,I)=(b,0,0):b\in \bC\}\subset \mathcal M(1,K)$ under the map $i_1$.
    \item[(3)] The multiplication map $\mathfrak{m}_{L^-\GL_K}:L^-_1\GL_K\times L^-_N\GL_K\to L^-_{N+1}\GL_K$ is dominant. In fact, the tangent map $d\mathfrak{m}_{L^-\GL_K}$ at the point $\left(1, 1+1/z+\cdots+1/z^N\right)$ is 
\begin{align}\label{eqn: tangent map}
    \left(\frac{X}{z},\frac{Y_1}{z},\cdots, \frac{Y_i}{z^i},\cdots,\frac{Y_N}{z^N}\right)\mapsto \frac{X+Y_1}{z},\cdots,\frac{X+Y_N}{z^N},\frac{X}{z^{N+1}}
\end{align}
where left hand side is a tangent vector at $\left(1, 1+1/z+\cdots+1/z^N\right)$, and right hand side is a tangent vector at $1+1/z+\cdots+1/z^N\in L^-_{N+1}\GL_K$. Since $X,Y_1,\cdots,Y_N$ take value in all matrices in $\mathfrak{gl}_K$, the linear map \eqref{eqn: tangent map} is surjective and thus is an isomorphism by dimension counting. It follows that $\mathfrak{m}_{L^-\GL_K}:L^-_1\GL_K\times L^-_N\GL_K\to L^-_{N+1}\GL_K$ is \' etale at the point $\left(1, 1+1/z+\cdots+1/z^N\right)$, thus it is generically \' etale and dominant.
\end{itemize}
Combine (1) and (2) and use the multiplication ${\mathfrak{m}}$ (which is compatible with the multiplications of $\mathcal M(N,K)$), then we have an inclusion $L^-_1\GL_K\times L^-_1\GL_1\subset \mathcal M(K+1,K)$. (3) implies that the morphism $${\mathfrak{m}}: \left(L^-_1\GL_K\times L^-_1\GL_1\right)\times \left(L^-_N\GL_K\times L^-_N\GL_1\right) \rightarrow L^-_{N+1}\GL_K\times L^-_{N+1}\GL_1$$ is dominant. By induction on $N$, we have inclusions $L^-_N\GL_K\times L^-_N\GL_1\subseteq \mathcal M((K+1)N,K)$. This concludes the proof.
\end{proof}

\begin{corollary}
$\mathcal{M}(\infty,K)\cong L^-\GL_K\times L^-\GL_1$, i.e. 
\begin{align}
    \mathbb C[\mathcal{M}(\infty,K)]\cong \mathbb C[L^-\GL_K]\otimes \mathbb C[L^-\GL_1].
\end{align}
\end{corollary}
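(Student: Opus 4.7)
The plan is to mimic the argument that derived Theorem~\ref{theorem: Yangian} from the density statement Theorem~\ref{theorem: density}, now using the density statement Theorem~\ref{theorem: density 2} just established. Concretely, let $i_\infty=(\pi_\infty,\varphi_\infty):\bigcup_N \mathcal M(N,K)\to L^-\GL_K\times L^-\GL_1$ be the ind-scheme morphism assembled from the closed embeddings $i_N$. Since $i_N$ is compatible with $\iota_{N,N'}$, taking inverse limits of pullbacks yields a ring homomorphism
\begin{equation*}
    i_\infty^*:\mathbb C[L^-\GL_K\times L^-\GL_1]\longrightarrow \underset{\substack{\longleftarrow\\N}}{\lim}\: \mathbb C[\mathcal M(N,K)].
\end{equation*}
Zariski-density from Theorem~\ref{theorem: density 2} implies that $i_\infty^*$ is injective: any global function vanishing on a Zariski-dense subset of the integral ind-scheme $L^-\GL_K\times L^-\GL_1$ must be zero.

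Next I would identify the image of $i_\infty^*$ with the subalgebra $\mathbb C[\mathcal M(\infty,K)]$ from Definition~\ref{defn: large N limit of M(N,K)}. For the first tensor factor, $\pi_N^{*}(T^{(n)}_{ab})=t^{(n)}_{ab}$ by the very definition of $\pi_N$, so the generators $T^{(n)}_{ab}$ of $\mathbb C[L^-\GL_K]$ pull back to the generators $t^{(n)}_{ab}$ appearing in Definition~\ref{defn: large N limit of M(N,K)}. For the second tensor factor, $\varphi_N$ pulls back the natural coordinate functions on $L^-\GL_1$ (the coefficients $G_i$ in $1+\sum_{i\ge 1} G_i z^{-i}$) to the signed elementary symmetric functions $(-1)^{i}e_i(\widetilde B)$ of the eigenvalues of $\widetilde B$, since
\begin{equation*}
    \tfrac{1}{z^N}\det(z-\widetilde B)=\sum_{k=0}^N(-1)^k e_k(\widetilde B)\,z^{-k}.
\end{equation*}
By Newton's identities the power sums $\Tr(\widetilde B^m)$ are polynomials in $e_1(\widetilde B),\ldots,e_m(\widetilde B)$ (and conversely), in a manner compatible with the transition maps $\iota_{N,N'}^{*}$, so in the inverse limit the subalgebra generated by the $\Tr(\widetilde B^m)$ coincides with the subalgebra generated by the $e_k(\widetilde B)$, i.e.\ with $\varphi_\infty^*\mathbb C[L^-\GL_1]$. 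Together with the previous paragraph this shows $\im(i_\infty^*)=\mathbb C[\mathcal M(\infty,K)]$, yielding the claimed isomorphism.

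The only potentially non-routine point is the compatibility in the inverse limit: one needs that the identifications $\Tr(\widetilde B^m)\leftrightarrow e_k(\widetilde B)$ via Newton's identities are stable under $\iota_{N,N'}^{*}$, and that passing to $N\to\infty$ really does produce free polynomial generators in both presentations (no hidden relations creep in). This is straightforward because $\iota_{N,N'}^{*}$ sends $\widetilde B^{(N')}$ to a matrix of the form $\widetilde B^{(N)}\oplus 0$, so $\Tr((\widetilde B^{(N')})^m)\mapsto \Tr((\widetilde B^{(N)})^m)$ and similarly for the $e_k$ once we allow $k$ to range up to $N'$. I do not anticipate genuine obstacles beyond this bookkeeping, since density plus the generator-matching argument mirrors exactly the derivation of Theorem~\ref{theorem: Yangian} from Theorem~\ref{theorem: density}.
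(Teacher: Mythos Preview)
Your proposal is correct and follows essentially the same approach as the paper, which merely says ``Following the same argument that we use for $\mathcal{A}(\infty,K)$'' and gives no further detail. You have correctly spelled out that argument---density gives injectivity of $i_\infty^*$, and generator-matching identifies the image with $\mathbb C[\mathcal M(\infty,K)]$---and in fact you supply additional justification (the Newton's identities step relating $\Tr(\widetilde B^m)$ to the coefficients of $\det(z-\widetilde B)$, and the compatibility with $\iota_{N,N'}^*$) that the paper leaves entirely implicit.
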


\begin{proof}
The algebra map $\mathbb C[L^-\GL_K]\otimes \mathbb C[L^-\GL_1]\to \mathbb C[\mathcal{M}(\infty,K)]$ is surjective because all the generators $\{T^{(n)}_{ab},\Tr(B^m):n\in \bZ_{\ge 0}, m\in \bZ_{\ge 1},1\le a,b\le K\}$ are in the image. This map is also injective because the algebra map $\mathbb C[L^-\GL_K]\otimes \mathbb C[L^-\GL_1]\to \underset{\substack{\longleftarrow\\N}}{\lim}\: \mathbb C[\mathcal M(N,K)]$ is injective as $\bigcup_N \mathcal{M}(N,K)$ is Zariski-dense in $L^-\GL_K\times L^-\GL_1$.
\end{proof}

\section{Hilbert Series of \texorpdfstring{$\mathbb C[\mathcal M(N,K)]$}{}}
\label{sec:the hilbert series of C[M(N,K)]}

Recall that we have a resolution of singularities $f:\mathcal{M}^\zeta(N,K)\to\mathcal{M}(N,K)$ by Proposition \ref{prop: res of sing}, where $\mathcal{M}^\zeta(N,K)$ is the moduli space of $\zeta$-stable representations of the quiver in the Figure \ref{Fig: quiver}. The action of gauge group $\GL_N$ on the space of $\zeta$-stable representations $\mathrm{Rep}(N,K)^{\zeta-s}$ is free, so the quotient map $\mathrm{Rep}(N,K)^{\zeta-s}\to \mathcal{M}^\zeta(N,K)$ is a principal $\GL_N$-bundle. The gauge node vector space $\mathbb C^N$ is a trivial bundle on $\mathrm{Rep}(N,K)$ which is endowed with a non-trivial equivariant structure under the action of $\GL_N$, then it descend to a locally free sheaf $\mathcal V$ on $\mathcal{M}^\zeta(N,K)$ since the $\GL_N$ action on the stable locus is free. $\mathcal V$ is called the \textit{tautological sheaf}, and its determinant line bundle is called the \textit{tautological line bundle} which will be denoted $\mathsf{Det}$. 

By the geometric invariant theory, $\mathcal{M}^\zeta(N,K)$ can be presented as a projective spectrum \cite[\S 2]{king1994moduli}
\begin{align*}
    \mathcal{M}^\zeta(N,K)=\Proj \left(\bigoplus_{m=0}^{\infty}\bC[\mathrm{Rep}(N,K)]^{\GL_N, \chi^m}\right),
\end{align*}
where $\bC[\mathrm{Rep}(N,K)]^{\GL_N, \chi^m}$ is the subspace of $\bC[\mathrm{Rep}(N,K)]$ on which $\GL_N$ acts via the character $\chi^m$. Then $\mathsf{Det}$ is identified with $\mathcal O(1)$ of the above projective spectrum. In particular $\mathsf{Det}$ is an ample line bundle on $\mathcal{M}^\zeta(N,K)$. 

Recall the following Grauert-Riemenschneider vanishing theorem.
\begin{theorem}[{\cite[Corollary 2.68]{kollar1998birational}}]\label{theorem: GR vanishing}
Let $h: X\longrightarrow Y$ be a resolution of singularities in characteristic zero, and let $\mathcal L$ be an ample line bundle on $X$, then $\mathbf R^ih_*(\mathcal K_{X}\otimes \mathcal L)=0$ for $i>0$. Here $\mathcal K_X$ is the canonical line bundle of $X$.
\end{theorem}

Applying the above theorem to the resolution of singularities $f:\mathcal{M}^\zeta(N,K)\to\mathcal{M}(N,K)$ and the ample line bundle $\mathsf{Det}$, we get the following cohomology vanishing result.

\begin{lemma}\label{lem: cohomology vanishing 1}
\begin{align}
    H^i(\mathcal{M}^\zeta(N,K),\mathsf{Det}^{\otimes n})=0,\; \text{for all }\; i>0  \;\text{and }\; n\ge 0.
\end{align}
\end{lemma}

\begin{proof}
By Lemma \ref{lem: triv can bun}, $\mathcal K_{\mathcal{M}^\zeta(N,K)}\cong \mathcal O_{\mathcal{M}^\zeta(N,K)}$. Then the lemma follows from Theorem \ref{theorem: GR vanishing}.
\end{proof}

\begin{definition}
For a nonnegative integer $n$, define the $\mathbb C[\mathcal{M}(N,K)]$ module of level $n$, denoted $\Gamma(N,K,n)$, to be the global section of $n$-th power of tautological line bundle, i.e.
\begin{align}
    \Gamma(N,K,n):=\Gamma(\mathcal{M}^\zeta(N,K),\mathsf{Det}^{\otimes n}).
\end{align}
\end{definition}

In this section, we compute the Hilbert series of $\mathbb C[\mathcal{M}(N,K)]$ and $\Gamma(N,K,n)$. Before starting, let us introduce some notations and explain what we are going to compute.

The quiver in Figure \ref{Fig: quiver} admits an action of $\GL_K\times \mathbb C^{\times}_q\times \mathbb C^{\times}_t$, where $\GL_K$ is the flavour symmetry which acts on the framing vector space, $\mathbb C^{\times}_q$ scales $B$ by $B\mapsto q^{-1}B$, and $\mathbb C^{\times}_t$ scales $I$ by $I\mapsto t^{-1}I$. The convention of the inverse $q^{-1}$ and $t^{-1}$ is such that the functions $\mathrm{Tr}(B^n)$ and $I_aB^mJ_b$ scales by $q^n$ and $q^mt$ respectively (since functions are dual to the space). Although $\mathbb C[\mathcal{M}(N,K)]$ is infinite dimensional, every $\mathbb C^{\times}_q\times \mathbb C^{\times}_t$-weight space of $\mathbb C[\mathcal{M}(N,K)]$ is finite dimensional, thus it makes sense to regard $\mathbb C[\mathcal{M}(N,K)]$ as an element in $K_{\GL_K}(\mathrm{pt})[\![q,t]\!]$. Similarly, the same properties hold for $\Gamma(N,K,n)$. The goal of this section is to compute the these elements.

\begin{definition}
Denote the complexified $\GL_K$-equivariant K-theory of a point by $$K_{\GL_K}(\mathrm{pt})=\bC[x_1^{\pm },\cdots,x_K^{\pm }]^{\mathfrak S_K},$$ where $\mathfrak S_K$ is the permutation group acting on $x_1,\cdots x_K$. We use shorthand notation $f(x)$ for a function of $x_1,\cdots, x_K$, and $f(x^{-1})=f(x_1^{-1},\cdots,x_K^{-1})$. Denote by $Z_{N,K}(x;q,t)$ the element of $\mathbb C[\mathcal{M}(N,K)]$ in $K_{\GL_K}(\mathrm{pt})[\![q,t]\!]$, and denote by $Z^{(n)}_{N,K}(x;q,t)$ the element of $\Gamma(N,K,n)$ in $K_{\GL_K}(\mathrm{pt})[\![q,t]\!]$.
\end{definition}

By Lemma \ref{lem: cohomology vanishing 1}, we have $Z^{(n)}_{N,K}(x;q,t)=\chi(\cM^\zeta(N,K),\mathsf{Det}^{\otimes n})$. The case $K=1$ is trivial: The functions $\Tr(B),\cdots ,\Tr(B^N), IJ, IBJ,\cdots,IB^{N-1}J$ give rise to an isomorphism $\mathcal M(N,1)\cong \mathbb A^{2N}$. The Lemma \ref{lemma: bundle over quot scheme} below, together with the fact that the Hilbert-Chow map for Hilbert scheme of points on smooth curve is isomorphism, implies that $\mathcal{M}^\zeta(N,K)\cong \mathcal M(N,K)$. In fact, $\mathsf{Det}$ in this case is a trivial bundle, with $\mathbb C^{\times }_q\times \mathbb C^{\times }_t$-weight $(1,0)$, thus
\begin{align}
    Z^{(n)}_{N,1}(x;q,t)=q^nZ_{N,1}(x;q,t)=\frac{q^n}{(q;q)_N(t;q)_N}.
\end{align}
Here we use the $q$-Pochhammer symbol
$$(a;q)_n=(1-a)(1-aq)\cdots(1-aq^{n-1}).$$
The case when $K>1$ is harder. In principal, one can use the localization technique to get a formula for  $\chi(\mathcal{M}^\zeta(N,K),\mathsf{Det}^{\otimes n})$ in terms of summation over fixed points, but it involves complicated denominators that are hard to extract the power series in $q$ and $t$ explicitly. Our strategy is to reduce the computation to Euler character of vector bundles on Quot scheme, which is related to the affine Grassmannian of $\GL_K$, and apply the results on the geometry of the affine Grassmannian of $\GL_K$ to finish the calculation. We present the result here first and explain the calculation in steps afterwards.

\begin{theorem}\label{thr:the Hilbert series of C[M(N,K)]}
The Hilbert series of $\Gamma(N,K,n)=\Gamma(\mathcal{M}^\zeta(N,K),\mathsf{Det}^{\otimes n})$ is
\begin{align}
    Z^{(n)}_{N,K}(x;q,t)=\frac{1}{(q;q)_N}\sum_{\mu}t^{|\mu|}H_{\mu+(n^N)}(x;q)h_{\mu_1}(x^{-1})\cdots h_{\mu_N}(x^{-1}).
\end{align}
Here the summation is over arrays $\mu=(\mu_1,\cdots,\mu_N)\in \mathbb Z_{\ge 0}^N$, $(n^N)$ is the array consisting of $N$ copies of $n$, i.e. $(n^N)=(n,n,\cdots,n)$, and $|\mu|:=\sum_{i=1}^N\mu_i$, and $h_{k}(x)$ is the complete homogeneous symmetric polynomials of degree $k$, and $H_{\lambda}(x;q)$ is the generalized transformed Hall--Littlewood polynomial of the array $\lambda$, defined in \eqref{definition: transformed HL}.
\end{theorem}

\subsection{Reduction Steps}
Recall that the stability condition in the definition of $\mathcal{M}^\zeta(N,K)$ is that if $V\subset\mathbb C^N$, $B(V)\subseteq V$ and $\im(J)\subseteq V$ then $V=\mathbb C^N$, in particular the sub-quiver consisting of arrows $(B,J)$ is stable under the same stability condition, so we have:
\begin{lemma}\label{lemma: bundle over quot scheme}
The moduli of stable representations $\mathcal{M}^\zeta(N,K)$ is a vector bundle over the Quot scheme of $\mathbb A^1$ which parametrizes length $N$ quotients of $\mathcal O_{\mathbb A^1}^{\oplus K}$, denoted by $\mathrm{Quot}^N(\mathbb A^1,\mathcal O_{\mathbb A^1}^{\oplus K}):$
\begin{center}
\begin{tikzcd}
 \mathcal{M}^\zeta(N,K)=\mathbb V(\mathcal V\otimes W^*) \arrow[d,"p"] \\
 \mathrm{Quot}^N(\mathbb A^1,\mathcal O_{\mathbb A^1}^{\oplus K})
\end{tikzcd}
\end{center}
Here $\mathcal V$ is the tautological sheaf on $\mathrm{Quot}^N(\mathbb A^1,\mathcal O_{\mathbb A^1}^{\oplus K})$, and $W$ is the framing vector space.
\end{lemma}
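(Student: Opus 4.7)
The plan is to separate the datum $(B,\psi,\overline{\psi})$ into the $(B,\psi)$ part, which is what the Quot scheme sees, and the remaining linear datum $\overline{\psi}$, which contributes the vector-bundle fiber.

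First I would observe, as the excerpt already notes, that the stability condition only involves $B$ and $\psi$: a representation is stable iff no proper subspace of $\mathbb C^N$ contains $\operatorname{im}(\psi)$ and is preserved by $B$. Equivalently, $\mathbb C^N$ is generated by $\operatorname{im}(\psi)$ under the action of $\mathbb C[B]$. Let $\mathrm{Rep}^s_0(N,K)$ denote the open subset of pairs $(B,\psi)\in \mathrm{End}(\mathbb C^N)\oplus \operatorname{Hom}(W,\mathbb C^N)$ satisfying this stability, and $\mathrm{Rep}^s(N,K)$ the corresponding open subset of triples.

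Next I would identify $\mathrm{Rep}^s_0(N,K)/\mathrm{GL}_N$ with $\mathrm{Quot}^N(\mathbb A^1,\mathcal O_{\mathbb A^1}^{\oplus K})$. Given $(B,\psi)$, view $\mathbb C^N$ as a $\mathbb C[z]$-module via $z\cdot v := Bv$, so it is a length $N$ torsion $\mathcal O_{\mathbb A^1}$-module; the map $\psi:W\to \mathbb C^N$ extends uniquely to a $\mathbb C[z]$-linear map $\mathbb C[z]\otimes W\to \mathbb C^N$, which is surjective precisely by the stability condition. Conversely, any length $N$ quotient $\mathcal O_{\mathbb A^1}^{\oplus K}\twoheadrightarrow \mathcal F$, together with a choice of basis identifying the global sections $H^0(\mathbb A^1,\mathcal F)\cong \mathbb C^N$, recovers $B$ (multiplication by $z$) and $\psi$ (image of the standard basis of $W$), and change of basis accounts exactly for the $\mathrm{GL}_N$ action. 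This is the standard presentation of the Quot scheme as the quotient of the open locus of surjective maps; it realizes the universal quotient $\mathcal V$ as the descent of the trivial $\mathbb C^N$ bundle on $\mathrm{Rep}^s_0(N,K)$ to $\mathrm{Quot}^N(\mathbb A^1,\mathcal O_{\mathbb A^1}^{\oplus K})$.

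Finally, I would add $\overline{\psi}$ back. The datum $\overline{\psi}\in \operatorname{Hom}(\mathbb C^N,W) = (\mathbb C^N)^*\otimes W$ is completely free (no stability condition is imposed on it), and it transforms $\mathrm{GL}_N$-equivariantly as a section of $(\mathbb C^N)^*\otimes W$. Thus $\mathrm{Rep}^s(N,K) \to \mathrm{Rep}^s_0(N,K)$ is a $\mathrm{GL}_N$-equivariant vector bundle with fiber $(\mathbb C^N)^*\otimes W$, and taking the $\mathrm{GL}_N$-quotient (which is a principal bundle on the stable locus because the action is free by stability) yields a projection $p:\mathcal M^s(N,K)\to \mathrm{Quot}^N(\mathbb A^1,\mathcal O_{\mathbb A^1}^{\oplus K})$ that is the total space of the vector bundle associated to $\mathcal V^*\otimes W$, matching the statement up to the convention for $\mathbb V(-)$.

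I do not expect any serious obstacle: the only nontrivial input is the reinterpretation of stable $(B,\psi)$-pairs as length $N$ quotients of $\mathcal O_{\mathbb A^1}^{\oplus K}$, and once the $\mathrm{GL}_N$-action is free on $\mathrm{Rep}^s_0(N,K)$ the vector bundle structure descends mechanically. The only point requiring a little care is bookkeeping: verifying that $\mathrm{GL}_N$ acts freely on the stable locus (immediate from the stability condition, since any stabilizer would preserve a proper subspace containing $\operatorname{im}(\psi)$) and that the trivial bundle with fiber $(\mathbb C^N)^*\otimes W$ on $\mathrm{Rep}^s_0(N,K)$ descends to $\mathcal V^*\otimes W$ on the quotient.
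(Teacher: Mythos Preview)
Your proposal is correct and follows essentially the same approach as the paper: identify the stable $(B,\psi)$-datum with a point of the Quot scheme by viewing $\mathbb C^N$ as a $\mathbb C[z]$-module generated by $\operatorname{im}(\psi)$, and then observe that the remaining datum $\overline{\psi}\in\Hom(\mathcal V,W)$ is unconstrained and descends to the vector bundle $\mathbb V(\mathcal V\otimes W^*)$. Your write-up is in fact more detailed than the paper's (which is only a few lines), especially in spelling out why $\GL_N$ acts freely on the stable locus and in flagging the $\mathbb V(-)$ convention.
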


\begin{proof}
Consider a point $(B,J,I)\in \mathcal{M}^\zeta(N,K)$, the action $B$ on $\mathbb C^N$ makes it into a $\mathbb C[z]$-module such that $z$ acts as $B$. The stability on $(B,J)$ is equivalent to that $\mathbb C^N$ is a quotient module of a free module of rank $K$. This gives rise to a morphism $p:\mathcal{M}^\zeta(N,K)\to \mathrm{Quot}^N(\mathbb A^1,\mathcal O_{\mathbb A^1}^{\oplus K})$, and the extra information in $\mathcal{M}^\zeta(N,K)$ compared to the Quot scheme is a homomorphism from the universal quotient $\mathcal V$ to the framing vector space $W$, so $\mathcal{M}^\zeta(N,K)$ is represented by $\mathbb V(\mathcal V\otimes W^*)$.
\end{proof}

\noindent Lemma \ref{lemma: bundle over quot scheme} implies that
\begin{align}
    \chi(\mathcal{M}^\zeta(N,K),\mathsf{Det}^{\otimes n})=\sum_{m=0}^{\infty}t^m\chi(\mathrm{Quot}^N(\mathbb A^1,\mathcal O_{\mathbb A^1}^{\oplus K}),\mathrm{Sym}^m(\mathcal V\otimes W^*)\otimes \mathsf{Det}^{\otimes n}).
\end{align}
Here in each summand, $\chi(\mathrm{Quot}^N(\mathbb A^1,\mathcal O_{\mathbb A^1}^{\oplus K}),\mathrm{Sym}^m(\mathcal V\otimes W^*))$ is in $K_{\GL_K}(\mathrm{pt})[\![q]\!]$. So the computation of $\mathbb C[\mathcal M(N,K)]$ boils down to the computation of equivariant Euler characters of sheaves on the Quot scheme.

The Quot scheme has a nice structure: there is morphism $h:\mathrm{Quot}^N(\mathbb A^1,\mathcal O_{\mathbb A^1}^{\oplus K})\longrightarrow \mathbb A^{(N)}$ where $\mathbb A^{(N)}$ is the $N$-th symmetric product of $\mathbb A^{1}$, which is identified with the Hilbert scheme of $N$ points on $\mathbb A^{1}$ and $h$ is the Hilbert-Chow morphism for the Quot scheme. In the language of quivers, $h$ maps $(B,J)$ to the set of eigenvalues of $B$ (counted with multiplicities), regarded as a divisor of degree $N$ in $\mathbb A^1$.

\begin{lemma}\label{Lemma: central fiber}
The central fiber $h^{-1}(0)$ of the Hilbert-Chow morphism $h:\mathrm{Quot}^N(\mathbb A^1,\mathcal O_{\mathbb A^1}^{\oplus K})\longrightarrow \mathbb A^{(N)}$, endowed with reduced scheme structure, is isomorphic to the affine Grassmannian Schubert variety $\overline{\Gr}^{N\omega_1}_{\GL_K}$ \footnote{Here we endow all affine Grassmannian Schubert varieties with reduced scheme structures.}. Here $\omega_1=(1,0,\cdots,0)$ is the first fundamental coweight of $\GL_K$.
\end{lemma}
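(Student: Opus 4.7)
The strategy is to identify both sides set-theoretically with the moduli of $\mathbb{C}[\![z]\!]$-submodules of colength $N$ in $\mathbb{C}[\![z]\!]^{\oplus K}$, and then upgrade this to a scheme-theoretic isomorphism by producing an inverse using the universal property of $\mathrm{Quot}$.

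\textbf{Step 1 (central fiber as a lattice moduli).} A $\mathbb{C}$-point of $h^{-1}(0)$ is an $\mathcal{O}_{\mathbb{A}^1}$-submodule $L \subset \mathcal{O}_{\mathbb{A}^1}^{\oplus K}$ whose quotient has length $N$ and is set-theoretically supported at $0$; equivalently, in quiver terms, a stable pair $(B,\psi)$ with $B$ nilpotent. Since $L$ coincides with $\mathcal{O}^{\oplus K}$ away from $0$, it is determined by its completion $\widehat L := L \otimes_{\mathbb C[z]}\mathbb C[\![z]\!]$, with inverse assignment $\widehat L \mapsto \widehat L \cap \mathbb C[z,z^{-1}]^{\oplus K}$. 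This gives a bijection from $h^{-1}(0)(\mathbb C)$ onto the set of $\mathbb C[\![z]\!]$-lattices $\widehat L \subset \mathbb C[\![z]\!]^{\oplus K}$ with $\dim_{\mathbb C}\mathbb C[\![z]\!]^{\oplus K}/\widehat L = N$, and is induced by a scheme morphism $j:h^{-1}(0) \to \Gr_{\GL_K}$ defined in families.

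\textbf{Step 2 (matching with $\overline{\Gr}^{N\omega_1}$).} By the Cartan decomposition of $\GL_K(\!(z)\!)$, every colength-$N$ lattice $\widehat L \subset \mathbb C[\![z]\!]^{\oplus K}$ has Smith form $\bigoplus_{i=1}^K z^{\lambda_i}\mathbb C[\![z]\!]$ for a unique partition $\lambda=(\lambda_1 \geq \cdots \geq \lambda_K \geq 0)$ of $N$, and such lattices constitute the $\GL_K[\![z]\!]$-orbit $\Gr^{\lambda}$. Since $N\omega_1=(N,0,\ldots,0)$, the dominance condition $\lambda \leq N\omega_1$ on coweights of total $N$ is equivalent to $\lambda_K \geq 0$. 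Thus the image of $j$ coincides, as a set, with $\bigcup_{\lambda \leq N\omega_1}\Gr^{\lambda}=\overline{\Gr}^{N\omega_1}_{\GL_K}$.

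\textbf{Step 3 (upgrading to a scheme isomorphism).} Combining steps 1 and 2, we obtain a bijective morphism of reduced schemes $j_{\mathrm{red}}: h^{-1}(0)_{\mathrm{red}} \to \overline{\Gr}^{N\omega_1}_{\GL_K}$. To construct an inverse, I would build a universal length-$N$ quotient sheaf on $\mathbb A^1 \times \overline{\Gr}^{N\omega_1}_{\GL_K}$: use the tautological lattice $\mathcal L \subset \mathcal O \otimes \mathbb C(\!(z)\!)^{\oplus K}$ to define $\mathcal{Q}$ as the cokernel of $\mathcal L \cap (\mathcal O \otimes \mathbb C[\![z]\!]^{\oplus K}) \hookrightarrow \mathcal O \otimes \mathbb C[\![z]\!]^{\oplus K}$, and regard $\mathcal{Q}$ as a coherent sheaf on $\mathbb A^1 \times \overline{\Gr}^{N\omega_1}_{\GL_K}$ supported on $\{0\} \times \overline{\Gr}^{N\omega_1}_{\GL_K}$. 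The universal property of $\mathrm{Quot}$ then produces a morphism $\overline{\Gr}^{N\omega_1}_{\GL_K} \to h^{-1}(0)$ inverse to $j_{\mathrm{red}}$.

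The main obstacle is step 3: verifying that the sheaf $\mathcal{Q}$ constructed from the tautological lattice is flat of the correct length over $\overline{\Gr}^{N\omega_1}_{\GL_K}$ and that its formation commutes with base change requires a careful analysis of the family of lattices on the Schubert variety, which is somewhat delicate at non-smooth points. A cleaner route, which I would pursue if the direct construction becomes unwieldy, is to invoke the Mirković--Vybornov (and Laumon--Finkelberg) identification of the central fiber of the Quot scheme of $\mathbb A^1$ with an affine Grassmannian Schubert variety; this furnishes the scheme-theoretic isomorphism directly and also guarantees that $\overline{\Gr}^{N\omega_1}_{\GL_K}$ is normal, so that any bijective morphism from a reduced variety to it is an isomorphism provided it is birational.
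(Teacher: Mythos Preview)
Your overall strategy is sound and would eventually succeed, but the paper's proof is considerably shorter because it bypasses your Step~3 entirely. Rather than constructing $j$ and its inverse by hand and then worrying about flatness of the tautological quotient $\mathcal Q$, the paper invokes the formal gluing theorem (Beauville--Laszlo descent, \cite[\href{https://stacks.math.columbia.edu/tag/0BP2}{Tag 0BP2}]{stacks-project}): for any base $S$, quotients of $\mathcal O_{\mathbb A^1\times S}^{\oplus K}$ that are $S$-flat of length $N$ and supported set-theoretically on $\{0\}\times S$ are equivalent, as a category and hence as a moduli functor, to quotients of $\mathcal O_S[\![z]\!]^{\oplus K}$ that are $S$-flat of length $N$. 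This identifies the functor of points of $h^{-1}(0)$ with that of the lattice Schubert variety in one stroke, so no separate inverse map needs to be built and the flatness issue you flag never arises.

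Your fallback argument in the last paragraph (bijective morphism of reduced varieties onto a normal target, hence an isomorphism by Zariski's main theorem) is correct and would close the gap: $h^{-1}(0)$ is projective since it sits inside the Grassmannian of $N$-dimensional quotients of $(\mathbb C[z]/z^N)^{\oplus K}$, so $j_{\mathrm{red}}$ is proper, and $\overline{\Gr}^{N\omega_1}_{\GL_K}$ is indeed normal. This is a legitimate alternative; it just costs you an extra appeal to normality of affine Schubert varieties, whereas formal gluing gives the isomorphism of functors directly.
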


\begin{proof}
The central fiber $h^{-1}(0)$ is the moduli space of submodules of $\mathbb C[z]^{\oplus K}$ whose cokernels are finite of length $N$ and are supported at $0$. Since finite quotients of $\mathbb C[z]^{\oplus K}$ are equivalent to finite quotients of $\mathbb C[\![z]\!]^{\oplus K}$, $h^{-1}(0)$ is the moduli space of submodules of $\mathbb C[\![z]\!]^{\oplus K}$ whose cokernels are finite of length $N$. This implies that $h^{-1}(0)$ is isomorphic to a closed subscheme of $\Gr_{\GL_K}$ which is set-theoretically identified with $\overline{\Gr}^{N\omega_1}_{\GL_K}$, thus $h^{-1}(0)_{\mathrm{red}}\cong \overline{\Gr}^{N\omega_1}_{\GL_K}$.
\end{proof}

\begin{proposition}\label{prop: flatness of Hilbert-Chow}
The Hilbert-Chow morphism $h:\mathrm{Quot}^N(\mathbb A^1,\mathcal O_{\mathbb A^1}^{\oplus K})\longrightarrow \mathbb A^{(N)}$ is flat.
\end{proposition}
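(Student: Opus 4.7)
The plan is to apply the miracle flatness criterion: if $f:X\to Y$ is a morphism of finite type of Noetherian schemes with $X$ Cohen--Macaulay, $Y$ regular, and every fiber of pure dimension $\dim X-\dim Y$, then $f$ is flat (EGA IV$_4$, 15.4.2, or Matsumura, Theorem 23.1). I will verify the three inputs in turn for $h$.

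The target $\mathbb A^{(N)}\cong \mathbb A^N$ is smooth of dimension $N$, hence regular. Next, the source $X=\mathrm{Quot}^N(\mathbb A^1,\mathcal O_{\mathbb A^1}^{\oplus K})$ is smooth of dimension $NK$, and in particular Cohen--Macaulay: at a point corresponding to $0\to \mathcal K\to \mathcal O_{\mathbb A^1}^{\oplus K}\to Q\to 0$ the tangent space is $\Hom(\mathcal K,Q)$, and the obstructions lie in $\Ext^1(\mathcal K,Q)$. The latter vanishes because $\mathcal K$ is locally free and $\mathbb A^1$ is affine, while $\dim \Hom(\mathcal K,Q)=NK$ is obtained by decomposing $Q=\bigoplus_i Q_{x_i}$ according to its support and computing locally, giving $\sum_i K\cdot\mathrm{length}(Q_{x_i})=KN$.

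The core step is equidimensionality: every fiber of $h$ must have pure dimension $\dim X-\dim Y=N(K-1)$. By Lemma \ref{Lemma: central fiber} the reduced central fiber is $\overline{\Gr}^{N\omega_1}_{\GL_K}$, whose dimension is $\langle N\omega_1,2\rho\rangle=N(K-1)$ by the standard affine-Grassmannian dimension formula, with $2\rho$ the sum of positive roots of $\GL_K$. For a general closed point $D=\sum_i n_i[x_i]\in\mathbb A^{(N)}$ with distinct $x_i$, formal gluing (the same argument as in the proof of Lemma \ref{Lemma: central fiber}, applied at each $x_i$ separately) identifies the reduced fiber with $\prod_i \overline{\Gr}^{n_i\omega_1}_{\GL_K}$, of total dimension $\sum_i n_i(K-1)=N(K-1)$. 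Since the Krull dimension of the scheme-theoretic fiber agrees with that of its underlying reduced subscheme, every fiber of $h$ has pure dimension $N(K-1)$, and miracle flatness concludes the proof.

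The main technical obstacle is keeping track of the affine-Grassmannian dimension computation and the coweight/root conventions for $\GL_K$; a secondary point is ensuring that the formal-gluing decomposition $Q=\bigoplus_i Q_{x_i}$ gives the claimed product of Schubert varieties scheme-theoretically, although for applying miracle flatness only the Krull dimension of each fiber matters, and this coincides with the dimension of its reduction computed above.
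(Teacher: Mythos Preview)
Your argument is correct and shares the same core idea as the paper's: both apply miracle flatness using the smoothness of the Quot scheme (dimension $NK$), regularity of $\mathbb A^{(N)}$, and the identification of the reduced central fiber with $\overline{\Gr}^{N\omega_1}_{\GL_K}$ of dimension $N(K-1)$.

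The difference lies in how equidimensionality of all fibers is obtained. You verify it directly for every closed point $D=\sum_i n_i[x_i]$ by invoking formal gluing to factor the reduced fiber as $\prod_i \overline{\Gr}^{n_i\omega_1}_{\GL_K}$. The paper instead checks only the central fiber, deduces flatness along $h^{-1}(0)$, and then uses that flatness is open together with properness of $h$ and the contracting $\mathbb C^{\times}$-action on $\mathbb A^{(N)}$ to propagate flatness everywhere. Your route is more hands-on and avoids any appeal to properness or the torus action; the paper's route is slightly slicker in that it never has to look at a non-central fiber, trading the factorization computation for a symmetry argument. Either way the content is the same.
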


\begin{proof}
By the deformation theory, $\mathrm{Quot}^N(\mathbb A^1,\mathcal O_{\mathbb A^1}^{\oplus K})$ is smooth of dimension $NK$. $h^{-1}(0)_{\mathrm{red}}\cong \overline{\Gr}^{N\omega_1}_{\GL_K}$ has dimension $(K-1)N$, which equals to $\dim \mathrm{Quot}^N(\mathbb A^1,\mathcal O_{\mathbb A^1}^{\oplus K})-\dim \mathbb A^{(N)}$, thus $h$ is flat along $h^{-1}(0)$ by miracle flatness theorem \cite[\href{https://stacks.math.columbia.edu/tag/00R4}{Tag 00R4}]{stacks-project}. Since flatness is an open condition, $h$ is flat in an open neighborhood of $h^{-1}(0)$. Since Hilbert-Chow morphism $h$ is proper, there is an open neighborhood $U$ of $0\in \mathbb A^{(N)}$ such that $h|_{h^{-1}(U)}$ is flat. Finally $h$ is equivariant under the $\mathbb C^{\times}$ action on $\mathbb A^1$ which contracts $\mathbb A^{(N)}$ to $0$, so the flatness is transported from $U$ to the whole $\mathbb A^{(N)}$.
\end{proof}

Proposition \ref{prop: flatness of Hilbert-Chow} provides a tool that reduces the computation of Euler character to the central fiber. In effect, to compute $\chi(\mathrm{Quot}^N(\mathbb A^1,\mathcal O_{\mathbb A^1}^{\oplus K}),\mathcal F)$ for a locally free sheaf $\mathcal F$, we can apply $\mathbb C^{\times}_q$-localization to its derived pushforward $Rh_*(\mathcal F)$:
\begin{align}\label{eqn: localization on A^N}
    \chi(\mathrm{Quot}^N(\mathbb A^1,\mathcal O_{\mathbb A^1}^{\oplus K}),\mathcal F)=\chi(\mathbb A^{(N)},\mathbf Rh_*(\mathcal F))=\frac{\chi(h^{-1}(0),\mathcal F|_{h^{-1}(0)})}{\prod_{i=1}^N(1-q^i)},
\end{align}
where in the last equation we use the proper base change (since $\mathcal F$ is flat over $\mathbb A^{(N)}$ by Proposition \ref{prop: flatness of Hilbert-Chow}), and the denominator comes from the tangent space of $\mathbb A^{(N)}$ at $0$ which has $\mathbb C^{\times}_q$-weights $-1,\cdots,-N$.

\begin{proposition}\label{prop: central fiber}
The central fiber $h^{-1}(0)$ is isomorphic to $\overline{\Gr}^{N\omega_1}_{\GL_K}$ as a scheme.
\end{proposition}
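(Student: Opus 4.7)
The plan is to upgrade the set-theoretic identification of Lemma \ref{Lemma: central fiber} by proving that the scheme-theoretic fibre $h^{-1}(0)$ is already reduced; once this is established, Lemma \ref{Lemma: central fiber} immediately yields the scheme-theoretic isomorphism $h^{-1}(0) \cong \overline{\Gr}^{N\omega_1}_{\GL_K}$.

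The first step is to observe that $h^{-1}(0)$ is Cohen--Macaulay. Indeed, $\mathrm{Quot}^N(\mathbb A^1,\mathcal O_{\mathbb A^1}^{\oplus K})$ is smooth of dimension $NK$, and $h$ is flat by Proposition \ref{prop: flatness of Hilbert-Chow} with equidimensional fibres of dimension $N(K-1)$. The ideal of $h^{-1}(0)$ is locally generated by the $N$ non-leading coefficients of the characteristic polynomial of $B$, and flatness together with the dimension count forces these $N$ functions to form a regular sequence; hence $h^{-1}(0)$ is a local complete intersection, in particular Cohen--Macaulay of dimension $N(K-1)$.

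The second and main step is to show that $h^{-1}(0)$ is generically reduced. Since a Cohen--Macaulay scheme has no embedded components, generic reducedness forces reducedness. By Lemma \ref{Lemma: central fiber} the underlying set of $h^{-1}(0)$ is the irreducible variety $\overline{\Gr}^{N\omega_1}_{\GL_K}$, so it suffices to exhibit a single point at which $h$ is a smooth morphism. The natural choice is the stable quiver representation $(B,\psi)$ with $B=J_N$ a regular nilpotent Jordan block and $\psi$ whose only nonzero column is a cyclic vector for $J_N$; the corresponding quotient module is isomorphic to $\mathbb C[z]/z^N$, so its image in $\overline{\Gr}^{N\omega_1}_{\GL_K}$ lies in the open Schubert cell $\Gr^{N\omega_1}_{\GL_K}$. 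Now $h$ depends only on $B$ and factors through the Chevalley map $\chi:\mathrm{Mat}_N \to \mathbb A^N$ sending $B$ to the coefficients of $\chi_B$; the classical fact that $\chi$ is smooth precisely on the regular locus of $\mathrm{Mat}_N$, combined with $J_N$ being regular nilpotent (its centralizer has dimension $N$), yields surjectivity of $d\chi_{J_N}$ and hence of $dh$ at the chosen point. Thus $h$ is smooth there, the scheme-theoretic fibre is smooth and therefore reduced in a neighbourhood, and combined with Cohen--Macaulayness we conclude that $h^{-1}(0)$ is reduced.

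The principal technical obstacle is the smoothness of $\chi$ at the regular nilpotent element. The plan is to derive it either from the general statement that $\chi$ is a flat morphism whose smooth locus coincides with the regular locus of $\mathrm{Mat}_N$, or by an explicit first-order computation verifying that the deformations of $J_N$ of the form $J_N + \epsilon\,\mathrm{diag}(t_1,\ldots,t_N)$ realize all $N$ directions of $T_0\mathbb A^N$ via the characteristic polynomial. The remaining verifications---stability of the chosen representation, and the identification of its image with the base point of the open Schubert cell---are direct from the quiver description of $\mathcal M^s(N,K)$ established earlier in the paper.
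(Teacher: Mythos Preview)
Your proposal is correct and follows essentially the same strategy as the paper: reduce to showing $h^{-1}(0)$ is reduced, observe it is Cohen--Macaulay (from flatness of $h$ and smoothness of source and target), and then verify generic reducedness by exhibiting smoothness of $h$ at a single point. The point you pick---the regular nilpotent Jordan block $J_N$ together with a cyclic vector---is exactly the point $z^{N\omega_1}$ used in the paper.

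The only difference is in the verification of smoothness at that point. The paper computes the tangent space of the Quot scheme directly via deformation theory as $\Hom_{\mathcal O_{\mathbb A^1}}(\mathcal E,Q)$ and identifies the subspace $\Hom(z^N\mathcal O,\mathcal O/z^N\mathcal O)$ that surjects onto $T_0\mathbb A^{(N)}$. You instead lift to the prequotient $\mathrm{Rep}^s$, observe that $h\circ q$ factors through the Chevalley map $\chi:\mathrm{Mat}_N\to\mathbb A^N$, and invoke the classical smoothness of $\chi$ on the regular locus; surjectivity of $d\chi_{J_N}$ then forces surjectivity of $dh$ since $dq$ is surjective. Your route is slightly more conceptual and avoids writing out the Quot-scheme tangent space explicitly, while the paper's route is more self-contained; both are short and yield the same conclusion.
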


\begin{proof}
In view of Lemma \ref{Lemma: central fiber}, the proposition is equivalent to that $h^{-1}(0)$ is reduced. Since $h$ is flat with domain and codomain being smooth, $h^{-1}(0)$ is a Cohen-Macaulay scheme, therefore it is enough to show that $h^{-1}(0)$ is generically reduced. We claim that $h$ is smooth at the point $z^{N \omega_1}$. Assume that the claim is true, then $h$ is smooth in an open neighborhood of $z^{N \omega_1}$, thus $h^{-1}(0)$ is generically reduced.

The claim follows from the deformation theory of $\mathrm{Quot}^N(\mathbb A^1,\mathcal O_{\mathbb A^1}^{\oplus K})$. Namely, if $e_1,\cdots,e_K$ is the basis of $\mathcal O_{\mathbb A^1}^{\oplus K}$, then $z^{N\omega_1}$ corresponds to short exact sequence
\begin{align*}
    0\longrightarrow \mathcal E\longrightarrow \mathcal O_{\mathbb A^1}^{\oplus K}\longrightarrow Q\longrightarrow 0
\end{align*}
such that $\mathcal E$ is the subsheaf of $\mathcal O_{\mathbb A^1}^{\oplus K}$ generated by $z^{N}e_1,e_2,\cdots,e_K$. Then the tangent space of $\mathrm{Quot}^N(\mathbb A^1,\mathcal O_{\mathbb A^1}^{\oplus K})$ at $z^{N\omega_1}$ is 
\begin{align*}
    \Hom_{\mathcal O_{\mathbb A^1}}(\mathcal E,Q).
\end{align*}
In particular, the tangent space contains $\Hom_{\mathcal O_{\mathbb A^1}}(z^{N}\mathcal O_{\mathbb A^1}, \mathcal O_{\mathbb A^1}/z^{N}\mathcal O_{\mathbb A^1})$ as a subspace, and the latter projects isomorphically onto the tangent space of $\mathbb A^{(N)}$ at $0$. In particular, the tangent map at $z^{N\omega_1}$ is surjective, thus $h$ is smooth at $z^{N\omega_1}$.
\end{proof}

By the Lemma \ref{lem: cohomology vanishing 1} and localization formula \eqref{eqn: localization on A^N}, we reduce the calculation to
\begin{align}
Z^{(n)}_{N,K}(x;q,t)=\chi(\mathcal{M}^\zeta(N,K),\mathsf{Det}^{\otimes n})
    =\frac{1}{(q;q)_N}\sum_{m=0}^{\infty}t^m\chi(\overline{\Gr}^{N\omega_1}_{\GL_K},\mathrm{Sym}^m(\mathcal V\otimes W^*)\otimes \mathcal O(n)),
\end{align}
where $\mathcal V$ is the restriction of the universal quotient sheaf to $\overline{\Gr}^{N\omega_1}_{\GL_K}$.

\subsection{Calculation on Affine Grassmannian} It remains to do the calculation on affine Grassmannian for 
\begin{align*}
    \sum_{m=0}^{\infty}t^n\chi(\overline{\Gr}^{N\omega_1}_{\GL_K},\mathrm{Sym}^m(\mathcal V\otimes W^*)\otimes \mathcal O(n))=\chi(\overline{\Gr}^{N\omega_1}_{\GL_K},S^{\bullet}_t(\mathcal V\otimes W^*)\otimes \mathcal O(n)).
\end{align*}
Here we use the notation $S^{\bullet}_t(\mathcal V\otimes W^*)=\bigoplus_{m\ge 0} t^n \:\mathrm{Sym}^m(\mathcal V\otimes W^*)$.
To start with, note that there is a convolution map on $\Gr_{\GL_K}$:
\begin{align*}
    m:\Gr^{\omega_1}_{\GL_K}\widetilde{\times} \:\overline{\Gr}^{(N-1)\omega_1}_{\GL_K}\longrightarrow \overline{\Gr}^{N\omega_1}_{\GL_K},
\end{align*}
see appendix \eqref{eqn: convolution} for definition of the convolution product. The key property of the convolution product is that 
\begin{align}\label{eqn: pushforwrd of structure sheaf}
    \mathcal O\cong \mathbf Rm_*\mathcal O.
\end{align}
See the proof of appendix \ref{corollary: character of O(k)} for an explanation of this isomorphism. Here $\mathcal O$ is the structure sheaves, we omit the subscripts labelling the domain and codomain, since the meaning of the homomorphism is clear. In view of \eqref{eqn: pushforwrd of structure sheaf}, we have
\begin{align*}
    \chi(\overline{\Gr}^{N\omega_1}_{\GL_K},S^{\bullet}_t(\mathcal V\otimes W^*)\otimes \mathcal O(n))=\chi\left(\Gr^{\omega_1}_{\GL_K}\widetilde{\times} \:\overline{\Gr}^{(N-1)\omega_1}_{\GL_K},S^{\bullet}_t(m^*\mathcal V\otimes W^*)\otimes \mathcal O(n)\right).
\end{align*}
Let us write $\mathcal V_N$ for $\mathcal V$ to indicate the rank of the gauge group.
\begin{lemma}\label{lemma: pullback of V}
There is a short exact sequence
\begin{align*}
    0\longrightarrow \widetilde {\mathcal V}_{N-1}\longrightarrow m^*\mathcal V_N \longrightarrow p^*\mathcal V_1\longrightarrow 0,
\end{align*}
where $p:\Gr^{\omega_1}_{\GL_K}\widetilde{\times}\: \overline{\Gr}^{(N-1)\omega_1}_{\GL_K}\to \overline{\Gr}^{N\omega_1}_{\GL_K}$ is the projection to the first component map, and $\widetilde {\mathcal V}_{N-1}$ is the sheaf $\GL_K(\mathscr K) \overset{\GL_n(\mathscr O)}{\times} {\mathcal V}_{N-1}$.
\end{lemma}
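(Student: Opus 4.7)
The plan is to interpret the convolution Grassmannian moduli-theoretically and then realize the desired short exact sequence as the obvious three-term filtration of the universal quotient attached to a flag of lattices.

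First, I would recall the moduli description of the convolution product. The space $\Gr^{\omega_1}_{\GL_K}\widetilde{\times} \overline{\Gr}^{(N-1)\omega_1}_{\GL_K}$ parametrizes flags of $\mathcal O$-lattices
\[
\mathcal O^{\oplus K}\supset L_1 \supset L, \qquad \operatorname{length}(\mathcal O^{\oplus K}/L_1)=1,\ \operatorname{length}(L_1/L)=N-1,
\]
so it carries three tautological subsheaves of $\mathcal K^{\oplus K}$, namely $\mathcal O^{\oplus K}$, $L_1$ and $L$. Under this description, the projection to the first factor is $p:(L_1,L)\mapsto L_1$ and the convolution map is $m:(L_1,L)\mapsto L$.

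Next, I would write down the tautological short exact sequence
\[
0 \longrightarrow L_1/L \longrightarrow \mathcal O^{\oplus K}/L \longrightarrow \mathcal O^{\oplus K}/L_1 \longrightarrow 0.
\]
Here $\mathcal O^{\oplus K}/L$ is the pullback of the universal quotient on $\overline{\Gr}^{N\omega_1}_{\GL_K}$ along $m$, hence equals $m^*\mathcal V_N$; likewise $\mathcal O^{\oplus K}/L_1$ is $p^*\mathcal V_1$ by definition of $\mathcal V_1$ on $\Gr^{\omega_1}_{\GL_K}$. So after identifying $L_1/L$ with $\widetilde{\mathcal V}_{N-1}$ the sequence is exactly the one claimed.

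The nontrivial identification is $L_1/L\cong \widetilde{\mathcal V}_{N-1}$, and this is where the twist enters. The point is that over a fixed $L_1\in \Gr^{\omega_1}_{\GL_K}$, choosing a sublattice $L\subset L_1$ with $L_1/L$ of length $N-1$ is the same as picking a point of $\overline{\Gr}^{(N-1)\omega_1}_{\GL_K}$, but only after trivializing $L_1\cong \mathcal O^{\oplus K}$ as a free $\mathcal O$-module; the trivialization is unique up to the action of $\GL_K(\mathcal O)$. Globally this realizes the convolution space as the $\GL_K(\mathcal O)$-associated bundle $\GL_K(\mathcal K)\times^{\GL_K(\mathcal O)} \overline{\Gr}^{(N-1)\omega_1}_{\GL_K}$, and the sheaf $L_1/L$ is precisely the associated-bundle construction $\GL_K(\mathcal K)\times^{\GL_K(\mathcal O)} \mathcal V_{N-1}=\widetilde{\mathcal V}_{N-1}$ applied to the tautological quotient on the second factor. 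The main subtlety is this last identification: one has to check that the natural map from the associated-bundle sheaf to $L_1/L$ is an isomorphism, which is done fiberwise and reduces to the tautological fact that in any local trivialization $L_1\cong \mathcal O^{\oplus K}$ the sheaf $L_1/L$ coincides with $\mathcal V_{N-1}$ for the corresponding point of $\overline{\Gr}^{(N-1)\omega_1}_{\GL_K}$, together with $\GL_K(\mathcal O)$-equivariance. Once this identification is in place, the tautological three-term sequence above is the sought-for short exact sequence.
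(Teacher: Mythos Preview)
Your proposal is correct and follows the same approach as the paper: both identify $m^*\mathcal V_N$ with $\mathcal O^{\oplus K}/L$ for the universal flag $\mathcal O^{\oplus K}\supset L_1\supset L$ on the convolution Grassmannian, and read off the short exact sequence from the filtration, with the twist arising because $\mathcal V_{N-1}$ is computed relative to $L_1$ rather than the standard lattice. The paper's proof is a terse two-sentence version of exactly this argument, so your write-up is simply a more detailed unpacking of the same idea.
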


\begin{proof}
$\mathcal V_N$ is the universal quotient of $\mathbb C[\![z]\!]^{\oplus K}$. Denote the kernel by $L_N$. Then the pullback of $\mathcal V_N$ to the twisted product $\Gr^{\omega_1}_{\GL_K}\widetilde{\times}\: \overline{\Gr}^{(N-1)\omega_1}_{\GL_K}$ is by definition the extension of $\mathcal V_1$ by $\mathcal V_{N-1}$, except that in the definition of $\mathcal V_{N-1}$ the free module $\mathbb C[\![z]\!]^{\oplus K}$ is replaced by $L_1$ (this is the meaning of twist).
\end{proof}

Note that $\mathcal V_1$ is of rank one, so it is by definition the determinant line bundle $\mathcal O(1)$ on the affine Grassmannian $\Gr_{\GL_K}$ restricted on $\Gr^{\omega_1}_{\GL_K}$. The convolution map easily generalizes to multiple copies of $\Gr_{\GL_K}$:
\begin{align*}
    m: \Gr^{\omega_1}_{\GL_K}\widetilde{\times} \cdots\widetilde{\times} \Gr^{\omega_1}_{\GL_K}\longrightarrow \overline{\Gr}^{\omega_1}_{\GL_K},
\end{align*}
and we can apply Lemma \ref{lemma: pullback of V} recursively and see that $m^*\mathcal V_N$ is a consecutive extension of (twisted) $\mathcal O(1)$. Since we only care about the Euler character, we can forget about the extension structure and focus on the $K$-theory class, in other words, we have:
\begin{align}
    \begin{split}
        &\chi(\overline{\Gr}^{N\omega_1}_{\GL_K},S^{\bullet}_t(\mathcal V\otimes W^*)\otimes \mathcal O(n))\\
        =&\chi(\Gr^{\omega_1}_{\GL_K}\widetilde{\times} \cdots\widetilde{\times} \Gr^{\omega_1}_{\GL_K},S^{\bullet}_t((\mathcal O(1)+\widetilde {\mathcal O(1)}+\cdots+\widetilde {\mathcal O(1)})\otimes W^*)\otimes (\mathcal O(n)\widetilde{\boxtimes}\cdots \widetilde{\boxtimes}\mathcal O(n)))\\
        =&\sum_{\mu}t^{|\mu|}\chi(\Gr^{\omega_1}_{\GL_K}\widetilde{\times} \cdots\widetilde{\times} \Gr^{\omega_1}_{\GL_K},\mathcal O(\mu_1+n)\widetilde{\boxtimes}\cdots \widetilde{\boxtimes}\mathcal O(\mu_N+n))\chi(S^{\mu_1}(W^*))\cdots\chi(S^{\mu_N}(W^*)).
    \end{split}
\end{align}
Here the summation is over arrays $\mu=(\mu_1,\cdots,\mu_N)\in \mathbb Z_{\ge 0}^N$, $|\mu|=\sum_{i=1}^N\mu_i$, and $\chi(S^{k}(W^*))$ is the $\GL_K$-equivariant $K$-theory class of the $k$-th symmetric tensor product of $W^*$, where $W$ is the fundamental representation of $\GL_K$. It is well-known that $\chi(S^{k}(W^*))=h_{k}(x^{-1})$, where $h_{k}(x)$ is the complete homogeneous symmetric polynomial of degree $k$. Finally, the remaining part of the computation, which is the character of $\mathcal O(\mu_1+n)\widetilde{\boxtimes}\cdots \widetilde{\boxtimes}\mathcal O(\mu_N+n)$, is related to a well-understood family of symmetric functions, the transformed Hall--Littlewood polynomial. In fact we have
\begin{align}
    \chi(\Gr^{\omega_1}_{\GL_K}\widetilde{\times} \cdots\widetilde{\times} \Gr^{\omega_1}_{\GL_K},\mathcal O(\mu_1+n)\widetilde{\boxtimes}\cdots \widetilde{\boxtimes}\mathcal O(\mu_N+n))=H_{\mu+(n^N)}(x;q).
\end{align}
where $H_{\mu}(x;q)$ is the generalized transformed Hall--Littlewood polynomial of the array $\mu+(n^N)$, see \eqref{definition: transformed HL} for definition. For the derivation of this formula, see Corollary \ref{corollary: character of convolution} in the appendix.

\subsection{\texorpdfstring{$N\to \infty$}{N to infinity} limit} Recall that $\mathbb C[\mathcal{M}(\infty,K)]$ is the subalgebra of $\underset{\substack{\longleftarrow\\N}}{\lim}\: \mathbb C[\mathcal M(N,K)]$ generated by $T^{(n)}_{ab}$ and $\Tr(B^m)$, for all $n,m\in \mathbb Z_{\ge 0}$ and $1\le a,b\le K$ (Definition \ref{defn: large N limit of M(N,K)}).

\begin{lemma}\label{lemma: eigenvectors in the limit}
$\mathbb C[\mathcal{M}(\infty,K)]$ contains all $T\times \mathbb C^{\times}_q\times \mathbb C^{\times}_t$ eigenvectors in $\underset{\substack{\longleftarrow\\N}}{\lim}\: \mathbb C[\mathcal M(N,K)]$, where $T\subset \GL_K$ is the maximal torus.
\end{lemma}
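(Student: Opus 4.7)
The plan is to reduce the lemma to a dimension count on weight subspaces. For any weighted $\mathbb C$-algebra $A$ appearing below, let $A_w$ denote its weight-$w$ subspace, where $w=(\lambda,d_q,d_t)$ is a fixed triple; the lemma is equivalent to showing that for every such $w$, the natural inclusion
\begin{equation*}
\mathbb C[\mathcal{M}(\infty,K)]_w \hookrightarrow \bigl(\varprojlim_N \mathbb C[\mathcal M(N,K)]\bigr)_w
\end{equation*}
is an isomorphism, since a $T\times \mathbb C^\times_q\times \mathbb C^\times_t$-eigenvector is exactly an element of some $A_w$.

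The first step would be to exploit the free generator description of $\mathbb C[\mathcal{M}(\infty,K)]$ recalled in the paragraph preceding the lemma: it is freely generated as a commutative ring by the $q$-homogeneous elements $\overline{\psi}_aB^n\psi_{b}$ (of $q$-weight $n$ and $t$-weight $1$) and $\mathrm{Tr}(B^m)$ (of $q$-weight $m$ and $t$-weight $0$). Since only finitely many of these generators have $q$-weight $\le d_q$, the space $\mathbb C[\mathcal{M}(\infty,K)]_w$ is finite-dimensional of some dimension $D_w$, equal to the $x^\lambda q^{d_q}t^{d_t}$-coefficient of the Hilbert series of $\mathbb C[L^-\GL_K]\otimes \mathbb C[L^-\GL_1]$.

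Next I would show that $\dim \mathbb C[\mathcal M(N,K)]_w = D_w$ for all $N$ exceeding some bound $N_0(w)$. The sequence $\bigl\{\dim \mathbb C[\mathcal M(N,K)]_w\bigr\}_N$ is monotone nondecreasing because the transition maps $\iota_{N-1,N}^\ast$ are weight-preserving surjective ring homomorphisms, and it is uniformly bounded above by $D_w$ because the algebra surjection $\mathbb C[\mathcal{M}(\infty,K)]\twoheadrightarrow \mathbb C[\mathcal M(N,K)]$ (sending the free generators above to the generators at level $N$) is weight-preserving and surjective. Stabilization to $D_w$ then follows from the coefficient-wise convergence $Z_{N,K}(x;q,t)\to Z_{\infty,K}(x;q,t)$ as $N\to\infty$, which is forced by the explicit formula of Theorem \ref{thr:the Hilbert series of C[M(N,K)]}: the prefactor $1/(q;q)_N$ stabilizes modulo $q^{d_q+1}$, and only finitely many arrays $\underline\mu$ (those with $|\underline\mu|\le d_t$) contribute to any fixed $(q,t)$-bidegree, so every coefficient of $Z_{N,K}$ eventually becomes independent of $N$.

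Granting the stabilization, the rest is bookkeeping with finite-dimensional surjections. For $N\ge N_0(w)$, the surjective transition maps on weight-$w$ subspaces are surjections between equidimensional spaces, hence isomorphisms, so $\bigl(\varprojlim_{N'}\mathbb C[\mathcal M(N',K)]\bigr)_w\cong \mathbb C[\mathcal M(N,K)]_w$ has dimension $D_w$. Likewise $\mathbb C[\mathcal{M}(\infty,K)]_w\twoheadrightarrow \mathbb C[\mathcal M(N,K)]_w$ is a surjection between equidimensional spaces of dimension $D_w$ and hence an isomorphism. Composing these two identifications yields the desired isomorphism $\mathbb C[\mathcal{M}(\infty,K)]_w\cong\bigl(\varprojlim_N \mathbb C[\mathcal M(N,K)]\bigr)_w$. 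The only non-routine step is the Hilbert-series convergence used to pin down the stable dimension; everything else is a chase of surjections between finite-dimensional vector spaces.
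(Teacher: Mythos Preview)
Your dimension-counting strategy is sound and genuinely different from the paper's proof, but your route through Theorem~\ref{thr:the Hilbert series of C[M(N,K)]} is both unnecessary and not correctly justified as written.

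The unnecessary part: you already establish that the integers $\dim\mathbb C[\mathcal M(N,K)]_w$ are nondecreasing in $N$ and bounded above by $D_w$. A bounded nondecreasing sequence of nonnegative integers stabilizes, full stop; no Hilbert-series formula is needed. Once stabilization is known, your last paragraph goes through: for $N\ge N_0$ the projection $(\varprojlim)_w\to\mathbb C[\mathcal M(N,K)]_w$ is an isomorphism, and the composite $\mathbb C[\mathcal M(\infty,K)]_w\hookrightarrow(\varprojlim)_w\xrightarrow{\sim}\mathbb C[\mathcal M(N,K)]_w$ is the usual surjection, hence the inclusion is onto.

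The incorrect part: your stated reason why Theorem~\ref{thr:the Hilbert series of C[M(N,K)]} forces stabilization (``only finitely many arrays $\underline\mu$ contribute'') is misleading. For fixed $|\underline\mu|=d_t$ the number of contributing arrays $\underline\mu\in\mathbb Z_{\ge 0}^N$ \emph{grows} with $N$ (the nonzero parts can sit in any of $N$ positions), and $H_{\underline\mu}$ depends on the positions, so the sum does not obviously freeze. Moreover, phrasing the conclusion as ``$Z_{N,K}\to Z_{\infty,K}$'' risks circularity: identifying the limit of the $Z_{N,K}$ with the Hilbert series of $\mathbb C[\mathcal M(\infty,K)]$ is exactly Proposition~\ref{prop:the Hilbert series in the large-N limit}, whose proof uses the lemma you are proving. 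You never actually need the limit to be $D_w$ a priori; the sandwich of the inclusion and the surjection forces it after the fact.

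For comparison, the paper's proof avoids dimension counting entirely. It uses the refined statement inside the proof of Theorem~\ref{theorem: density 2} that $L^-_n\GL_K\times L^-_n\GL_1\subset\mathcal M(N,K)$ for suitable $N$, which shows directly that $\ker\bigl(\mathbb C[\mathcal M(N',K)]\twoheadrightarrow\mathbb C[\mathcal M(N,K)]\bigr)$ has $q$-weight $>n$ for all $N'>N$; an eigenvector of $q$-weight $n$ is then matched with an explicit equivariant lift from $\mathbb C[L^-\GL_K\times L^-\GL_1]$. Your argument and the paper's both ultimately rest on the freeness of $\mathbb C[\mathcal M(\infty,K)]$ (equivalently, the density theorem), but you package it as a finite-dimensional squeeze while the paper constructs the lift by hand. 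Drop the appeal to the Hilbert-series formula and your version becomes a clean alternative.
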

\begin{proof}
We claim that for fixed $n\in\mathbb Z_{\ge 0}$, the dimension of $\mathbb C^{\times}_q$-weight $n$ space of $\mathbb C[\mathcal M(N,K)]$ stabilizes when $N\gg 0$, more precisely there exists $N$ such that for all $N'>N$ the kernel of $\mathbb C[\mathcal M(N',K)]\twoheadrightarrow \mathbb C[\mathcal M(N,K)]$ has $\mathbb C^{\times}_q$-weights $>n$. To see why this is true, we take $N$ such that $L^-_{n}\GL_K\times L^-_n\GL_1\subseteq\mathcal M(N,K)$ ($N$ can be $(n+1)K$ according to the proof of Theorem \ref{theorem: density 2}), then $\ker(\mathbb C[\mathcal M(N',K)]\twoheadrightarrow \mathbb C[\mathcal M(N,K)])$ is a subquotient of $\ker(\mathbb C[L^-\GL_K\times L^-\GL_1]\twoheadrightarrow \mathbb C[L^-_{n}\GL_K\times L^-_n\GL_1])$, and the latter is an ideal generated by elements of $\mathbb C^{\times}_q$-weights greater than $n$. 

Now assume that $a\in \underset{\substack{\longleftarrow\\N}}{\lim}\: \mathbb C[\mathcal M(N,K)]$ is a $T\times \mathbb C^{\times}_q\times \mathbb C^{\times}_t$ eigenvector, and let its $\mathbb C^{\times}_q$ be $n$. Then there exists $N$ such that for all $N'>N$ the kernel of $\mathbb C[\mathcal M(N',K)]\twoheadrightarrow \mathbb C[\mathcal M(N,K)]$ has $\mathbb C^{\times}_q$-weights greater than $n$. Consider the image of $a$ in $\mathbb C[\mathcal M(N,K)]$, denoted by $\overline{a}$, and take a $T\times \mathbb C^{\times}_q\times \mathbb C^{\times}_t$-equivariant lift of $\overline{a}$ along the projection $\mathbb C[L^-\GL_K\times L^-\GL_1]\twoheadrightarrow \mathbb C[\mathcal M(N,K)]$, and we denote the lift by $a'$, then $a-a'$ has $\mathbb C^{\times}_q$-weight $n$ and is zero in $\mathbb C[\mathcal M(N,K)]$, thus $a-a'$ is in the kernel of $\mathbb C[\mathcal M(N',K)]\twoheadrightarrow \mathbb C[\mathcal M(N,K)]$ for all $N'>N$, which forces $a=a'$ in $\mathbb C[\mathcal M(N',K)]$ because of weight consideration, therefore $a=a'$ in $\underset{\substack{\longleftarrow\\N}}{\lim}\: \mathbb C[\mathcal M(N,K)]$.
\end{proof}

\begin{proposition}\label{prop:the Hilbert series in the large-N limit}
The Hilbert series of $\mathbb C[\mathcal{M}(\infty,K)]$ equals to the $N\to \infty$ limit of Hilbert series of $\mathbb C[\mathcal{M}(N,K)]$, i.e.
\begin{align}\label{eqn: Hilbert series infinity 1}
    \mathbb C[\mathcal{M}(\infty,K)]=\frac{1}{(q;q)_{\infty}}\underset{N\to \infty}{\lim}\sum_{\mu}t^{|\mu|}H_{\mu}(x;q)h_{\mu_1}(x^{-1})\cdots h_{\mu_N}(x^{-1})
\end{align}
\end{proposition}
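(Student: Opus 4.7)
The plan is to combine Lemma \ref{lemma: eigenvectors in the limit} with the explicit formula from Theorem \ref{thr:the Hilbert series of C[M(N,K)]}, passing to the $N\to\infty$ limit weight by weight.

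First I would show that the Hilbert series of $\mathbb C[\mathcal M(\infty,K)]$ coincides with the formal character of the inverse limit $\underset{\substack{\longleftarrow\\N}}{\lim}\:\mathbb C[\mathcal M(N,K)]$, regarded as an element of $K_{\GL_K}(\mathrm{pt})[\![q,t]\!]$. This is essentially the content of Lemma \ref{lemma: eigenvectors in the limit}: the inclusion $\mathbb C[\mathcal M(\infty,K)]\hookrightarrow \underset{\substack{\longleftarrow\\N}}{\lim}\:\mathbb C[\mathcal M(N,K)]$ restricts to an isomorphism on every joint $T\times \mathbb C^{\times}_q\times \mathbb C^{\times}_t$-eigenspace, so the weight multiplicities agree and the Hilbert series on the two sides coincide.

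Next, I would identify the character of the inverse limit with the termwise $N\to\infty$ limit of $Z_{N,K}(x;q,t)$. The proof of Lemma \ref{lemma: eigenvectors in the limit} gives the needed stabilization: for any fixed $\mathbb C^{\times}_q$-weight $n$, once $N_0\ge (n+1)K$, the surjection $\mathbb C[\mathcal M(N',K)] \twoheadrightarrow \mathbb C[\mathcal M(N_0,K)]$ is an isomorphism on the weight-$n$ subspace for every $N'>N_0$, and hence the same holds on every joint $(n,m,\alpha)$-isotypic component. Consequently, for each monomial $q^n t^m x^\alpha$ the coefficient in $Z_{N,K}(x;q,t)$ is eventually constant in $N$, so the termwise limit $\lim_{N\to\infty}Z_{N,K}(x;q,t)$ exists in $K_{\GL_K}(\mathrm{pt})[\![q,t]\!]$ and equals the Hilbert series of $\mathbb C[\mathcal M(\infty,K)]$.

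Finally, I would plug in the closed-form $Z_{N,K}$ from Theorem \ref{thr:the Hilbert series of C[M(N,K)]} and observe that the only $N$-dependent pieces are the prefactor $\frac{1}{(q;q)_N}$, which has termwise limit $\frac{1}{(q;q)_\infty}$ in $\mathbb C[\![q]\!]$ and is a unit, and the range $\underline{\mu}\in \mathbb Z_{\ge 0}^N$ of the sum; pulling the convergent prefactor out of the limit then yields \eqref{eqn: Hilbert series infinity 1}. I expect the main technical point to be articulating the stabilization of the remaining sum itself: for a fixed monomial $q^n t^m x^\alpha$ only arrays with $|\underline\mu|\le m$ can contribute, but the number of ways to embed such $\underline\mu$ into $\mathbb Z_{\ge 0}^N$ grows with $N$, so termwise stability of the sum hinges on how $H_{\underline{\mu}}(x;q)$ behaves when zero entries are appended to $\underline\mu$. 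Step two guarantees the termwise limit exists \emph{a priori}, so this reduces to a compatibility check on the symmetric function side rather than a genuine obstacle.
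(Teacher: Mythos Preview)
Your proposal is correct and follows essentially the same route as the paper: both arguments invoke Lemma \ref{lemma: eigenvectors in the limit} to identify the weight spaces of $\mathbb C[\mathcal M(\infty,K)]$ with those of the inverse limit, and then read off the Hilbert series as the termwise limit of $Z_{N,K}$. The paper's proof is a single sentence and does not spell out the stabilization or the factoring of the $(q;q)_N$ prefactor, whereas you make these explicit; your final remark about appending zeros to $\underline\mu$ is a legitimate bookkeeping issue but, as you note, is resolved \emph{a priori} by the stabilization in step two (equivalently, multiply the known stable limit $Z_{N,K}$ by the stable unit $(q;q)_N$ to see that the sum itself stabilizes).
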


\begin{proof}
The $N\to \infty$ limit of Hilbert series of $\mathbb C[\mathcal{M}(N,K)]$ enumerates $T\times \mathbb C^{\times}_q\times \mathbb C^{\times}_t$ eigenvectors in $\underset{\substack{\longleftarrow\\N}}{\lim}\: \mathbb C[\mathcal M(N,K)]$, which is the same as $T\times \mathbb C^{\times}_q\times \mathbb C^{\times}_t$ eigenvectors in $\mathbb C[\mathcal{M}(\infty,K)]$, by Lemma \ref{lemma: eigenvectors in the limit}.
\end{proof}

On the other hand, $\mathbb C[\mathcal{M}(\infty,K)]$ is freely generated by $I_aB^nJ_{b},\Tr(B^m)$, which makes its Hilbert series easily computed by
\begin{align}\label{eqn: Hilbert series infinity 2}
    \mathrm{PE}\left((t+tq+tq^2+\cdots)\chi(\mathfrak{gl}_K)\right)\mathrm{PE}(q+q^2+\cdots).
\end{align}
Here $\chi(\mathfrak{gl}_K)$ is the character of the adjoint representation of $\GL_K$, and $\mathrm{PE}$ is the plethestic exponential. Note that $\chi(\mathfrak{gl}_K)$ can be written as a symmetric function $1+\frac{s_{\lambda_{\mathrm{ad}}}(x)}{m_K(x)}$, where $\lambda_{\mathrm{ad}}$ is the Young tableaux corresponding to the adjoint representation of $\SL_K$, and $s_{\lambda_{\mathrm{ad}}}(x)$ is the Schur function associated to $\lambda_{\mathrm{ad}}$, and $m_K(x)=x_1x_2\cdots x_K$. Moreover,
\begin{align*}
    \mathrm{PE}(q+q^2+\cdots)=\prod_{i=1}^{\infty}\frac{1}{1-q^i}=\frac{1}{(q;q)_{\infty}}.
\end{align*}
Compare equation \ref{eqn: Hilbert series infinity 1} with \ref{eqn: Hilbert series infinity 2}, we get the following interesting equation, which we do not know other way to prove.
\begin{corollary}
\begin{align}\label{eqn: large N limit of Hilbert series}
    \underset{N\to \infty}{\lim} \sum_{\mu}t^{|\mu|}H_{\mu}(x;q)h_{\mu_1}(x^{-1})\cdots h_{\mu_N}(x^{-1})=\mathrm{PE}\left(\frac{t}{1-q}\left(1+\frac{s_{\lambda_{\mathrm{ad}}}(x)}{m_K(x)}\right)\right).
\end{align}
Here $\lambda_{\mathrm{ad}}$ is the Young tableaux corresponding to the adjoint representation of $\SL_K$, and $s_{\lambda_{\mathrm{ad}}}(x)$ is the Schur function associated to $\lambda_{\mathrm{ad}}$, $m_K(x)=x_1x_2\cdots x_K$, and $\mathrm{PE}$ is the plethestic exponential.
\end{corollary}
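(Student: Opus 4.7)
The plan is to compute the Hilbert series of $\mathbb C[\mathcal M(\infty,K)]$ in two distinct ways and equate them. Because both computations have already been set up in the preceding passages (one via affine Grassmannian geometry plus the large-$N$ limit of Proposition \ref{prop:the Hilbert series in the large-N limit}, and one via direct enumeration of free generators), the corollary reduces to a side-by-side comparison.

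First, I would invoke Proposition \ref{prop:the Hilbert series in the large-N limit}, which identifies the Hilbert series of $\mathbb C[\mathcal M(\infty,K)]$ with the $N\to\infty$ limit of the closed formula from Theorem \ref{thr:the Hilbert series of C[M(N,K)]}, producing the left-hand side of the desired equation multiplied by the overall prefactor $1/(q;q)_\infty$. This is equation \eqref{eqn: Hilbert series infinity 1}, and no further work is needed here.

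Second, I would compute the Hilbert series from the free-generator presentation. By Definition \ref{defn: large N limit of M(N,K)} and the automorphism $\tau:(B,\psi,\overline\psi)\mapsto(B+\psi\overline\psi,\psi,\overline\psi)$, the algebra $\mathbb C[\mathcal M(\infty,K)]$ is freely generated by the elements $\overline{\psi}_aB^n\psi_b$ for $n\ge 0,\,1\le a,b\le K$, and $\Tr(B^m)$ for $m\ge 1$. Tracking the $\GL_K\times\mathbb C^{\times}_q\times \mathbb C^{\times}_t$-weights, the first family contributes $\mathrm{PE}\bigl((t+tq+tq^2+\cdots)\,\chi(\mathfrak{gl}_K)\bigr)$ (since as a $\GL_K$-module the span of $\{\overline\psi_a\psi_b\}_{a,b}$ is the adjoint representation $\mathfrak{gl}_K$), while the second family contributes $\mathrm{PE}(q+q^2+\cdots)=1/(q;q)_\infty$. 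Rewriting $\chi(\mathfrak{gl}_K)=1+s_{\underline\lambda_{\mathrm{ad}}}(x)/h_K(x)$ and summing $t+tq+\cdots = t/(1-q)$ yields equation \eqref{eqn: Hilbert series infinity 2}.

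Finally, setting the two expressions equal and cancelling the common factor $1/(q;q)_\infty=\mathrm{PE}(q+q^2+\cdots)$ from both sides gives the claimed identity \eqref{eqn: large N limit of Hilbert series}. There is essentially no obstacle, since both steps only use results already established in the paper; the only mild point to justify carefully is that taking $N\to\infty$ commutes with the weight-space decomposition on the left-hand side (so that termwise comparison in $K_{\GL_K}(\mathrm{pt})[\![q,t]\!]$ is legitimate), and this is exactly the content of Lemma \ref{lemma: eigenvectors in the limit} together with the stabilization argument used in the proof of Proposition \ref{prop:the Hilbert series in the large-N limit}.
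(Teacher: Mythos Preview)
Your proposal is correct and follows essentially the same argument as the paper: compute the Hilbert series of $\mathbb C[\mathcal M(\infty,K)]$ once via Proposition~\ref{prop:the Hilbert series in the large-N limit} (giving \eqref{eqn: Hilbert series infinity 1}) and once via the free-generator description (giving \eqref{eqn: Hilbert series infinity 2}), then cancel the common factor $1/(q;q)_\infty$. The paper states this comparison explicitly just before the corollary and remarks that no other proof is known.
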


\section{Quantization of \texorpdfstring{$\cM(N,K)$}{}}
\label{sec:quanization of phase space}

In this section we study the quantization of $\cM(N,K)$, namely we quantize the Poisson structure \eqref{Poisson before quotient} to the commutation relation:
\begin{align}\label{commutation relation}
    [J_{ia},I_{bj}]=\hbar\delta_{ab}\delta_{ij},\; [B_{mn},B_{pq}]=\hbar(\delta_{pn}B_{mq}-\delta_{mq}B_{pn}),\;[B_{mn},I_{bj}]=[B_{mn},J_{ia}]=0.
\end{align}
This is the algebra $U_{\hbar}(\mathfrak{gl}_N)\otimes \mathrm{Weyl}^{\otimes NK}_{\hbar}$, and we define the quantized ring of functions on the phase space $\mathbb C_{\hbar}[\cM(N,K)]$ by the invariant part $(U_{\hbar}(\mathfrak{gl}_N)\otimes \mathrm{Weyl}^{\otimes NK}_{\hbar})^{\GL_N}$. Since $\GL_N$ is reductive, we have $\mathbb C_{\hbar}[\cM(N,K)]/(\hbar)=\mathbb C[\cM(N,K)]$. Note that $\mathbb C_{\hbar}[\cM(N,K)]$ possesses a natural grading by setting
\begin{align}\label{grading}
    \deg(J)=0,\deg(I)=1,\deg(B)=1,\deg(\hbar)=1.
\end{align}

\begin{lemma}\label{lemma: flatness}
$\mathbb C_{\hbar}[\cM(N,K)]$ is flat over $\mathbb C[\hbar]$.
\end{lemma}

\begin{proof}
Since $U_{\hbar}(\mathfrak{gl}_N)\otimes \mathrm{Weyl}^{\otimes NK}_{\hbar}$ is flat over $\mathbb C[\hbar]$, the subalgebra $\mathbb C_{\hbar}[\cM(N,K)]$ is $\hbar$-torsion free, thus it is also flat over $\mathbb C[\hbar]$.
\end{proof}

\begin{remark}
On the moduli of $\zeta$-stable representations $\cM^\zeta(N,K)$, there is a notion of quantized structure sheaf. Namely, consider the completion of $U_{\hbar}(\mathfrak{gl}_N)\otimes \mathrm{Weyl}^{\otimes NK}_{\hbar}$ in the $\hbar$-adic topology, then localize this algebra in the Zariski topology of the affine space $\mathrm{Rep}(N,K)$, and by taking $\GL_N$-invariant on the open locus of stable representations $\mathrm{Rep}(N,K)^{\zeta-s}$, we get a sheaf of flat $\mathbb C[\![\hbar]\!]$-algebras on $\cM^\zeta(N,K)$, denoted by $\widehat{\mathcal O}_{\cM^\zeta(N,K)}$. By construction we have $\widehat{\mathcal O}_{\cM^\zeta(N,K)}/(\hbar)=\mathcal O_{\cM^\zeta(N,K)}$. This sheaf is related to $\mathbb C_{\hbar}[\cM(N,K)]$ as follows. By construction we have a natural homomorphism of algebras $\mathbb C_{\hbar}[\cM(N,K)]\to \Gamma(\cM^\zeta(N,K),\widehat{\mathcal O}_{\cM^\zeta(N,K)})$, which preserves the grading \eqref{grading}. On the other hand, by Proposition \ref{prop: res of sing}, we have
\begin{itemize}
    \item $H^i(\cM^\zeta(N,K),\widehat{\mathcal O}_{\cM^\zeta(N,K)})=0$, for $i>0$.
    \item $\Gamma(\cM^\zeta(N,K),\widehat{\mathcal O}_{\cM^\zeta(N,K)})$ is a flat $\mathbb C[\![\hbar]\!]$-algebra, and $\Gamma(\cM^\zeta(N,K),\widehat{\mathcal O}_{\cM^\zeta(N,K)})/(\hbar)=\mathbb C[\cM(N,K)]$.
\end{itemize}
Let us give a $\bZ_{\ge 0}$-grading to $U_{\hbar}(\mathfrak{gl}_N)\otimes \mathrm{Weyl}^{\otimes NK}_{\hbar}$ by setting
\begin{align*}
    \deg I=\deg J=1, \;\deg B=\deg \hbar=2.
\end{align*}
Since $\mathbb C_{\hbar}[\cM(N,K)]$ is generated by positive degree elements, we conclude that $\mathbb C_{\hbar}[\cM(N,K)]$ is naturally identified with the $\bC[\hbar]$-subalgebra of $\Gamma(\cM^\zeta(N,K),\widehat{\mathcal O}_{\cM^\zeta(N,K)})$ generated by homogeneous elements.
\end{remark}

The set of elements $$\{\mathrm{Tr}(B^n),T^{(m)}_{ab}:=I_aB^nJ_b: 1\le n\le N,0\le m\le N-1,1\le a,b\le K\}$$ generates $\mathbb C[\cM(N,K)]$ (\S\ref{subsec:generators}), so it also generate $\mathbb C_{\hbar}[\cM(N,K)]$. It is a straightforward computation to see that $\mathrm{Tr}(B^k)$ commutes with all elements in $U_{\hbar}(\mathfrak{gl}_N)\otimes \mathrm{Weyl}^{\otimes NK}_{\hbar}$, therefore $\mathrm{Tr}(B^k)$ is central in $\mathbb C_{\hbar}[\cM(N,K)]$. We use $T_{ab}(z)$ to denote the following power series:
\begin{align*}
    T_{ab}(z):=\sum_{n\ge -1} T^{(n)}_{ab}z^{-n-1}=\delta_{ab}+I_a\frac{1}{z-B}J_b.
\end{align*}
\begin{proposition}\label{prop: commutator}
The commutators between $T_{ab}^{(k)}$ are:
\begin{align}\label{eqn: commutators}
    [T^{(p)}_{ab},T^{(q)}_{cd}]=\hbar\sum_{i=-1}^{\min(p,q)-1}\left(T^{(i)}_{cb}T^{(p+q-1-i)}_{ad}-T^{(p+q-1-i)}_{cb}T^{(i)}_{ad}\right).
\end{align}
\end{proposition}

\begin{proof}
It is easy to see that \eqref{eqn: commutators} is equivalent to
\begin{align*}
    [T^{(p+1)}_{ab},T^{(q)}_{cd}]-[T^{(p)}_{ab},T^{(q+1)}_{cd}]=\hbar\left(T^{(p)}_{cb}T^{(q)}_{ad}-T^{(q)}_{cb}T^{(p)}_{ad}\right).
\end{align*}
We compute the left-hand-side of the above equation:
\begin{align*}
    &[T^{(p+1)}_{ab},T^{(q)}_{cd}]-[T^{(p)}_{ab},T^{(q+1)}_{cd}]=I_{am}I_{cr}\left([(B^{p+1})_{mn},(B^{q})_{rs}]-[(B^{p})_{mn},(B^{q+1})_{rs}]\right)J_{bn}J_{ds}\\
    &~=\hbar I_{am}I_{cr}\left(\sum_{i=1}^q(B^{i-1})_{rn}(B^{p+1+q-i})_{ms}-(B^{i+p})_{rn}(B^{q-i})_{ms}\right)J_{bn}J_{ds}\\
    &~-\hbar I_{am}I_{cr}\left(\sum_{i=1}^{q+1}(B^{i-1})_{rn}(B^{p+1+q-i})_{ms}-(B^{i+p-1})_{rn}(B^{q+1-i})_{ms}\right)J_{bn}J_{ds}\\
    &~=\hbar I_{am}I_{cr}\left((B^p)_{rn}(B^q)_{ms}-(B^q)_{rn}(B^p)_{ms}\right)J_{bn}J_{ds},
\end{align*}
which equals to the right-hand-side.
\end{proof}

\begin{remark}\label{Remark: RTT}
The commutators \eqref{eqn: commutators} is equivalent to the RTT equation
\begin{align}\label{eqn: Poisson structure of RTT 2}
    [T_{ab}(u),T_{cd}(v)]=\frac{\hbar}{u-v}\left(T_{cb}(u)T_{ad}(v)-T_{cb}(v)T_{ad}(u)\right).
\end{align}
\end{remark}

The classical embedding $\cM(L,K)\hookrightarrow \cM(N,K)$ for $L<N$ can be quantized as follows. Consider the left ideal of $U_{\hbar}(\mathfrak{gl}_N)\otimes \mathrm{Weyl}^{\otimes NK}_{\hbar}$ generated by $B_{ij}$ and $J_{ia}$ for all $L<i,j\le N$ and $1\le a\le K$, denote it by $I^0_{L,N}$, then $I_{L,N}:=\left(U_{\hbar}(\mathfrak{gl}_N)\otimes \mathrm{Weyl}^{\otimes NK}_{\hbar}\right)^{\GL_{N-L}}\cap I^0_{L,N}$ is a two-sided ideal in $\left(U_{\hbar}(\mathfrak{gl}_N)\otimes \mathrm{Weyl}^{\otimes NK}_{\hbar}\right)^{\GL_{N-L}}$, where $\GL_{N-L}$ acts on indices $L<i,j\le N$. It is easy to see that 
\begin{align*}
    \left(U_{\hbar}(\mathfrak{gl}_N)\otimes \mathrm{Weyl}^{\otimes NK}_{\hbar}\right)^{\GL_{N-L}}=\left(U_{\hbar}(\mathfrak{gl}_L)\otimes \mathrm{Weyl}^{\otimes LK}_{\hbar}\right)\oplus I_{L,N}
\end{align*}
as vector spaces, thus $U_{\hbar}(\mathfrak{gl}_L)\otimes \mathrm{Weyl}^{\otimes LK}_{\hbar}=\left(U_{\hbar}(\mathfrak{gl}_N)\otimes \mathrm{Weyl}^{\otimes NK}_{\hbar}\right)^{\GL_{N-L}}/I_{L,N}$. Restricting to the subalgebra $\mathbb C_{\hbar}[\cM(N,K)]=\left(U_{\hbar}(\mathfrak{gl}_N)\otimes \mathrm{Weyl}^{\otimes NK}_{\hbar}\right)^{\GL_{N}}$, we get a map between graded $\bC[\hbar]$ algebras
\begin{align}\label{eq:quantized truncation map}
\begin{split}
&\tau^N_L:\mathbb C_{\hbar}[\cM(N,K)]\to \mathbb C_{\hbar}[\cM(L,K)],\\
\tau^N_L&(T(u))=T(u),\quad \tau^N_L(\Tr(B^n))=\Tr(B^n).
\end{split}
\end{align}
which quantizes the embedding $\cM(L,K)\hookrightarrow \cM(N,K)$. We note that $\tau^N_L$ is surjective.

\begin{definition}
Let $Y_{\hbar}(\mathfrak {gl}_K)$ be the Yangian of $\fgl_K$ with generators $\{\mathfrak T_{ab}(u):=\delta_{ab}+\sum_{n=0}^\infty \mathfrak T^{(n)}_{ab}u^{-n-1}:1\le a,b\le K\}$ subject to the RTT relations
$$[\mathfrak T_{ab}(u),\mathfrak T_{cd}(v)]=\frac{\hbar}{u-v}\left(\mathfrak T_{cb}(u)\mathfrak T_{ad}(v)-\mathfrak T_{cb}(v)\mathfrak T_{ad}(u)\right).$$
Define $\Lambda$ as the polynomial ring of countably many generators $\bC[p_i:i\in \bZ_{\ge 1}]$, which is naturally identified with the ring of symmetric polynomials with $p_i$ being the $i$-th power sum function.
\end{definition}

\begin{proposition}\label{prop:the surjective map between Yangian and deformed ring of functions}
For every $N\in \bZ_{>0}$ there is a surjective map of $\bC[\hbar]$ algebras
\begin{align}
\begin{split}
&\rho_N:Y_\hbar(\fgl_K)\otimes \Lambda\twoheadrightarrow \mathbb C_{\hbar}[\cM(N,K)],\\
&\rho_N(\mathfrak{T}(u))=T(u),\quad \rho_N(p_n)=\Tr(B^n).
\end{split}
\end{align}
These maps are compatible in the sense that for all $N>L$ the diagram
\begin{center}
    \begin{tikzcd}
    Y_\hbar(\fgl_K)\otimes \Lambda \arrow[r,"\rho_N"] \arrow[dr,"\rho_L" '] & \mathbb C_{\hbar}[\cM(N,K)] \arrow[d,"\tau^N_L"]\\
    & \mathbb C_{\hbar}[\cM(L,K)]
    \end{tikzcd}
\end{center}
is commutative. Moreover, $\bigcap_{N=1}^{\infty}\ker\left(\rho_N\right)=0$.
\end{proposition}

\begin{proof}
According to Proposition \ref{prop: commutator}, $\rho_N$ is a $\bC[\hbar]$ algebra homomorphism. $\rho_N$ is surjective since it is surjective modulo $\hbar$. The compatibility with $\tau^N_L$ is clear from construction. The intersection of kernels is zero because $\mathbb C_{\hbar}[\cM(N,K)]$ is flat over $\mathbb C[\hbar]$ and the intersection of kernels modulo $\hbar$ is zero.
\end{proof}

\subsection{Another Map from \texorpdfstring{$Y_{\hbar}(\mathfrak {gl}_K)\otimes\Lambda$}{Yangian} to \texorpdfstring{$\mathbb C_{\hbar}[\cM(N,K)]$}{Chbar[M(N,K)]}}\label{sec:another map between Yangian and deformed ring of functions}

Recall that $\cM(N,K)$ is isomorphic to the Hamiltonian reduction description $\widetilde{\mathrm{Rep}}(N,K)/\!/\!/ \GL_N$, see \S\ref{subsec:Hamiltonian reduction}. This admits quantization. Namely, consider the algebra $U_\hbar(\mathfrak{gl}_N)\otimes U_{\hbar}(\mathfrak{gl}_N)\otimes \mathrm{Weyl}^{\otimes NK}_{\hbar}$ whose generators will be denoted as $(B,\widetilde{B}, J,I)$ with commutation relations \eqref{commutation relation} and
\begin{align}
    [\widetilde B_{ij},\widetilde B_{kl}]=\hbar(\delta_{il}\widetilde B_{kj}-\delta_{kj}\widetilde B_{il}).
\end{align}
\begin{definition}
The quantum moment map $\mu:\mathfrak{gl}_N\to U_{\hbar}(\mathfrak{gl}_N)\otimes U_{\hbar}(\mathfrak{gl}_N)\otimes \mathrm{Weyl}^{\otimes NK}_{\hbar}$ is 
\begin{align}
    \mu(E_{ij})=B_{ij}+\widetilde B_{ij}-\sum_{a=1}^KJ_{ia}I_{aj}+\hbar N\delta_{ij}.
\end{align}
And the quantum Hamiltonian reduction $(U_{\hbar}(\mathfrak{gl}_N)\otimes U_{\hbar}(\mathfrak{gl}_N)\otimes \mathrm{Weyl}^{\otimes NK}_{\hbar})/\!/\!/ \GL_N$ is defined as the $\GL_N$ invariant of $U_{\hbar}(\mathfrak{gl}_N)\otimes U_{\hbar}(\mathfrak{gl}_N)\otimes \mathrm{Weyl}^{\otimes NK}_{\hbar}$ quotient by the left ideal generated by $\mu(\mathfrak{gl}_N)$. Denote the quantum Hamiltonian reduction by $\cA_{N,K}$.
\end{definition}

Obviously there are two isomorphisms between $\mathbb C_{\hbar}[\cM(N,K)]$ and $\cA_{N,K}$, corresponding to two set of generators which are packaged in the generating functions 
\begin{align}\label{eq: two sets of gen}
\begin{split}
T_{ab}(u)=\delta_{ab}+I_a\frac{1}{u-B}J_b,&\;Z(u)=1-\mathrm{Tr}\left(\frac{\hbar}{u-B}\right),\\
\widetilde{T}_{ab}(u)=\delta_{ab}+I_a\frac{1}{u+\widetilde B}J_b,&\;\widetilde{Z}(u)=1-\mathrm{Tr}\left(\frac{\hbar}{u-\widetilde B}\right).
\end{split}
\end{align}
Then analogous to Proposition \ref{prop:the surjective map between Yangian and deformed ring of functions}, for every $N\in \bZ_{>0}$ there is a surjective map of $\bC[\hbar]$ algebras
\begin{align}
    \widetilde\rho_N:Y_\hbar(\fgl_K)\otimes \Lambda\twoheadrightarrow \mathbb C_{\hbar}[\cM(N,K)].
\end{align}
These maps are compatible in the sense that for $N>L$ the diagram
\begin{center}
    \begin{tikzcd}
    Y_\hbar(\fgl_K)\otimes \Lambda\arrow[r,"\widetilde\rho_N"] \arrow[dr,"\widetilde\rho_L" '] & \mathbb C_{\hbar}[\cM(N,K)] \arrow[d,"\widetilde\tau^N_L"]\\
    & \mathbb C_{\hbar}[\cM(L,K)]
    \end{tikzcd}
\end{center}
is commutative, where $\widetilde\tau^N_L: \mathbb C_{\hbar}[\cM(N,K)]\to \mathbb C_{\hbar}[\cM(L,K)]$ is defined similar to that of $\tau^N_L$ in \eqref{eq:quantized truncation map}, with $B$ replaced by $\widetilde B$. Moreover, $\bigcap_{N=1}^{\infty}\ker\left(\widetilde\rho_N\right)=0$.

\smallskip There are nontrivial relations between generators in \eqref{eq: two sets of gen}, which are originally proven in \cite[Appendix A]{DedushenkoGaiotto202009b}. We collect these results and reproduce the proof in the notation of this paper for the convenience of readers.

\begin{lemma}\label{lemma: identities}
We have the following identities:
\begin{align}
    T_{ab}(u)\widetilde{T}_{bc}(-u)&=\delta_{ac},\\
    T_{ab}(u)\widetilde{T}_{ba}(-u+K\hbar)&=KZ(u)\widetilde{Z}(-u+K\hbar)\label{eq:id 2}
\end{align}
\end{lemma}

\begin{proof}
First of all, we compute 
\begin{align*}
    &T_{ab}(u)\widetilde{T}_{bc}(w)=\delta_{ac}+I_a\frac{1}{u-B}J_c+I_a\frac{1}{w-\widetilde B}J_c+I_a\frac{1}{u-B}J_bI_b\frac{1}{w-\widetilde B}J_c\\
    &~=\delta_{ac}+I_a\frac{1}{u-B}J_c+I_a\frac{1}{w-\widetilde B}J_c+I_a\frac{1}{u-B}(B+\widetilde B)\frac{1}{w-\widetilde B}J_c\\
    &~=\delta_{ac}+(u+w)I_a\frac{1}{u-B}\frac{1}{w-\widetilde B}J_c.
\end{align*}
Taking $w=-u$, we get $T_{ab}(u)\widetilde{T}_{bc}(-u)=\delta_{ac}$.
Contracting with $\delta_{ac}$, we get
\begin{align*}
    &T_{ab}(u)\widetilde{T}_{ba}(w)=K+(u+w)\mathrm{Tr}\left(\frac{1}{w-\widetilde B}JI\frac{1}{u-B}\right)-K\hbar (u+w)\mathrm{Tr}\left(\frac{1}{w-\widetilde B}\frac{1}{u-B}\right)\\
    &~=K+(u+w)\mathrm{Tr}\left(\frac{1}{w-\widetilde B}(B+\widetilde B)\frac{1}{u-B}\right)+\hbar(u+w)\mathrm{Tr}\left(\frac{1}{w-\widetilde B}\right)\mathrm{Tr}\left(\frac{1}{u-B}\right)\\
    &\qquad -K\hbar (u+w)\mathrm{Tr}\left(\frac{1}{w-\widetilde B}\frac{1}{u-B}\right).
\end{align*}
Here the second equality follows from moment map condition. Taking $w=-u+K\hbar$, we get
\begin{align*}
    &T_{ab}(u)\widetilde{T}_{ba}(-u+K\hbar)=K\left(1-\mathrm{Tr}\left(\frac{\hbar}{u-B}\right)\right)\left(1-\mathrm{Tr}\left(\frac{\hbar}{-u+K\hbar-\widetilde B}\right)\right)=KZ(u)\widetilde{Z}(-u+K\hbar).
\end{align*}
\end{proof}

Recall that the quantum determinant of $T(u)$ is defined as \cite[\S 2.6]{molev2003yangians}
\begin{align}
    \mathrm{qdet}\:T(u)=\sum_{\sigma\in \mathfrak{S}_K}\mathrm{sgn}(\sigma) T_{\sigma(1),1}(u)\cdots T_{\sigma(K),K}(u-(K-1)\hbar).
\end{align}
It is proposed in \cite{DedushenkoGaiotto202009b} that quantum determinant of $T(u)$ should be related to Capelli's determinant of $B$ and $\widetilde{B}$. We prove their proposal in the next proposition.
\begin{proposition}
Let $C(u)$ be the Capelli's determinant of $B$
\begin{align}\label{eq:the definition of Capelli determinant}
    C(u)=\sum_{\sigma\in \mathfrak S_N}\mathrm{sgn}(\sigma)(u-(N-1)\hbar-B)_{\sigma(1),1}\cdots (u-B)_{\sigma(N),N},
\end{align}
and similarly let $\widetilde{C}(u)$ be the the Capelli's determinant of $\widetilde B$. Then
\begin{align}\label{eqn: qdet and Capelli's det}
    \mathrm{qdet}\:T(u)=\frac{\widetilde{C}(-u+(K-1)\hbar)}{(-1)^N\cdot C(u)}.
\end{align}
\end{proposition}

\begin{proof}
According to the quantum Newton's formula \cite[Theorem 4.1]{molev2003yangians}, we have
\begin{align}\label{eq:q Newton}
    Z(u)=\frac{C(u-\hbar)}{C(u)},\quad \widetilde Z(u)=\frac{\widetilde C(u-\hbar)}{\widetilde C(u)}
\end{align}
Define a rational function $$f(u):=\mathrm{qdet}\:T(u)\frac{C(u)}{\widetilde{C}(-u+(K-1)\hbar)} .$$ Compare the quantum Liouville formula \cite[Theorem 2.28]{molev2003yangians}:
\begin{align}
    \frac{1}{K}T_{ab}(u)\widetilde{T}_{ba}(-u+K\hbar)&=\frac{\mathrm{qdet}\:T(u-\hbar)}{\mathrm{qdet}\:T(u)},
\end{align}
with \eqref{eq:id 2} and \eqref{eq:q Newton}, we get $f(u)/f(u-\hbar)=1$. It follows that $f(u)$ does not depend on $u$. In particular, $f(u)=\lim_{u\to \infty}f(u)=(-1)^N$, i.e. $\mathrm{qdet}\:T(u)=(-1)^N\widetilde{C}(-u+(K-1)\hbar)/C(u)$. 
\end{proof}

\begin{remark}
Now we have RTT generator $T(u)$ and its inverse $\widetilde{T}(-u)$, then the $J$-generators of the Yangian for $\mathfrak{sl}_K$ can be obtained from them, in fact one define $B_{\mathrm{avr}}=\frac{1}{2}(B-\widetilde B)$, and
\begin{align}
    J^{(n)}_{ab}=I_aB_{\mathrm{avr}}^nJ_b,
\end{align}
then $J^{(0)}_{ab}$ are generators of $\mathfrak{gl}_K$ and they act on $J^{(1)}_{ab}$ as adjoint representation, and 
\begin{align}
    [J^{(1)}_{ab},J^{(1)}_{cd}]=\hbar(\delta_{bc}J^{(2)}_{ad}-\delta_{ad}J^{(2)}_{cb})+\frac{\hbar}{4}(J^{(0)}_{ed}J^{(0)}_{ae}J^{(0)}_{cb}-J^{(0)}_{eb}J^{(0)}_{ce}J^{(0)}_{ad}).
\end{align}
The above commutation relation shows that $\tilde J^{(0)}_{ab}=J^{(0)}_{ab}-\frac{1}{K}\delta_{ab}\sum_{c=1}^K J^{(0)}_{cc}$ and $\tilde J^{(1)}_{ab}=J^{(1)}_{ab}-\frac{1}{K}\delta_{ab}\sum_{c=1}^K J^{(1)}_{cc}$ generate the subalgebra $Y_{\hbar}(\mathfrak{sl}_K)\subset Y_\hbar(\fgl_K)\otimes \Lambda$.
\end{remark}

\subsection{Kernel of \texorpdfstring{$\rho_N$}{rhoN}}
In this subsection we characterize the kernel of the quotient map $\rho_N:Y_\hbar(\fgl_K)\otimes \Lambda\twoheadrightarrow \mathbb C_{\hbar}[\cM(N,K)]$. 
\begin{definition}\label{def:universal C(u)}
Fix $N$, define a power series $\mathfrak C(u)=u^N+\sum_{n>0}\mathfrak C_nu^{N-n}$ with coefficients $\mathfrak C_n\in Y_{\hbar}(\mathfrak {gl}_1)$ which is uniquely determined by 
\begin{align}\label{eqn: relation between power sum and Capelli}
    1-\frac{N\hbar}{u}-\hbar\sum_{n>0}\frac{p_n}{u^{n+1}}=\frac{\mathfrak C(u-\hbar)}{\mathfrak C(u)}.
\end{align}
Here $p_n$ are the power sum generators of $Y_{\hbar}(\mathfrak {gl}_1)$. Let RTT generator of $Y_{\hbar}(\mathfrak {gl}_K)$ be $$\mathfrak T(u)=1+\sum_{n\ge 0}\mathfrak T^{(n)}u^{-n-1},$$ then we define a power series $\widetilde{\mathfrak{C}}(u)=u^N+\sum_{n>0}\widetilde{\mathfrak{C}}_nu^{N-n}$ by
\begin{align}\label{eq:tilde C}
    \widetilde{\mathfrak{C}}(-u+(K-1)\hbar):=(-1)^N\mathfrak{C}(u)\mathrm{qdet}\: \mathfrak{T}(u).
\end{align}
Write the quantum minor of $\mathfrak T(u)$ for row indices $\underline{a}=(a_1<\cdots<a_i)$ and column indices $\underline{b}=(b_1<\cdots<b_i)$ as
\begin{align}
    \mathfrak T_{\underline{a},\underline{b}}(u)=\sum_{\sigma\in \mathfrak S_{i}}\mathrm{sgn}(\sigma)\mathfrak T_{\sigma(a_1),b_1}(u)\cdots \mathfrak T_{\sigma(a_i),b_i}(u-(i-1)\hbar).
\end{align}
\end{definition}

\begin{remark}
Let $C(u)$ be the Capelli's determinant of $B$, then by the quantum Newton's formula \cite{molev2003yangians}, we have
\begin{align}\label{eqn: quantum Newton}
    1-\mathrm{Tr}\left(\frac{\hbar}{u-B}\right)=\frac{C(u-\hbar)}{C(u)},
\end{align}
therefore the image of $\mathfrak C(u)$ in $\mathbb C_{\hbar}[\cM(N,K)]$ is $C(u)$. In the classical limit $\hbar\to 0$, $C(u)$ equals to $\det(u-B)$, and $C_n\equiv (-1)^n m_n\mod{\hbar}$, where $m_n$ are the generators of $\mathbb C[L^-\GL_1]$ that take the value of $a_n$ in the power series $1+\sum_{n\ge 1}a_nz^{-n}\in L^-\GL_1$.
\end{remark}

\begin{theorem}\label{theorem: quantum ideal}
The kernel of $\rho_N:Y_\hbar(\fgl_K)\otimes \Lambda\twoheadrightarrow \mathbb C_{\hbar}[\cM(N,K)]$ is generated by all coefficients for negative powers in $u$ in the power series
\begin{align}\label{eqn: ideal generators}
    \mathfrak C(u),\quad \widetilde{\mathfrak C}(u),\quad \mathfrak C(u)\mathfrak T_{\underline{a},\underline{b}}(u),
\end{align}
for all $\underline{a}=(a_1<\cdots<a_i),\underline{b}=(b_1<\cdots<b_i)$ and all $1\le i\le K-1$.
\end{theorem}

\begin{proof}
First of all, we show that \eqref{eqn: ideal generators} are mapped to polynomials. For $\mathfrak C(u)$, its image is the Capelli's determinant $C(u)$ of $B$, which is a polynomial. Note that $C(u)$ is known to be noncommutative version of characteristic polynomial in the sense that $C(B)=0$ \cite{molev2003yangians}, thus we have recursion relations: $T_{ab}^{(m)}+\sum_{n=1}^NC_nT_{ab}^{(m-n)}=0$ for all $m\ge N$, which is equivalent to that $C(u)T_{ab}(u)$ is a polynomial. It follows from \eqref{eqn: qdet and Capelli's det} that $C(u)\mathrm{qdet}\:T(u)$ is a polynomial. Next we consider the embedding $\mathbb C_{\hbar}[\cM(N,i)]\hookrightarrow \mathbb C_{\hbar}[\cM(N,K)]$ by $B\mapsto B$ and $J_{is}\mapsto J_{ia_s}$ and $I_{si}\mapsto I_{a_s i}$. This implies that $C(u)T_{\underline{a},\underline{a}}(u)$ are polynomials for all $\underline{a}=(a_1<\cdots<a_i)$ and all $1\le i\le K$. After taking commutators with $T^{(0)}_{rs}$ for various indices $r$ and $s$, we see that all coefficients for negative powers in $u$ in the power series $C(u)T_{\underline{a},\underline{b}}(u)$ are zero. In particular $\widetilde{\mathfrak{C}}(u)$ \eqref{eq:tilde C} is mapped to a polynomial. Thus we see that \eqref{eqn: ideal generators} are mapped to polynomials.

Next we show that the kernel of $Y_\hbar(\fgl_K)\otimes \Lambda\twoheadrightarrow \mathbb C_{\hbar}[\cM(N,K)]$ is generated by all coefficients for negative powers in $u$ in the power series \eqref{eqn: ideal generators}. By the flatness over $\mathbb C[\hbar]$ (Lemma \ref{lemma: flatness}), it suffices to show that they generate the ideal modulo $\hbar$. Let us define the closed subscheme $X$ in $L^-\GL_K\times L^-\GL_1$ by vanishing of all coefficients for negative powers in $u$ in the power series \eqref{eqn: ideal generators} modulo $\hbar$. Then $\cM(N,K)$ is a closed subscheme of $X$. In view of Proposition \ref{prop:phase space is a normal affine variety}, it is then enough to show that $X$ is reduced and irreducible of dimension $2NK$.

Consider the Drinfeld's generators $\{H_i(u),E_i(u),F_i(u):1\le i\le K-1\}$ of the subalgebra $Y_{\hbar}(\fsl_K)\subset Y_\hbar(\fgl_K)\otimes \Lambda$, they are related to generators $\{\mathfrak{C}(u),\mathfrak{T}(u)\}$ by the following \cite[Theorem 12.1.4]{ChariPressley199510}:
\begin{align}\label{eq:Drinfeld gen}
\begin{split}
    &H_i(u)=\mathfrak C(u)^{\delta_{i,1}}\frac{A_{i-1}(u+\frac{\hbar}{2})A_{i+1}(u+\frac{\hbar}{2})}{A_{i}(u)A_i(u+\hbar)},\\
    &E_i(u)=\mathfrak C(u+\frac{i-1}{2}\hbar)\mathfrak T_{\underline{i},\underline{i}^+}(u+\frac{i-1}{2}\hbar)A_i(u)^{-1},\\
    &F_i(u)=\mathfrak C(u+\frac{i-1}{2}\hbar)A_i(u)^{-1}\mathfrak T_{\underline{i}^+,\underline{i}}(u+\frac{i-1}{2}\hbar),
\end{split}
\end{align}
where $1\le i\le K-1$, and $\underline{i}=(1<\cdots<i)$, and $\underline{i}^+=(1<\cdots<i-1<i+1)$, and $A_0(u)=1$, and
\begin{align*}
    A_{i}(u)=u^N+\sum_{p=1}^{\infty}A_i^{(p)}u^{N-p}:=\mathfrak C(u+\frac{i-1}{2}\hbar)\mathfrak T_{\underline{i},\underline{i}}(u+\frac{i-1}{2}\hbar),
\end{align*}
By \cite[Theorem 2.20]{molev2003yangians}, we have $Y_\hbar(\fgl_K)\cong Y_{\hbar}(\fsl_K)\otimes Z(Y_{\hbar}(\mathfrak{gl}_K))$, where $Z(Y_{\hbar}(\mathfrak{gl}_K))$ is a commutative $\bC[\hbar]$ algebra freely generated by coefficients of power series $\mathrm{qdet}\:\mathfrak{T}(u)$. According to \eqref{eq:tilde C}, $Z(Y_{\hbar}(\mathfrak{gl}_K))\otimes \Lambda$ is a commutative $\bC[\hbar]$ algebra freely generated by $\{\mathfrak{C}_n,\widetilde{\mathfrak{C}}_n:n\in \bZ_{>0}\}$. Consider the algebra $$Y_\hbar(\fsl_K)[z_{1,1},\cdots,z_{1,N},z_{K-1,1},\cdots,z_{K-1,N}]:=Y_\hbar(\fsl_K)\otimes\bC[z_{1,1},\cdots,z_{1,N},z_{K-1,1},\cdots,z_{K-1,N}],$$ and we construct an algebra homomorphism 
\begin{align}\label{eq:alg iso}
    Y_\hbar(\fsl_K)[z_{1,1},\cdots,z_{1,N},z_{K-1,1},\cdots,z_{K-1,N}]\longrightarrow Y_\hbar(\fgl_K)\otimes \Lambda/(\mathfrak{C}_n,\widetilde{\mathfrak{C}}_n:n>N),
\end{align}
which is defined by the identity map on $Y_\hbar(\fsl_K)$ and 
\begin{align}
    \prod_{i=1}^N(u-z_{1,i})=\mathfrak C\left(u-\frac{1}{2}\hbar\right),\; \prod_{i=1}^N(u-z_{K-1,i})=(-1)^N\widetilde{\mathfrak C}\left(-u+\frac{K}{2}\hbar\right).
\end{align}
By the discussions above, \eqref{eq:alg iso} is an isomorphism. Therefore we have an isomorphism between quotient algebras
\begin{align}\label{eq:alg iso 2}
\frac{Y_\hbar(\fgl_K)\otimes \Lambda}{(\text{\small coeff. of negative powers in \eqref{eqn: ideal generators}})}
    \cong \frac{Y_\hbar(\fsl_K)[z_{1,1},\cdots,z_{1,N},z_{K-1,1},\cdots,z_{K-1,N}]}{(A^{(p)}_i:1\le i\le K-1,p>N)}.
\end{align}
According to \cite[\S 1.2.1]{Kamnitzer_2017} and \cite[Theorem 4.10]{kamnitzer2014yangians}, the RHS of \eqref{eq:alg iso 2} is isomorphic to the quantized Coulomb branch algebra $\mathcal A_\hbar$ associated to the the quiver in Figure \ref{fig:quiver for the Coulomb branch description} with gauge symmetry.
\begin{figure}[H]
    \centering
    \begin{tikzpicture}[scale=.5]
    
        \draw[line width=1pt] (-6,0) circle (.75cm);
        \draw[line width=1pt] (-5.25,0) -- (-4.25,0);
        \draw[line width=1pt] (-3.5,0) circle (.75cm);
        \draw[line width=1pt] (-2.75,0) -- (-1.75,0);
        \draw[line width=1pt,loosely dotted] (-1.75,0) -- (1.75,0);
        \draw[line width=1pt] (-6,-.75) -- (-6,-2.05);
        \draw[line width=1pt] (-6.65,-2.05) rectangle (-5.35,-3.35);
        
        \node at (-6, 0) {$N$};
        \node at (-6,-2.7) {$N$};
        \node at (-3.5,0) {$N$};

        \begin{scope}[xscale=-1]
        
        \draw[line width=1pt] (-6,0) circle (.75cm);
        \draw[line width=1pt] (-5.25,0) -- (-4.25,0);
        \draw[line width=1pt] (-3.5,0) circle (.75cm);
        \draw[line width=1pt] (-2.75,0) -- (-1.75,0);
        \draw[line width=1pt] (-6,-.75) -- (-6,-2.05);
        \draw[line width=1pt] (-6.65,-2.05) rectangle (-5.35,-3.35);

        \draw [line width=1pt, decorate,decoration={brace,amplitude=10pt,mirror,raise=.75cm}]
        (-6.8,0) -- (6.8,0) node[midway,yshift=1.4cm]{$K-1$ gauge nodes};

        \node at (-6, 0) {$N$};
        \node at (-6,-2.7) {$N$};
        \node at (-3.5,0) {$N$};
        \end{scope}

    \end{tikzpicture}
    \caption{The quiver diagram for the Coulomb branch description.}
    \label{fig:quiver for the Coulomb branch description}
\end{figure}
This implies that the scheme $X$ whose ring of functions is the LHS of \eqref{eq:alg iso 2} modulo $\hbar$ is isomorphic to the Coulomb branch $\cM_C$ associated to the quiver in Figure \ref{fig:quiver for the Coulomb branch description} with flavor symmetry. Then $X$ is reduced and irreducible of dimension $2NK$ by \cite[Theorem 3.20]{braverman2016coulomb}. This concludes the proof.
\end{proof}

\begin{corollary}\label{cor:M(N,K) as Coulomb branch}
$\cM(N,K)$ is isomorphic to the Coulomb branch associated to the quiver in Figure \ref{fig:quiver for the Coulomb branch description} with flavor symmetry. Therefore according to \cite[Theorem 3.20]{braverman2016coulomb}, $\cM(N,K)$ is isomorphic to the Beilinson-Drinfeld slice $\overline{\mathcal W}^{\underline{\lambda^*}}_0$ for $\GL_K$, where $\lambda=N\omega_1+N\omega_{K-1}$, $\omega_i$ is the $i$-th fundamental coweight for $\GL_K$. $\bC_\hbar[\cM(N,K)]$ is isomorphic to the quantized Coulomb branch algebra associated to the quiver in Figure \ref{fig:quiver for the Coulomb branch description} with flavor symmetry, which is the truncated Yangian $\mathbf Y^{\lambda}_0$.
\end{corollary}

\subsection{Quantized Coproduct}\label{sec:quantized coproduct}

In this subsection, we shall use the notation $\mathfrak{C}_N(u)$ to denote the power series define in Definition \ref{def:universal C(u)}.

\smallskip Let us define a coproduct 
\begin{align}\label{def: coproduct}
\begin{split}
\underset{n_1,n_2}{\mathbf \Delta}&:Y_\hbar(\fgl_K)\otimes \Lambda\to \left(Y_\hbar(\fgl_K)\otimes \Lambda\right)\otimes \left(Y_\hbar(\fgl_K)\otimes \Lambda\right),\\
\underset{n_1,n_2}{\mathbf \Delta}(\mathfrak T_{ab}(u))&=\sum_{c=1}^K\mathfrak T_{ac}(u)\otimes \mathfrak T_{cb}(u),\quad \underset{n_1,n_2}{\mathbf \Delta}(\mathfrak C_{n_1+n_2}(u))=\mathfrak C_{n_1}(u)\otimes \mathfrak C_{n_2}(u).
\end{split}
\end{align}
We note that the restriction of $\underset{n_1,n_2}{\mathbf \Delta}$ to the subalgebra $Y_\hbar(\fgl_K)$ agrees with the ordinary coproduct on $Y_\hbar(\fgl_K)$ \cite[\S 2.5]{molev2003yangians}. The coproduct \eqref{def: coproduct} is coassociative in the sense that 
\begin{align}
    \left(\underset{n_1,n_2}{\mathbf \Delta}\otimes\Id\right)\circ\underset{n_1+n_2,n_3}{\mathbf \Delta}= \left(\Id\otimes\underset{n_2,n_3}{\mathbf \Delta}\right)\circ\underset{n_1,n_2+n_3}{\mathbf \Delta}
\end{align}

\begin{proposition}\label{prop:existence of coproduct}
There exists a unique $\bC[\hbar]$ algebra homomorphism $$\underset{n_1,n_2}{\Delta}:\bC_\hbar[\cM(n_1+n_2,K)]\to \bC_\hbar[\cM(n_1,K)]\otimes \bC_\hbar[\cM(n_2,K)]$$ which makes the following diagram
\begin{equation}\label{eq:commutativity of coproduct and the map rho}
\begin{tikzcd}
Y_\hbar(\fgl_K)\otimes \Lambda \ar[r,"\underset{n_1,n_2}{\mathbf \Delta}"] \ar[d,"\rho_{n_1+n_2}" '] & \left(Y_\hbar(\fgl_K)\otimes \Lambda\right)\otimes \left(Y_\hbar(\fgl_K)\otimes \Lambda\right) \ar[d,"\rho_{n_1}\otimes\rho_{n_2}"] \\
\bC_\hbar[\cM(n_1+n_2,K)] \ar[r,"\underset{n_1,n_2}{\Delta}"] & \bC_\hbar[\cM(n_1,K)]\otimes \bC_\hbar[\cM(n_2,K)]
\end{tikzcd}
\end{equation}
commutative. Here the vertical maps are constructed in Proposition \eqref{prop:the surjective map between Yangian and deformed ring of functions}, and they are explicitly given by $\rho_N(\mathfrak{T}(u))=T(u)$ and $\rho_N(\mathfrak{C}_N(u))=C_N(u)$, where $C_N(u)$ is the Capelli determinant of $B$ \eqref{eq:the definition of Capelli determinant}.
\end{proposition}

\begin{proof}
We note that $\rho_{n_1+n_1}$ is surjective by Proposition \ref{prop:the surjective map between Yangian and deformed ring of functions}, therefore it is enough to show that $$\underset{n_1,n_2}{\mathbf \Delta}(\ker(\rho_{n_1+n_2}))\subseteq \ker(\rho_{n_1})\otimes\ker(\rho_{n_1}).$$
By Theorem \ref{theorem: quantum ideal}, we are left with showing that 
\begin{align*}
    (\rho_{n_1}\otimes\rho_{n_1})\circ\underset{n_1,n_2}{\mathbf \Delta}(\mathfrak{C}_{n_1+n_2}(u))\;\text{ and }\;(\rho_{n_1}\otimes\rho_{n_1})\circ\underset{n_1,n_2}{\mathbf \Delta}(\mathfrak{C}_{n_1+n_2}(u)\mathfrak{T}_{\underline{a},\underline{b}}(u))
\end{align*}
are polynomials in $u$, for all $\underline{a}=(a_1<\cdots<a_i),\underline{b}=(b_1<\cdots<b_i)$ and all $1\le i\le K$. By the definition of $\underset{n_1,n_2}{\mathbf \Delta}$, we have
\begin{align*}
    (\rho_{n_1}\otimes\rho_{n_1})\circ\underset{n_1,n_2}{\mathbf \Delta}(\mathfrak{C}_{n_1+n_2}(u))=\rho_{n_1}(\mathfrak{C}_{n_1}(u))\otimes \rho_{n_2}(\mathfrak{C}_{n_2}(u))=C_{n_1}(u)\otimes C_{n_2}(u),
\end{align*}
which is a polynomial in $u$. By \cite[Proposition 2.14]{molev2003yangians}, we have 
\begin{align*}
    \underset{n_1,n_2}{\mathbf \Delta}(\mathfrak{T}_{\underline{a},\underline{b}}(u))=\sum_{\underline{c}}\mathfrak{T}_{\underline{a},\underline{c}}(u)\otimes \mathfrak{T}_{\underline{c},\underline{b}}(u),
\end{align*}
summed over all subsets $\underline{c}=(c_1<\cdots<c_i)$ of $\{1,\cdots,K\}$. Thus we have
\begin{align*}
    (\rho_{n_1}\otimes\rho_{n_1})\circ\underset{n_1,n_2}{\mathbf \Delta}(\mathfrak{C}_{n_1+n_2}(u)\mathfrak{T}_{\underline{a},\underline{b}}(u))&=\sum_{\underline{c}}\rho_{n_1}(\mathfrak{C}_{n_1}(u)\mathfrak{T}_{\underline{a},\underline{c}}(u))\otimes \rho_{n_2}(\mathfrak{C}_{n_2}(u)\mathfrak{T}_{\underline{c},\underline{b}}(u))\\
    &=\sum_{\underline{c}} {C}_{n_1}(u){T}_{\underline{a},\underline{c}}(u)\otimes {C}_{n_2}(u){T}_{\underline{c},\underline{b}}(u),
\end{align*}
which is a polynomial in $u$ because each summand is a polynomial.
\end{proof}

The coproduct $\underset{n_1,n_2}{\Delta}$ is explicitly given by 
\begin{align}\label{eqn: quantum coproduct}
    \underset{n_1,n_2}{\Delta}(T_{ab}(u))=T_{ac}(u)\otimes T_{cb}(u),\quad \underset{n_1,n_2}{\Delta}(C_{n_1+n_2}(u))=C_{n_1}(u)\otimes C_{n_2}(u).
\end{align}
Equivalently, one can write down $\underset{n_1,n_2}{\Delta}$ using the second line of \eqref{eq: two sets of gen}:
\begin{align}
    \underset{n_1,n_2}{\Delta}(\widetilde{T}_{ab}(u))=\widetilde{T}_{cb}(u)\otimes \widetilde{T}_{ac}(u),\quad \underset{n_1,n_2}{\Delta}(\widetilde{C}_{n_1+n_2}(u))=\widetilde{C}_{n_1}(u)\otimes \widetilde{C}_{n_2}(u).
\end{align}

We can group the family of coproducts $\underset{n_1,n_2}{\mathbf \Delta}$ into a single one, at the cost of introducing a variable which plays the role of treating $n_1,n_2$ as undetermined variables.

\begin{definition}
We define the $\mathbb C[\hbar]$-Hopf algebra $Y_{\hbar}(\mathfrak{gl}_K)\otimes\Lambda_\delta$ as the algebra $Y_\hbar(\fgl_K)\otimes \Lambda[\delta]$, equipped with the coproduct
\begin{align}\label{eqn: large N quantum coproduct}
\begin{split}
\mathbf\Delta(\mathfrak T_{ab}(u))&=\mathfrak T_{ac}(u)\otimes \mathfrak T_{cb}(u),\;\mathbf\Delta(\delta)=\delta\otimes 1+1\otimes\delta,\\
\mathbf\Delta(p_n)&=p_n\otimes 1+1\otimes p_n-\hbar\sum_{i=0}^{n-1}p_{i}\otimes p_{n-1-i},
\end{split}
\end{align}
where we set $p_0:=\delta$. The counit $\epsilon:Y_{\hbar}(\mathfrak{gl}_K)\otimes\Lambda_\delta\to \bC[\hbar]$ is defined by 
\begin{align}
    \epsilon(\mathfrak T_{ab}(u))=\delta_{ab},\; \epsilon(p_n)=\epsilon(\delta)=0.
\end{align}
To write down the antipode map $S$, we define power series $\mathfrak A(u)=1+\sum_{n>0}\mathfrak A_nu^{-n}, \mathfrak A_n\in Y_{\hbar,\delta}(\mathfrak{gl}_1)$ by 
\begin{align}
    \left(1-\frac{\hbar}{u}\right)^{-\delta}\left(1-\hbar\sum_{n\ge 0}\frac{p_n}{u^{n+1}}\right)=\frac{\mathfrak A(u-\hbar)}{\mathfrak A(u)},
\end{align}
then the antipode $S$ is given by
\begin{align}
    S(\mathfrak T(u))=\mathfrak T(u)^{-1},\;S(\delta)=-\delta,\; S(\mathfrak A(u))=\mathfrak A(u)^{-1}.
\end{align}
\end{definition}

The quotient map $\rho_N$ in Proposition \ref{prop:the surjective map between Yangian and deformed ring of functions} naturally extends to $\rho'_N:Y_{\hbar}(\mathfrak{gl}_K)\otimes\Lambda_\delta\twoheadrightarrow \mathbb C_{\hbar}[\cM(N,K)]$ which is given by $$\rho'_N(\mathfrak T(u))= T(u),\; \rho'_N(\mathfrak{A}(u))=u^{-N}C_N(u),\; \rho'_N(\delta)= N.$$ Moreover the following diagram
\begin{equation}
\begin{tikzcd}
Y_{\hbar}(\mathfrak{gl}_K)\otimes\Lambda_\delta \ar[r,"\mathbf \Delta"] \ar[d,"\rho'_{n_1+n_2}" '] & \left(Y_{\hbar}(\mathfrak{gl}_K)\otimes\Lambda_\delta\right)\otimes \left(Y_{\hbar}(\mathfrak{gl}_K)\otimes\Lambda_\delta\right)\ar[d,"\rho'_{n_1}\otimes\rho'_{n_2}"] \\
\bC_\hbar[\cM(n_1+n_2,K)] \ar[r,"\underset{n_1,n_2}{\Delta}"] & \bC_\hbar[\cM(n_1,K)]\otimes \bC_\hbar[\cM(n_2,K)]
\end{tikzcd}
\end{equation}
is commutative.

\begin{remark}
The second line of \eqref{eqn: large N quantum coproduct} can be written in the following compact form
\begin{align}
    \mathbf\Delta(\mathfrak A(u))=\mathfrak A(u)\otimes \mathfrak A(u).
\end{align}
\end{remark}

\subsection{Quantized Phase Space and Coulomb Branch Algebra}
In this subsection we give a conceptual understanding of the identification between the quantized phase space $\mathbb C_{\hbar}[\cM(N,K)]$ and Coulomb branch algebra associated to the quiver in Figure \ref{fig:quiver for the Coulomb branch description}.

\smallskip Given a quiver $Q$, we denote by $\cA_\hbar(Q)$ the quantum Coulomb branch algebra associated to the quiver $Q$ with flavor symmetry, i.e. $$\cA_\hbar(Q):=\mathrm{H}^{(\GL(V)_{\mathcal O}\times\GL(W)_{\mathcal O})\rtimes \mathbb C^{\times}}_*(\mathcal R),$$ where $\mathcal R$ is the BFN's space of triples associated to the quiver $Q$, see \cite[Appendix A(ii)]{braverman2016coulomb} for details.

\begin{example}\label{ex:T[SU(N)]}
The $3d$ $\mscr{N}=4$ gauge theory associated to the following quiver
\begin{figure}[H]
    \centering
    \begin{tikzpicture}[scale=.5]
        
        \draw[line width=1pt] (-6,0) circle (.75cm);
        \draw[line width=1pt] (-5.25,0) -- (-4.25,0);
        \draw[line width=1pt] (-3.5,0) circle (.75cm);
        \draw[line width=1pt] (-2.75,0) -- (-1.75,0);
        \draw[line width=1pt,loosely dotted] (-1.75,0) -- (-.25,0);
        \draw[line width=1pt] (-.25,0) -- (.75,0);
        \draw[line width=1pt] (1.5,0) circle (.75cm);   
        \draw[line width=1pt] (2.25,0) -- (3.25,0);
        \draw[line width=1pt] (4,0) circle (.75cm);   
        
        \draw[line width=1pt] (4,-.7) -- (4,-2);
        
        \draw[line width=1pt] (3.35,-2) rectangle (4.65,-3.3);
        
        \node at (-6,0) {$1$};
        \node at (-3.5,0) {$2$};
        \node at (1.5,0) {$N$-$2$};
        \node at (4,0) {$N$-$1$};
        \node at (4,-2.65) {$N$};

    \end{tikzpicture}
    \caption{The quiver diagram of $\mathcal{T}[\text{SU}(N)]$ theory.}
    \label{fig:quiver for T[SU(N)] theory}
\end{figure}
\noindent is known as $\mathcal T[\mathrm{SU}(N)]$, its Coulomb branch algebra with flavor symmetry is isomorphic to $U_{\hbar}(\mathfrak{gl}_N)$. An explicit way to see this isomorphism is by looking at the evaluation representation of $Y_{\hbar}(\mathfrak{gl}_N):\mathfrak T_{ij}(u)\mapsto \delta_{ij}+\frac{E_{ij}}{u}$, where $E_{ij}$ are the generators of $U_{\hbar}(\mathfrak{gl}_N)$ satisfying relations $[E_{ij},E_{kl}]=\hbar(\delta_{jk}E_{il}-\delta_{il}E_{kj})$. Define $$A_n(u)=u^{[n]}\mathfrak T_{\underline{n},\underline{n}}(u+\frac{n-1}{2}\hbar),\; u^{[n]}:=(u+\frac{n-1}{2}\hbar)(u+\frac{n-3}{2}\hbar)\cdots (u-\frac{n-1}{2}\hbar),$$ where $\mathfrak T_{\underline{n},\underline{n}}(u)$ is the quantum determinant of the submatrix of $\mathfrak T(u)$ consisting of first $n$ rows and first $n$ columns. Write $A_n(u)=u^n+\sum_{i>0}A^{(i)}_n u^{n-i}$, then the kernel of $Y_{\hbar}(\mathfrak{gl}_N)\twoheadrightarrow U_{\hbar}(\mathfrak{gl}_N)$ contains $A^{(p)}_n$ for all $p>n$ and for all $1\le n\le N$. In the Drinfeld generators, we have
\begin{align*}
    H_n(u)=\frac{A_{n-1}(u+\frac{\hbar}{2})A_{n+1}(u+\frac{\hbar}{2})}{A_{n}(u)A_{n}(u+{\hbar})}, \:1\le n\le N-1,\: A_0(u)=1.
\end{align*}
Compare with \cite[Corollary B.17]{braverman2016coulomb} we conclude that the quotient of $Y_{\hbar}(\mathfrak {gl}_N)$ by the ideal generated by $A^{(p)}_n$ for all $p>n$ and for all $1\le n\le N$ is the truncated Yangian $\mathbf Y_0^{N\omega_{N-1}}$ for $\mathfrak{sl}_N$, where $\omega_{N-1}$ is the $(N-1)$-th fundamental coweight of $\mathfrak{sl}_N$, and the generating function of mass parameters $Z_{N-1}(u)=\prod_{i=1}^N(u-z_{N-1,i})$ is identified as $Z_{N-1}(u)=A_N(u+{\hbar})$. The classical limit of $\mathbf Y_0^{N\omega_{N-1}}$ is the function ring of $\overline{\mathcal W}^{\underline{N\omega_{N-1}}}_{0,\SL_N}$ (see \cite[\S 3(v)]{braverman2016coulomb}). It is known that  $\overline{\mathcal W}^{\underline{N\omega_{N-1}}}_{0,\SL_N}$ is reduced and irreducible of dimension $N^2$. On the other hand, the classical limit of $U_{\hbar}(\mathfrak{gl}_N)$ is the function ring of $\mathfrak{gl}_N^*$, which has dimension $N^2$ and embeds into $\overline{\mathcal W}^{\underline{N\omega_{N-1}}}_{0,\SL_N}$ as a closed subscheme, then the same argument as the proof of Theorem \ref{theorem: quantum ideal} shows that
\begin{align*}
    U_{\hbar}(\mathfrak{gl}_N)\cong \mathcal A_\hbar\text{ associated to the quiver in the Figure \ref{fig:quiver for T[SU(N)] theory} with flavor symmetry}.
\end{align*}
\end{example}

\bigskip Recall that balanced subquiver $Q^{\mathrm{bal}}\subseteq Q$ is formed by those edge-loop-free nodes $i\in Q_0$ such that $2\dim V_i=\dim W_i+\sum_{j}a_{ij}\dim V_j$ where $a_{ij}$ is the number of edges between $i$ and $j$. It is well-known that $Q^{\mathrm{bal}}$ is a union of finite ADE quivers, unless $Q^{\mathrm{bal}}$ is a union of connected components of $Q$ of affine type with zero framing on them \cite[Appendix A]{braverman2017ring}. It is shown in \cite[Appendix A]{braverman2017ring} that if it is not the latter case then the corresponding ADE group, denoted by $L^{\mathrm{bal}}$, acts on the Coulomb branch algebra $\cA_\hbar(Q)$, such that the infinitesimal action is generated by $\frac{1}{\hbar}[H^{(1)}_i,\bullet],\frac{1}{\hbar}[E^{(1)}_i,\bullet],\frac{1}{\hbar}[F^{(1)}_i,\bullet]$ for those $i\in Q^{\mathrm{bal}}_0$. 

\bigskip\begin{example}\label{ex:ADE}
In the case that $Q$ is of ADE type with gauge dimension vector $\mathbf v$ and flavour dimension vector $\mathbf{w}$, the classical Coulomb branch $\mathcal M_C(Q)$ is the Poisson variety $\overline{\mathcal{W}}^{\lambda^*}_{\mu^*}$, where $\lambda=\sum_{i\in Q_0}\mathbf{w}_i\lambda_i, \mu=\lambda-\sum_{i\in Q_0}\mathbf{v}_i\alpha_i$, $\lambda^*=-w_0(\lambda)$, $\lambda_i$ are fundamental coweights and $\alpha_i$ are fundamental coroots and $w_0$ is the longest element in the Weyl group of $G$. It is shown in \cite[Example A.5]{braverman2017ring} that $L^{\mathrm{bal}}$ action can be identified with the natural action of $\mathrm{Stab}_G(\mu^*)$ on $\overline{\mathcal{W}}^{\lambda^*}_{\mu^*}$ when $\mu$ is dominant. This holds for general $\mu$. In fact we can take a dominant $\nu$ such that $\langle \nu,\check{\alpha}_i\rangle=0, \forall i\in Q_0^{\mathrm{bal}}$ and $\mu+\nu$ is dominant, then the shift map $i_{0,\nu^*}:\mathbb C[\overline{\mathcal W}^{\lambda^*+\nu^*}_{\mu^*+\nu^*}]\to \mathbb C[\overline{\mathcal W}^{\lambda^*}_{\mu^*}]$ commutes with the action of $\mathrm{Stab}_G(\mu^*)\subseteq\mathrm{Stab}_G(\mu^*+\nu^*)$. Since $i_{0,\nu^*}$ is Poisson and preserves $E^{(1)}_i, F^{(1)}_i,H^{(1)}_i$ for $i\in Q^{\mathrm{bal}}_0$, it follows that the action of $L^{\mathrm{bal}}$ constructed in \cite[Proposition A.3]{braverman2017ring} commutes with the shift map. Since the action of $L^{\mathrm{bal}}$ agrees with the natural one for $\mathrm{Stab}_G(\mu^*)$ on $\overline{\mathcal W}^{\lambda^*+\nu^*}_{\mu^*+\nu^*}$, and the shift map is birational and equivariant for both of actions, these two actions agree on $\overline{\mathcal{W}}^{\lambda^*}_{\mu^*}$ as well.
\end{example}

\bigskip\begin{remark}
Suppose that there is another action of $L^{\mathrm{bal}}$ on $\cA_\hbar(Q)$ which acts trivially on $\hbar$ (not necessarily the one constructed in \cite[Appendix A]{braverman2017ring}), such that these two actions agree after modulo $\hbar$ and mass parameters (generators of $\mathrm H^*_{\GL(W)}(\mathrm{pt})$), then these two actions must agree on $\cA_\hbar(Q)$. In fact $\cA_\hbar(Q)$ is a flat deformation of $\cA_\hbar(Q)/(\hbar,\mathrm{mass})$ and the deformation spaces of modules for reductive group are trivial.
\end{remark}

\bigskip\begin{example}\label{ex:Weyl}
The Coulomb branch associated to the quiver of Figure \ref{fig:the quiver for the Weyl algebra}
\begin{figure}[H]
    \centering
    \hspace*{-2.cm}
    \begin{tikzpicture}[scale=.5]
        \draw[line width=1pt] (-6,0) circle (.75cm);
        \draw[line width=1pt] (-5.25,0) -- (-4.25,0);
        \draw[line width=1pt,loosely dotted] (-4.25,0) -- (-2.75,0);
        \draw[line width=1pt] (-2.75,0) -- (-1.75,0);
        \draw[line width=1pt] (-1,0) circle (.75cm);
        \draw[line width=1pt] (-.25,0) -- (+.75,0);
        \draw[line width=1pt] (1.5,0) circle (.75cm);
        
        \node at (-6, 0) {$1$};
        \node at (-1, 0) {$N$-$1$};

        \draw[line width=1pt] (2.25,0) -- (+3.25,0);
        \draw[line width=1pt,loosely dotted] (3.25,0) -- (+5.25,0);
        \draw[line width=1pt] (5.25,0) -- (+6.25,0);
        \draw[line width=1pt] (7,0) circle (.75cm);
        \draw[line width=1pt] (1.5,-.75) -- (1.5,-2.05);
        \draw[line width=1pt] (.85,-2.05) rectangle (2.15,-3.35);
        
        \node at (1.5, 0) {$N$};
        \node at (7,0) {$N$};
        \node at (1.5,-2.7) {$1$};
        
        \draw [line width=1pt, decorate,decoration={brace,amplitude=10pt,raise=.65cm}]
        (.75,0) -- (7.75,0) node[midway,yshift=1.3cm]{$K+1$ gauge nodes};

        \draw[line width=1pt] (7.75,0) -- (+8.75,0);
        \draw[line width=1pt] (9.5,0) circle (.75cm);
        \draw[line width=1pt] (10.25,0) -- (11.25,0);
        \draw[line width=1pt,loosely dotted] (11.25,0) -- (+12.75,0);
        \draw[line width=1pt] (12.75,0) -- (13.75,0);
        \draw[line width=1pt] (14.5,0) circle (.75cm);

        \node at (9.5,0) {$N$-$1$};
        \node at (14.5,0) {$1$};
        
    \end{tikzpicture}
    \caption{The quiver for the Weyl algebra $\mathrm{Weyl}^{\otimes N(K+N)}_{\hbar}$.}
    \label{fig:the quiver for the Weyl algebra}
\end{figure}
is isomorphic to generalized transverse slice $\overline{\mathcal W}^{\lambda_N^*}_{w_0(\lambda_N^*)}$ \cite[Theorem 3.10]{braverman2016coulomb}, where $\lambda_N$ is the $N$-th fundamental coweight of $\GL_{N(K+N)}$ and $w_0$ is the longest element of the Weyl group of $\GL_{N(K+N)}$ and $\lambda_N^*=-w_0(\lambda_N)$. The projection $\GL_{N(K+N)}(\!(z)\!)\to \mathrm{Gr}_{\GL_{N(K+N)}}$ identifies $\overline{\mathcal W}^{\lambda_N^*}_{w_0(\lambda_N^*)}$ with the cotangent bundle of the orbit $U_{\lambda_N}\cdot z^{-\lambda_N}$ as a Poisson variety, where $U_{\lambda_N}$ is the unipotent group whose Lie algebra is the eigenspace of $\lambda_N$ with eigenvalue $-1$, see \cite[Theorem 4.8]{Krylov_2021}. $U_{\lambda_N}$ can be naturally identified with the lower-triangular block in the following block-decomposition:
\begin{align*}
\begin{pNiceArray}{c:c}
  \SL_{N} & 0 \\
\hdottedline
 U_{\lambda_N} & \SL_{N+K}
\end{pNiceArray}.
\end{align*}
$L^{\mathrm{bal}}=\mathrm{Stab}_{\SL_{N(K+N)}}(w_0(\lambda_N^*))=\SL_N\times \SL_{N+K}$ acts on $U_{\lambda_N}$ naturally, and the action of $L^{\mathrm{bal}}$ on the Coulomb branch $\overline{\mathcal W}^{\lambda_N^*}_{w_0(\lambda_N^*)}\cong T^*U_{\lambda_N}$ is induced from its action on $U_{\lambda_N}$ by extending to the cotangent bundle. The half of cohomological grading $\deg_h$ is such that degree of linear coordinates on $U_{\lambda_N}$ is zero, and degree of linear coordinates on cotangent fiber is $1$. The quantized Coulomb branch algebra $\mathcal A_\hbar$ is an $L^{\mathrm{bal}}$-equivariant and $\deg_h$-graded quantization of the Poisson variety $T^*U_{\lambda_N}$. There exists a unique such quantization, which is the ring of $\hbar$-differential operators $\mathrm{Diff}_{\hbar}(U_{\lambda_N})$. $\mathrm{Diff}_{\hbar}(U_{\lambda_N})$ is isomorphic to the Weyl algebra $\mathrm{Weyl}^{\otimes N(K+N)}_{\hbar}$, therefore we conclude that
\begin{align*}
    \mathrm{Weyl}^{\otimes N(K+N)}_{\hbar}\cong \mathcal A_\hbar\text{ associated to the quiver in the Figure \ref{fig:the quiver for the Weyl algebra}}.
\end{align*}
These quantized Coulomb branches also appears in the context of quantized phase spaces of minuscule 't Hooft operators in $4d$ Chern-Simons theories, see \cite{CostelloGaiottoYagi202103}.
\end{example}

\bigskip Consider a quiver $Q$ containing following part
\begin{figure}[H]
    \centering
    \begin{tikzpicture}[scale=.5]
        
        \draw[line width=1pt] (-6,0) circle (.75cm);
        \draw[line width=1pt] (-5.25,0) -- (-4.25,0);
        \draw[line width=1pt] (-3.5,0) circle (.75cm);
        \draw[line width=1pt] (-2.75,0) -- (-1.75,0);
        \draw[line width=1pt,loosely dotted] (-1.75,0) -- (-.25,0);
        \draw[line width=1pt] (-.25,0) -- (.75,0);
        \draw[line width=1pt] (1.5,0) circle (.75cm);   
        \draw[line width=1pt] (2.25,0) -- (3.25,0);
        \draw[line width=1pt] (4,0) circle (.75cm);   
        
        \draw[line width=1pt] (4.4,.65) -- (5.75,2);
        \draw[line width=1pt] (4.75,0) -- (6.25,0);
        \draw[line width=1pt] (4.4,-.65) -- (5.75,-2);

        \node at (-6,0) {$1$};
        \node at (-3.5,0) {$2$};
        \node at (1.5,0) {$N$-$1$};
        \node at (4,0) {$N$};

    \end{tikzpicture}
    \caption{The quiver $Q$.}
    \label{fig:quiver Q}
\end{figure}
\noindent Then $\cA_{\hbar}(Q)$ admits an action of $\SL_N$, and also a grading ($\mathbb C^{\times}$ action) coming from $\pi_0(\mathrm{Gr}_{\GL_N})$ which commutes with the $\SL_N$ action, thus $\cA_{\hbar}(Q)$ admits an action of $\GL_N$. Denote the following quiver by $Q'$
\begin{figure}[H]
    \centering
    \begin{tikzpicture}[scale=.5]
        
        \draw[line width=1pt] (-.65,.65) rectangle (.65,-.65);
        
        \draw[line width=1pt] (.2,.65) -- (1.65,2);
        \draw[line width=1pt] (.65,0) -- (2.15,0);
        \draw[line width=1pt] (.2,-.65) -- (1.65,-2);

        \node at (0,0) {$N$};

    \end{tikzpicture}
    \caption{The quiver $Q'$.}
    \label{fig:quiver Q'}
\end{figure}
\noindent then we have
\begin{lemma}\label{lemma: cutting T[GLn]}
$\cA_{\hbar}(Q')\cong\cA_{\hbar}(Q)^{\GL_N}$.
\end{lemma}
\begin{proof}
Consider the affine Grassmannian $\mathrm{Gr}_{\GL_N}$ and denote by $\mathcal A_Q$ (resp. $\mathcal A_{Q'}$) the ring object in $D_{\GL_n(\mathscr O)\rtimes\mathbb C^{\times}}(\mathrm{Gr}_{\GL_N})$ coming from pushing forward of the dualing complex on the BFN space of triples corresponding to quiver gauge theory $Q$ (resp. $Q'$), see \cite{braverman2017ring}. Then we have $\mathcal A_{Q}\cong \mathcal A_{R}\overset{!}{\otimes}\mathcal A_{Q'}$ \cite{braverman2017ring}, where $\mathcal A_{R}$ is the regular ring object with a natural $\GL_N$ action (which is called the right action in \cite{braverman2017ring}). Therefore we have
\begin{align*}
    \cA_{\hbar}(Q)^{\GL_N}&=\mathrm H^*_{\GL_n(\mathscr O)\rtimes\mathbb C^{\times}}(\mathrm{Gr}_{\GL_N},\mathcal A_{R}\overset{!}{\otimes}\mathcal A_{Q'})^{\GL_N}=\mathrm H^*_{\GL_n(\mathscr O)\rtimes\mathbb C^{\times}}(\mathrm{Gr}_{\GL_N},\mathrm{IC}_0\overset{!}{\otimes}\mathcal A_{Q'})\\
    &=\mathrm{Ext}^*_{\GL_n(\mathscr O)\rtimes\mathbb C^{\times}}(\mathrm{IC}_0,\mathcal A_{Q'})=\cA_{\hbar}(Q').
\end{align*}
\end{proof}

\begin{remark}\label{rmk:another proof}
Apply Lemma \ref{lemma: cutting T[GLn]} to the quiver in the Figure \ref{fig:the quiver for the Weyl algebra} with $K=0$, and we see that $U_{\hbar}(\mathfrak{gl}_N)\cong \left(\mathrm{Weyl}^{\otimes N^2}_\hbar\right)^{\GL_N}$. It follows that $\mathbb C_{\hbar}[\cM(N,K)]\cong \left(\mathrm{Weyl}^{\otimes N(K+N)}_{\hbar}\right)^{\GL_N\times \GL_N}$ where the action comes from the restriction of $\GL_N\times \GL_{N+K}$ to $\GL_N\times \GL_N$. Apply Lemma \ref{lemma: cutting T[GLn]} to the quiver in the Figure \ref{fig:the quiver for the Weyl algebra} with nonzero $K$, followed by removing the edge between flavor vector spaces as it has no effect on Coulomb branch, we see that 
\begin{align*}
    \mathbb C_{\hbar}[\cM(N,K)]\cong \mathcal A_\hbar\text{ associated to the quiver in the Figure \ref{fig:quiver for the Coulomb branch description} with flavor symmetry}.
\end{align*}
\end{remark}

\bigskip\begin{remark}
Apply Lemma \ref{lemma: cutting T[GLn]} to the quiver in the Figure \ref{fig:quiver for T[SU(N)] theory} with $N$ replaced by $N+K$, and we see that $$U_{\hbar}(\mathfrak{gl}_{N+K})^{\GL_N}\cong \mathbf Y^{\lambda}_0,$$ where the right-hand-side is a truncated Yangian for $\mathfrak{sl}_{K}$ and $\lambda=N\omega_1+(N+K)\omega_{K-1}$, where $\omega_i$ is the $i$-th fundamental coweight of $\fsl_K$, see \cite[Appendix B(viii)]{braverman2016coulomb}. This is known as the centralizer construction of Yangian in the literature \cite[\S 2.13]{molev2003yangians}.
\end{remark}

~\\[.5cm]
\paragraph{Acknowledgement.} The authors would like to thank Kevin Costello, Davide Gaiotto, Ji Hoon Lee, and Nafiz Ishtiaque for discussions and comments on earlier version of this paper. The work of SFM
is funded by the Natural Sciences and Engineering Research Council of Canada (NSERC). The work of YZ is supported by Perimeter Institute for Theoretical Physics. Research at Perimeter Institute is supported by the Government of Canada through Industry Canada and by the Province of Ontario through the Ministry of Research and Innovation.

\appendix

\section{Relationship Between Two Poisson Structures}
\label{sec:Relationship between two Poisson structures}
In this appendix,\footnote{We would like to thank the referee of the journal {\it Advances in Theoretical and Mathematical Physics} who motivated the discussion of this appendix.} we explain the equivalence between the
natural Poisson structure on $\mcal{M}(N,K)$, given in \eqref{Poisson before quotient}, and the one inspired by the twisted holography computations in \cite{IshtiaqueMoosavianZhou201809}. We first derive the latter, and then show that (a bosonic version of) it coincides with \eqref{Poisson before quotient}.

\subsection{The Poisson Structure Inspired by Twisted Holography}\label{sec:a Poisson structure inspired by twisted holography}

In start by the construction of a Poisson structure on $\mcal{M}(N,K)$, which is motivated by the construction in \cite{IshtiaqueMoosavianZhou201809}, and consider 2d BF theory coupled to a 1d {\it fermionic} quantum-mechanical system. To cope with the notations of loc. cit. and to better compare the results, we switch to a new notation, similar to \cite{IshtiaqueMoosavianZhou201809}, as follows 
\begin{equation}\label{eq:redefinition of notations}
    I\to \ovl{\psi}, \qquad J\to\psi, \qquad B\to\B, \qquad A\to\A
\end{equation}
Note that $(I,J)$ are bosonic variables while $(\bpsi,\psi)$ have taken to be fermionic in \cite{IshtiaqueMoosavianZhou201809}. We will explain the difference in \S\ref{sec:fermionic vs bosonic quantum mechanics}. Furthermore, we switch the indices 
\begin{equation}
    a\leftrightarrow i, \qquad i=1,\ldots,K, \qquad a=1,\ldots,N.
\end{equation}
This temporary change of notation only applies to this part and not the rest of the paper.

\smallskip 2d BF theory coupled to a 1d quantum mechanics is described by the following action
\begin{equation}\label{eq:action of BF theory coupled to 1d fermionic quantum mechanics}
    S=\frac{1}{2\pi}\bigintsss_{\R^2_{x,w}}\tr(\B\msf{F}_\A)+\bigintsss_{\R_x\times\{w=0\}}\bpsi(\msf{d}+\A)\psi,
\end{equation}
where in the second term, all the indices are suppressed. The equations of motion of the theory are given by
\begin{equation}\label{eq:equations of motion of BF theory coupled to 1d quantum mechanics}
    D\B=\bpsi\psi\delta_{\R_x}, \qquad \msf{F}_\A=0, \qquad D\psi=D\bpsi=0,
\end{equation}
where $D\B:=\msf{d}\B+[\B,\A]$, $D\psi:=\msf{d}\psi+\A\psi$, and a similar expression for $\bpsi$, and $\delta_{\R_x}$ denotes a delta function supported on $\R_x$. The setup is depicted in Fig. \ref{fig:setup of BF theory coupled to a line defect}.

\smallskip Despite the above change of notation, we still denote the phase space of this system as $\mcal{M}(N,K)$ and the ring of functions on $\mcal{M}(N,K)$ as $\C[\mcal{M}(N,K)]$, although a graded one in the following sense. $\C[\mcal{M}(N,K)]$ is equipped with a $\mbb{Z}_2$-grading due to anti-commutativity relations of fermions
\begin{equation}\label{eq:anticommutativity of fermions}
    \bpsii{i}_a\,\bpsii{j}_b=-\bpsii{j}_b\bpsii{i}_a, \qquad \bpsii{i}_a\psi^b_j=-\psi^b_j\bpsii{i}_a, \qquad \psi^a_i\psi^b_j=-\psi^b_j\psi^a_i.
\end{equation}
Hence, 
\begin{equation}\label{eq:grading of the ring of functions on phase space}
    \C[\mcal{M}(N,K)]=\C[\mcal{M}(N,K)]_0\oplus\C[\mcal{M}(N,K)]_1,
\end{equation}
where $\C[\mcal{M}(N,K)]_0$ and $\C[\mcal{M}(N,K)]_1$ is the even and odd subspaces, respectively. The basic fields $\A,\B,\psi,$ and $\bpsi$ have gradings $0,0,+1,$ and $+1$, respectively. The algebra $\mcal{A}_{\text{ob}}$ of observables of the theory is given by
\begin{equation}
    \mcal{A}_{\text{ob}}:=\C[\mcal{M}(N,K)]^{\text{GL}_N},
\end{equation}
where $\C[\mcal{M}(N,K)]^{\text{GL}_N}$ denotes the space of all  $\text{GL}_N$-gauge-invariant elements in $\C[\mcal{M}(N,K)]$. We denote the quantum-mechanical algebra of observables of 2d BF theory couled to 1d quantum mechanics as $\wh{\mcal{A}}_{\text{ob}}$

\paragraph{A Graded Poisson Structure on $\mcal{A}_{\text{ob}}$ from the Classical Limit.} It is shown in \cite{IshtiaqueMoosavianZhou201809} that in the BF theory coupled to a 1d quantum mechanics, i.e. the system described by the Lagrangian \eqref{eq:action}, one can define two sets of gauge-invariant operators: (1) the operators\footnote{To avoid cluttering, we do not use $\wh{\mcal{O}}$ for the quantum-mechanical operators. It should be clear from the context whether we consider classical functions or quantum-mechanical operators.}
\begin{equation}\label{eq:operators O^i_j[n] in BF theory}
    \mcal{O}^i_j[n]:=\frac{1}{\hbar}\bpsi^{\,i}\B^n\psi_j, \qquad i,j=1,\ldots,K,\qquad n=0,1,\ldots,
\end{equation}
which from their very definitions are restricted to the line on which the 1d quantum mechanics lives; and (2) the set of operators
\begin{equation}\label{eq:operators B^n in BF theory}
    \mcal{O}[n]:=\frac{1}{\hbar}\tr_{\C^N}\B^n,\qquad n=0,1,\ldots,
\end{equation}
which are the only local gauge-invariant operators in the BF theory.\footnote{As we are working in two-dimensional flat space, it is always possible to choose the gauge $\A=0$. If we instead decide to work with the pure gauge $\A=g^{-1}\msf{d} g$, for some $\text{GL}_N$-valued function $g$ on $\R^2_{x,w}$, we still cannot get more gauge-invariant operators. The reason is as follows. The other local gauge-invariant operators are traces of polynomials of $\msf{F}_\A=\msf{d}\A+\A\wedge\A$, the field strength of the gauge field $\A$. However, it is evident from \eqref{eq:action of BF theory coupled to 1d fermionic quantum mechanics} that the equations of motion of $\B$ sets $\msf{F}_\A=0$. Therefore, all such operators vanish, which in turn renders \eqref{eq:operators B^n in BF theory} as the only gauge-invariant operators in BF theory (in the absence of the line defect).} These two sets of operators form the basis of the space of gauge-invariant operators, i.e.
\begin{equation}\label{eq:algebra of observables of 2d BF theory coupled to 1d quantum mechanics}
    \wh{\mcal{A}}_{\text{ob}}=\Big\langle (\mcal{O}^i_j[n],\mcal{O}[n]);\forall n,i,j\Big\rangle_{\C},
\end{equation}
where the angle bracket is used to denote the generating set of the algebra. Note that $\mcal{O}[n]$ commute with all the other operators as they live anywhere on the plane of BF theory, so one can pass them around. However, the operators $\mcal{O}^i_j[n]$, being restricted to the line can brought together in two different ways, from the left and also from the right, and one can then define their commutator. These have been illustrated in Fig. \ref{fig:setup of BF theory coupled to a line defect}.
\begin{figure}[H]
    \centering
    \TheSetup
    \caption{The setup of BF theory coupled to a line defect on which a fermionic quantum-mechanical system lives. While the operators $\mcal{O}[n]$ can be inserted anywhere in the bulk, the operators $\mcal{O}^i_j[n]$ live only on the line defect.}
    \label{fig:setup of BF theory coupled to a line defect}
\end{figure}
The commutator is then defined as 
\begin{equation}\label{eq:definition of commutator with the star product}
    [\mcal{O}^i_j[m],\mcal{O}^k_l[n]]_{\star}(x):=\lim_{\epsilon\to 0}\left(\mcal{O}^i_j[m](x+\epsilon)\star\mcal{O}^k_l[m](x-\epsilon)-\mcal{O}^k_l[m](x+\epsilon)\star\mcal{O}^i_j[m](x-\epsilon)\right),
\end{equation}
where as in \eqref{eq:action of BF theory coupled to 1d fermionic quantum mechanics}, $x$ denotes the coordinate on $\R_x$ along which the line defect is extended. As we are dealing with a coupled system, one needs to be careful about the quantum correction coming from the interaction of fields on the line with those of the bulk. It is then shown that \cite[eq. (92)]{IshtiaqueMoosavianZhou201809}
\begin{equation}\label{eq:quantum-corrected bracket [O^i_j[1],O^k_l[1]]_star}
    [\mcal{O}^i_j[1],\mcal{O}^k_l[1]]_\star=\delta^i_l\mcal{O}^k_j[2]-\delta^k_j\mcal{O}^i_l[2]+\frac{\hbar^2}{4}\left(\mcal{O}^m_j[0]\mcal{O}^k_m[0]\mcal{O}^i_l[0]-\mcal{O}^m_l[0]\mcal{O}^i_m[0]\mcal{O}^k_j[0]\right).
\end{equation}
Then, by a certain redefinition of operators \cite[eq. (98)]{IshtiaqueMoosavianZhou201809} and taking the $N\to\infty$ limit, which kills the trace relations among generators, \eqref{eq:quantum-corrected bracket [O^i_j[1],O^k_l[1]]_star} becomes one of the standard forms of $Y(\mfk{gl}_K)$, the Yangian associated with the Lie algebra $\mfk{gl}_K$, as presented, for example, in \cite[eq. (4), Theorem 12.1.1]{ChariPressley199510}. This relation together with the quantum-mechanically undeformed relations\footnote{Physically, it is clear why there is no corrections to these relations. The correction happens due to interactions in the plane of BF theory. The interaction vertex, which, by taking into account \eqref{eq:redefinition of notations}, can be read off from \eqref{eq:action of BF theory coupled to 1d fermionic quantum mechanics}, is $\B\wedge\A\wedge\A$. In the case of $[\mcal{O}^i_j[0],\mcal{O}^k_l[0]]_\star$, there is no $\B$ to be contracted in the bulk, while in the case of $[\mcal{O}^i_j[0],\mcal{O}^k_l[1]]_\star$, the $\B$ field in $\mcal{O}^k_l[1]$ can be contracted with an $\A$ in the bulk while the remaining the part of the vertex, $\B\wedge A$, form a loop. Any such diagrams vanish, as has been deduced in \cite[Footnote 26]{IshtiaqueMoosavianZhou201809}, hence no quantum correction in this case either.}
\begin{equation}\label{eq:undeformed brackets}
    \begin{aligned}
        [\mcal{O}^i_j[0],\mcal{O}^k_l[0]]_\star&=\delta^i_l\mcal{O}^i_j[0]-\delta^k_l\mcal{O}^i_l[0],
        \\
        [\mcal{O}^i_j[0],\mcal{O}^k_l[1]]_\star&=\delta^i_l\mcal{O}^i_j[1]-\delta^k_l\mcal{O}^i_l[1],
    \end{aligned}
\end{equation}
can then be used to reconstruct other Yangian relations \cite[Theorem 12.1.1]{ChariPressley199510}.

\smallskip By the basic philosophy of elementary quantum mechanics, or its of sophisticated version, deformation quantization  \cite{Berezin197410,Berezin197504,Berezin197506,BayenFlatoFronsdalLichnerowiczSternheimer197803I,BayenFlatoFronsdalLichnerowiczSternheimer197803II}, quantum mechanics amounts to deformation of the commutating algebra of observables of the classical theory $\mcal{A}_{\text{ob}}$ to the non-commutative algebra of quantum-mechanical observables, i.e. the following should hold
\begin{equation}
    \lim_{\hbar\to 0}\wh{\mcal{A}}_{\text{ob}}=\mcal{A}_{\text{ob}},
\end{equation}
to have a good notion of deformation quantization of the classical observables (i.e. gauge-invariant functions on phase space) of our theory. Let us emphasize that although $\mcal{A}_{\text{op}}$ is a commutative algebra, it is equipped with a {\it non-trivial} Poisson structure, which we denote as $\pois{\cdot}{\cdot}$. Its relation to the commutator in $\wh{\mcal{A}}_{\text{ob}}$ is 
\begin{equation}\label{eq:from commutators to Poisson bracket}
    [\cdot,\cdot]_\star=\hbar\pois{\cdot}{\cdot}+\mcal{O}(\hbar^2).
\end{equation}
There are many flavors of deformation quantization, such as Fedosov's \cite{Fedosov199402,Fedosov1995}, and Kontsevich's \cite{Kontsevich199709} deformation quantizations,\footnote{Fedosov's quantization applies to symplectic manifolds and can be generalized to regular Poisson manifolds (see \cite{DolgushevIsaevLyakhovichSharapov200206} for a discussion concerning non-regular Poisson manifolds). On the other hand, Kontsevich's deformation quantization works for general Poisson manifolds, even those having non-regular symplectic foliation.} which involve the notion of star product \cite{Moyal194901}, hence the star in the notation \eqref{eq:definition of commutator with the star product}. In our case, \eqref{eq:quantum-corrected bracket [O^i_j[1],O^k_l[1]]_star}  and \eqref{eq:undeformed brackets} are the defining relations of the quantum-mechanical algebra of observables. We thus would like to recover the commutative algebra of observable by taking $\hbar\to 0$ limit. However, it is clear that a naive $\hbar\to 0$ limit of \eqref{eq:quantum-corrected bracket [O^i_j[1],O^k_l[1]]_star}  and \eqref{eq:undeformed brackets} does not land us on a Poisson structure according to the general rule \eqref{eq:from commutators to Poisson bracket}. A proper way to take the limit is to redefine the operators \eqref{eq:operators O^i_j[n] in BF theory} as
\begin{equation}
    \wt{\mcal{O}}^i_j[n]:=\hbar\mcal{O}^i_j[n].
\end{equation}
In terms of these generators, \eqref{eq:undeformed brackets} becomes
\begin{equation}\label{eq:undeformed brackets in terms of tildeO generators}
    \begin{aligned}
        [\wt{\mcal{O}}^i_j[0],\wt{\mcal{O}}^k_l[0]]_\star&=\hbar\left(\delta^i_l\wt{\mcal{O}}^k_j[0]-\delta^k_j\wt{\mcal{O}}^i_l[0]\right),
        \\
        [\wt{\mcal{O}}^i_j[0],\wt{\mcal{O}}^k_l[1]]_\star&=\hbar\left(\delta^i_l\wt{\mcal{O}}^k_j[1]-\delta^k_j\wt{\mcal{O}}^i_l[1]\right),
    \end{aligned}
\end{equation}
while \eqref{eq:quantum-corrected bracket [O^i_j[1],O^k_l[1]]_star} becomes
\begin{equation}\label{eq:quantum-corrected bracket [tildeO^i_j[1],tildeO^k_l[1]]_star}
    [\wt{\mcal{O}}^i_j[1],\wt{\mcal{O}}^k_l[1]]_\star=\hbar\left\{\delta^i_l\wt{\mcal{O}}^k_j[2]-\delta^k_j\wt{\mcal{O}}^i_l[2]+\frac{1}{4}\left(\wt{\mcal{O}}^m_j[0]\wt{\mcal{O}}^k_m[0]\wt{\mcal{O}}^i_l[0]-\wt{\mcal{O}}^m_l[0]\wt{\mcal{O}}^i_m[0]\wt{\mcal{O}}^k_j[0]\right)\right\}.
\end{equation}
Comparing \eqref{eq:undeformed brackets in terms of tildeO generators} and \eqref{eq:quantum-corrected bracket [tildeO^i_j[1],tildeO^k_l[1]]_star}, it is now evident that by taking $\hbar\to 0$ limit, we end up with a Poisson structure $\pois{\cdot}{\cdot}$ on the classical algebra of observables
\begin{equation}
    \lim_{\hbar\to 0}[\wh{\mcal{F}},\wh{\mcal{G}}]_\star:=\hbar\pois{\mcal{F}}{\mcal{G}}, \qquad \forall\mcal{F},\mcal{G}\in\mcal{A}_{\text{ob}},
\end{equation}
where $\wh{\mcal{F}},\wh{\mcal{G}}\in\wh{\mcal{A}}$ are the quantum-mechanical operators corresponding to $\mcal{F}$ and $\mcal{G}$, respectively.\footnote{There exists the subtle question of possible normal ordering of quantum-mechanical operators, which we skip in this treatment, as it would not affect any conclusions.} A remarkable feature of $\wh{\mcal{A}}_{\text{op}}$ is that their is no higher-order $\mcal{O}(\hbar^2)$ corrections to the algebra. Hence, the bracket $[\cdot,\cdot]_\star$ provides a deformation of the Poisson algebra of observables of BF theory coupled to a fermionic quantum mechanics living on a line equipped with the Poisson structure $\{\cdot,\cdot\}_{\star}$, in the precise sense of deformation quantization \eqref{eq:from commutators to Poisson bracket}. Thinking of $\wt{\mcal{O}}^i_j[n]$ as classical observables, in particular, we have
\begin{equation}
    \begin{aligned}\label{eq:induced Poisson structure from classical limit}
    \pois{\wt{\mcal{O}}^i_j[0]}{\mcal{O}^k_l[0]}&=\delta^i_l\wt{\mcal{O}}^k_j[0]-\delta^k_j\wt{\mcal{O}}^i_l[0],
    \\
    \pois{\wt{\mcal{O}}^i_j[0]}{\mcal{O}^k_l[1]}&=\delta^i_l\wt{\mcal{O}}^k_j[1]-\delta^k_j\wt{\mcal{O}}^i_l[1],
    \\
    \pois{\wt{\mcal{O}}^i_j[1]}{\mcal{O}^k_l[1]}&=\delta^i_l\wt{\mcal{O}}^k_j[2]-\delta^k_j\wt{\mcal{O}}^i_l[2]+\frac{1}{4}\left(\wt{\mcal{O}}^m_j[0]\wt{\mcal{O}}^k_m[0]\wt{\mcal{O}}^i_l[0]-\wt{\mcal{O}}^m_l[0]\wt{\mcal{O}}^i_m[0]\wt{\mcal{O}}^k_j[0]\right)
    \end{aligned}
\end{equation}

\smallskip Finally, the grading \eqref{eq:grading of the ring of functions on phase space} will induce a grading on the Poisson structure $\pois{\cdot}{\cdot}$ such that 
\begin{equation}
    \pois{\mcal{F}_i}{\mcal{F}_j}\in\C[\mcal{M}(N,K)]_{i+j}, \qquad \forall\mcal{F}_i\in\C[\mcal{M}(N,K)]_i, 
\end{equation}
where the $\mbb{Z}_2$ relations $0+0=0,1+0=1,$ and $1+1=0$ are understood. This shows that while $\C[\mcal{M}(N,K)]_0$ is a subring of $\C[\mcal{M}(N,K)]$, $\C[\mcal{M}(N,K)]_1$ is merely a subspace thereof. Furthermore, 
\begin{equation}
    \pois{\mcal{F}_i}{\mcal{F}_j}=(-1)^{ij+1}\pois{\mcal{F}_j}{\mcal{F}_i},  \qquad  \forall\mcal{F}_i\in\C[\mcal{M}(N,K)]_i,
\end{equation}
which implies that we have indeed a graded Poisson structure. 

\paragraph{Poisson Brackets of Elementary Fields.} As we have explained above, the Poisson structure $\pois{\cdot}{\cdot}$ equips the commutative $\mcal{A}_{\text{ob}}$ with a graded Poisson structure. Furthermore, $\pois{\cdot}{\cdot}$ induces a Poisson structure on $\C[\mcal{M}(N,K)]$. Therefore, we would like to know the Poisson brackets of elementary fields $\B,\psi,$ and $\bpsi$ with respect to $\pois{\cdot}{\cdot}$.  
\begin{remark}[Gauge Fields] As before, we choose the gauge $\A=0$. Therefore, $\A$ (or more precisely, its components $\A^a_\mu$ where $\mu=1,2$ denotes the spacetime indices) does not appear as an element of $\C[\mcal{M}(N,K)]$. Therefore, there is no Poisson bracket between $\A$ and the rest of the fields.  \qed 
\end{remark}
These Poisson brackets can be constructed systematically by starting from the first bracket \eqref{eq:induced Poisson structure from classical limit} and deducing the Poisson brackets $\pois{\bpsi}{\bpsi},\pois{\bpsi}{\psi},$ and $\pois{\psi}{\psi}$. Then, we take the second bracket of \eqref{eq:induced Poisson structure from classical limit} and read off $\pois{\B}{\psi}$ and $\pois{\B}{\bpsi}$. Finally, by considering the third bracket of \eqref{eq:induced Poisson structure from classical limit}, $\pois{\B}{\B}$ can be proposed. 

\smallskip By the basic properties of Poisson structure, the left-hand side of the first relation in \eqref{eq:induced Poisson structure from classical limit} can be written as
\begin{equation}\label{eq:detailed computation of {O^i_j[0],O^k_l[0]}}
   \begin{aligned}
        \pois{\wt{\mcal{O}}^i_j[0]}{\wt{\mcal{O}}^k_l[0]}&=\pois{\bpsii{i}}{\bpsii{k}}\psi_l\psi_j-\bpsii{k}\pois{\bpsii{i}}{\psi_l}\psi_j
    \\
    &+\bpsii{i}\pois{\psi_j}{\bpsii{k}}\psi_l-\bpsii{i}\bpsii{k}\pois{\psi_j}{\psi_l},
   \end{aligned}
\end{equation}
where $-1$ in the second and last term comes from anti-commutativity of fermions \eqref{eq:anticommutativity of fermions}, and we have suppressed all $\mfk{gl}_N$ indices. For this to vanish, we define the Poisson brackets
\begin{equation}\label{eq:Poisson bracket of psi and barpsi}
    \pois{\psi^a_i}{\psi^b_j}:=0,\qquad \pois{\psi^a_i}{\bpsii{j}_b}:=+\delta_i^j\delta^a_b, \qquad  \pois{\bpsii{i}_a}{\bpsii{j}_b}:=0. 
\end{equation}

\smallskip Next, consider the second relation in \eqref{eq:induced Poisson structure from classical limit}
\begin{equation}\label{eq:detailed computation of {O^i_j[0],O^k_l[1]}}
    \begin{aligned}
        \pois{\wt{\mcal{O}}^i_j[0]}{\wt{\mcal{O}}^k_l[1]}&=\pois{\bpsii{i}}{\bpsii{k}}\B\psi_l\psi_j-\bpsii{k}\pois{\bpsii{i}}{\B}\psi_l\psi_j-\bpsii{k}\B\pois{\bpsii{i}}{\psi_l}\psi_j    
        \\
        &+\bpsii{i}\pois{\psi_j}{\bpsii{k}}\B\psi_l-\bpsii{i}\bpsii{k}\pois{\psi_j}{\B}\psi_l-\bpsii{i}\bpsii{k}\B\pois{\psi_j}{\psi_l},
    \end{aligned}
\end{equation}
which, upon using \eqref{eq:Poisson bracket of psi and barpsi}, leaves us with the Poisson brackets of $\psi$ and $\bpsi$ with $\B$, which are the second and fourth terms. For their sum to match the second term of \eqref{eq:induced Poisson structure from classical limit}, we propose the following Poisson brackets\footnote{There is a freedom of choosing $\pm$ sign in either of the brackets. Here, we have defined $\pois{\psi^a_i}{\B^b_c}$ with $+$ and $\pois{\bpsii{i}_a}{\B^b_c}$ with $-$. As we will see below, $\alpha$ itself can have only two different values (see \eqref{eq:values of alpha}) for the Jacobi identities to be satisfied.}
\begin{equation}\label{eq:Poisson brackets of psi and barpsi with the B field}
    \pois{\psi^a_i}{\B^b_c}:=+\alpha\delta^a_c\psi^b_i, \qquad \pois{\bpsii{i}_a}{\B^b_c}:=-\alpha\delta^b_a\bpsii{i}_c,
\end{equation}
where $\alpha$ is a coefficient which will be fixed when we prove the Jacobi identity for the Poisson bracket below. We then have
\begin{equation}
    \begin{aligned}
        \pois{\wt{\mcal{O}}^i_j[0]}{\wt{\mcal{O}}^k_l[1]}&=-\bpsii{k}\B\pois{\bpsii{i}}{\psi_l}\psi_j+\bpsii{i}\pois{\psi_j}{\bpsii{k}}\B\psi_l+\alpha\bpsii{i}_a\psi^a_l\bpsii{k}_b\psi^b_j-\alpha\bpsii{i}_a\psi^a_l\bpsii{k}_b\psi^b_j
        \\
        &=\delta^i_l\wt{\mcal{O}}^k_j[1]-\delta^k_j\wt{\mcal{O}}^i_l[1],
    \end{aligned}
\end{equation}
as claimed. 

\smallskip Finally, the third relation in \eqref{eq:induced Poisson structure from classical limit} gives
\begin{equation}\label{eq:detailed computation of {O^i_j[1],O^k_l[1]}}
    \begin{aligned}
        \big\{\wt{\mcal{O}}^i_j[1],\wt{\mcal{O}}^k_l[1]\big\}_{\star}&=
        \pois{\bpsii{i}}{\bpsii{k}}\B\psi_l\B\psi_j-\bpsii{k}\pois{\bpsii{i}}{\B}\psi_l\B\psi_j-\bpsii{k}\B\pois{\bpsii{i}}{\psi_l}\B\psi_j
        \\
        &+\bpsii{i}\pois{\B}{\bpsii{k}}\B\psi_l\psi_j+\bpsii{i}\bpsii{k}\pois{\B}{\B}\psi_l\psi_j+\bpsii{i}\bpsii{k}\B\pois{\B}{\psi_l}\psi_j
        \\
        &+\bpsii{i}\B\pois{\psi_j}{\bpsii{k}}\B\psi_l-\bpsii{i}\B\bpsii{k}\B\pois{\psi_j}{\psi_l}-\bpsii{i}\B\bpsii{k}\pois{\psi_j}{\B}\psi_l.
    \end{aligned}
\end{equation}
The four terms involving the Poisson bracket of $\B$ with $\psi$ or $\bpsi$ give
\begin{eqaligned}\label{eq:four terms involving Poisson bracket of B with fermionic fields}
    -\bpsii{k}\pois{\bpsii{i}}{\B}\psi_l\B\psi_j&=+\alpha\bpsii{k}_c\delta_a^c\bpsii{i}_d\psi_l^d\B^a_b\psi_j^b=+\alpha\wt{\mcal{O}}^i_l[0]\wt{\mcal{O}}^k_j[1],
    \\
    +\bpsii{i}\pois{\B}{\bpsii{k}}\B\psi_l\psi_j&=+\alpha\delta^a_c\bpsii{i}_a\bpsii{k}_b\B^c_d\psi_l^d\psi_j^b=-\alpha\wt{\mcal{O}}^k_j[0]\wt{\mcal{O}}^i_l[1],
    \\
    +\bpsii{i}\bpsii{k}\B\pois{\B}{\psi_l}\psi_j&=-\alpha\delta^d_b\bpsii{i}_a\bpsii{k}_c\B^c_d\psi^a_l\psi^b_j=+\alpha\wt{\mcal{O}}^i_l[0]\wt{\mcal{O}}^k_j[0],
    \\
    -\bpsii{i}\B\bpsii{k}\pois{\psi_j}{\B}\psi_l&=-\alpha\delta^b_d\bpsii{i}_a\B^a_b\bpsii{k}_c\psi^c_j\psi_l^d=-\alpha\wt{\mcal{O}}^k_j[0]\wt{\mcal{O}}^i_l[1],
\end{eqaligned}
which in total contribute $2\alpha(\wt{\mcal{O}}^i_l[0]\wt{\mcal{O}}^k_j[1]-\wt{\mcal{O}}^k_j[0]\wt{\mcal{O}}^i_l[1])$. Using this, \eqref{eq:Poisson bracket of psi and barpsi}, and \eqref{eq:Poisson brackets of psi and barpsi with the B field}, we propose the following Poisson bracket of $\B$ with itself
\begin{equation}\label{eq:Poisson bracket of B and B}
    \pois{\B^a_b}{\B^c_d}:=2\alpha\left(\delta^a_d\B^c_b-\delta^c_b\B^a_d\right)+\frac{1}{4}\left(\delta^a_d\bpsii{m}_b\psi^c_m-\delta^c_b\bpsii{m}_d\psi^a_m\right).
\end{equation}
The contribution of the first piece of this to $\bpsii{i}\bpsii{k}\pois{\B}{\B}\psi_l\psi_j$ is
\begin{equation}
    2\alpha\bpsii{i}_a\bpsii{k}_c\left(\delta^a_d\B^c_b-\delta^c_b\B^a_d\right)\psi_l^d\psi_j^b=2\alpha\left(-\wt{\mcal{O}}^i_l[0]\wt{\mcal{O}}^k_j[1]+\wt{\mcal{O}}^k_j[0]\wt{\mcal{O}}^i_l[1]\right),
\end{equation}
which precisely cancels the sum of contributions \eqref{eq:four terms involving Poisson bracket of B with fermionic fields}. Adding the contribution of the second term of \eqref{eq:Poisson bracket of B and B} to $\pois{\B}{\B}$, we have
\begin{equation}
    \begin{aligned}
       \big\{\wt{\mcal{O}}^i_j[1],\wt{\mcal{O}}^k_l[1]\big\}_{\star}&=\delta^i_l\wt{\mcal{O}}^k_j[2]-\delta^k_j\wt{\mcal{O}}^i_l[2]+\frac{1}{4}\left(\bpsii{m}_a\psi^a_j\bpsii{k}_b\psi^b_m\bpsii{i}_c\psi^c_l-\bpsii{m}_a\psi^a_l\bpsii{i}_b\psi^b_m\bpsii{k}_c\psi^c_j\right)
       \\
       &=\delta^i_l\wt{\mcal{O}}^k_j[2]-\delta^k_j\wt{\mcal{O}}^i_l[2]+\frac{1}{4}\left(\wt{\mcal{O}}^m_j[0]\wt{\mcal{O}}^k_m[0]\wt{\mcal{O}}^i_l[0]-\wt{\mcal{O}}^m_l[0]\wt{\mcal{O}}^i_m[0]\wt{\mcal{O}}^k_j[0]\right),
    \end{aligned}
\end{equation}
which is the last term of \eqref{eq:induced Poisson structure from classical limit}. 

\smallskip Finally, we need to show that \eqref{eq:Poisson bracket of psi and barpsi}, \eqref{eq:Poisson brackets of psi and barpsi with the B field}, and \eqref{eq:Poisson bracket of B and B} satisfy the graded Jacobi identity. Denoting the grading of a field $\Phi$ by $\deg \Phi$, the graded Jacobi identity is
\begin{equation}
    \begin{aligned}
        \pois{\Phi}{\pois{\Phi'}{\Phi''}}&+(-1)^{\deg\Phi(\deg\Phi'+\deg\Phi'')}\pois{\Phi'}{\pois{\Phi''}{\Phi}}
        \\
        &+(-1)^{(\deg\Phi+\deg\Phi')\deg\Phi''}\pois{\Phi''}{\pois{\Phi}{\Phi'}}=0.
    \end{aligned}
\end{equation}
Due to \eqref{eq:Poisson bracket of psi and barpsi}, for $(\Phi,\Phi',\Phi'')$ involving fermionic fields or two fermionic fields of the same type and $\B$ such as $(\psi,\psi,\B)$, there is nothing to show as the Jacobi identity is trivial. The first non-trivial case is $(\psi,\bpsi,\B)$, where we have
\begin{eqaligned}
    \pois{\psi^a_i}{\pois{\bpsii{j}_b}{\B^c_d}}&=-\alpha\delta^c_b\delta^a_d\delta^j_i,
    \\
    \pois{\bpsii{j}_b}{\pois{\B^c_d}{\psi^a_i}}&=-\alpha\delta^a_d\delta^c_b\delta^j_i,
    \\
    \pois{\B^c_d}{\pois{\psi^a_j}{\bpsii{j}_b}}&=0,
\end{eqaligned}
and hence
\begin{equation}
    \pois{\psi^a_i}{\pois{\bpsii{j}_b}{\B^c_d}}-\pois{\bpsii{j}_b}{\pois{\B^c_d}{\psi^a_i}}+\pois{\B^c_d}{\pois{\psi^a_j}{\bpsii{j}_b}}=0.
\end{equation}

For $(\psi,\B,\B)$, we use \eqref{eq:Poisson brackets of psi and barpsi with the B field} to compute the followings
\begin{equation}
    \begin{aligned}
        \pois{\psi^a_i}{\pois{\B^b_c}{\B^d_e}}&=\left(2\alpha^2+\frac{1}{4}\right)\left(\delta^b_e\delta^a_c\psi^d_i-\delta^d_c\delta^a_e\psi^b_i\right),
        \\
        \pois{\B^b_c}{\pois{\B^d_e}{\psi^a_i}}&=+\alpha^2\delta^a_e\delta^d_c\psi^b_i
        \\
        \pois{\B^d_e}{\pois{\psi^a_i}{\B^b_c}}&=-\alpha^2\delta^a_c\delta^b_e\psi^d_i,
    \end{aligned}
\end{equation}
and hence
\begin{equation}
    \pois{\psi^a_i}{\pois{\B^b_c}{\B^d_e}}+\pois{\B^b_c}{\pois{\B^d_e}{\psi^a_i}}+\pois{\B^d_e}{\pois{\psi^a_i}{\B^b_c}}=\left(\alpha^2+\frac{1}{4}\right)\left(\delta^b_e\delta^a_c\psi^d_i-\delta^d_c\delta^a_e\psi^b_i\right).
\end{equation}
For this to vanish, we need to fix
\begin{equation}\label{eq:values of alpha}
    \alpha=\pm\frac{\mfk{i}}{2}.
\end{equation}
As we are working with a complex Lie algebra ($\mfk{gl}_K$), it is acceptable to have complex structure constants in \eqref{eq:Poisson brackets of psi and barpsi with the B field} and \eqref{eq:Poisson bracket of B and B}. A similar computation, using \eqref{eq:Poisson brackets of psi and barpsi with the B field} and \eqref{eq:values of alpha}, gives 
\begin{equation}
    \pois{\bpsi^a_i}{\pois{\B^b_c}{\B^d_e}}+\pois{\B^b_c}{\pois{\B^d_e}{\bpsi^a_i}}+\pois{\B^d_e}{\pois{\bpsi^a_i}{\B^b_c}}=0.
\end{equation}
Finally, we consider the Poisson bracket of three $\B$ fields
\begin{eqaligned}
    \pois{\B^a_b}{\pois{\B^c_d}{\B^e_f}}&=4\alpha\delta^c_f\delta^a_d\left(\alpha\B^e_b+\frac{1}{16}\bpsii{m}_b\psi^e_m\right)-4\alpha\delta^c_f\delta^e_b\left(\alpha\B^a_d+\frac{1}{16}\bpsii{m}_d\psi^a_m\right)
    \\
    &+4\alpha\delta^e_d\delta^c_b\left(\alpha\B^a_f+\frac{1}{16}\bpsii{m}_f\psi^a_m\right)-4\alpha\delta^e_d\delta^a_f\left(\alpha\B^c_b+\frac{1}{16}\bpsii{m}_b\psi^c_m\right),
    \\
    \pois{\B^c_d}{\pois{\B^e_f}{\B^a_b}}&=4\alpha\delta^e_b\delta^c_f\left(\alpha\B^a_d+\frac{1}{16}\bpsii{m}_d\psi^a_m\right)-4\alpha\delta^e_b\delta^a_d\left(\alpha\B^c_f+\frac{1}{16}\bpsii{m}_f\psi^c_m\right)
    \\
    &+4\alpha\delta^a_f\delta^e_d\left(\alpha\B^c_b+\frac{1}{16}\bpsii{m}_b\psi^c_m\right)-4\alpha\delta^a_f\delta^c_b\left(\alpha\B^e_d+\frac{1}{16}\bpsii{m}_d\psi^b_m\right),
    \\
    \pois{\B^e_f}{\pois{\B^a_b}{\B^c_d}}&=4\alpha\delta^a_d\delta^e_b\left(\alpha\B^c_f+\frac{1}{16}\bpsii{m}_f\psi^c_m\right)-4\alpha\delta^a_d\delta^c_f\left(\alpha\B^e_b+\frac{1}{16}\bpsii{m}_b\psi^e_m\right)
    \\
    &+4\alpha\delta^c_b\delta^a_f\left(\alpha\B^e_d+\frac{1}{16}\bpsii{m}_d\psi^e_m\right)-4\alpha\delta^c_b\delta^e_d\left(\alpha\B^a_f+\frac{1}{16}\bpsii{m}_f\psi^a_m\right),
\end{eqaligned}
and as such
\begin{equation}
  \pois{\B^a_b}{\pois{\B^c_d}{\B^e_f}}+\pois{\B^c_d}{\pois{\B^e_f}{\B^a_b}}+\pois{\B^e_f}{\pois{\B^a_b}{\B^c_d}}=0,
\end{equation}
irrespective of the value of $\alpha$. 

\smallskip Therefore, we have successfully constructed a Poisson structure $\pois{\cdot}{\cdot}$ on $\C[\mcal{M}(N,K)]$ from the classical (i.e. $\hbar\to 0$) limit of commutators of generators of $Y(\mfk{gl}_K)$, which has been realized as the full algebra of observables of 2d BF theory coupled to a 1d quantum mechanics in \cite{IshtiaqueMoosavianZhou201809}. We conclude with the summary of these results as follows

\begin{theorem}\label{thr:Poisson structure inspired by twisted holography}
    The commutator in $\wh{\mcal{A}}_{\normalfont\text{ob}}$, the algebra of observables of 2d BF theory coupled to a 1d quantum mechanics, induces, in the classical {\normalfont(}i.e. $\hbar\to 0${\normalfont)} limit, a graded Poisson structure on $\C[\mcal{M}(N,K)]$ {\normalfont(}and hence on  $\mcal{A}_{\normalfont\text{ob}}${\normalfont)}.
    Furthermore, $\pois{\cdot}{\cdot}$ equips $\C[\mcal{M}(N,K)]$ with a graded Poisson structure where the Poisson brackets between the elementary fields are given by
    \begin{equation}\label{eq:Poisson structure of elementary fields in the fermionic case}
        \begin{gathered}
            \pois{\psi^a_i}{\psi^b_j}:=0,\qquad \pois{\psi^a_i}{\bpsii{j}_b}:=+\delta_i^j\delta^a_b, \qquad  \pois{\bpsii{i}_a}{\bpsii{j}_b}:=0,
            \\
            \pois{\psi^a_i}{\B^b_c}:=+\alpha\delta^a_c\psi^b_i, \qquad \pois{\bpsii{i}_a}{\B^b_c}:=-\alpha\delta^b_a\bpsii{i}_c,
            \\
            \pois{\B^a_b}{\B^c_d}:=2\alpha\left(\delta^a_d\B^c_b-\delta^c_b\B^a_d\right)+\frac{1}{4}\left(\delta^a_d\bpsii{m}_b\psi^c_m-\delta^c_b\bpsii{m}_d\psi^a_m\right),
        \end{gathered}
    \end{equation}
    where $\alpha$ is a parameter, which is fixed by the Jacobi identities to be
    \begin{equation}
        \alpha=\pm\frac{\mfk{i}}{2}. 
    \end{equation}
\end{theorem}

\subsection{Fermionic vs Bosonic Quantum Mechanics}\label{sec:fermionic vs bosonic quantum mechanics}

 In \cite{IshtiaqueMoosavianZhou201809}, the quantum-mechanical system that couples to the 2d BF theory has fermionic degrees of freedom. As such, as we realized above, we ended up with a graded Poisson structure in the classical limit. However, nothing prevent us from coupling our 2d BF theory to 1d quantum mechanics with bosonic degrees of freedom. From the standard perspective of holography, it is indeed more natural to consider fermionic degrees of freedom on the line since integrating them out would naturally leads to Wilson lines. From the perspective of holography studied in \cite{IshtiaqueMoosavianZhou201809}, different descriptions are related to the different bulk descriptions: fermionic degrees of freedoms are related to configurations of D3--D5 branes in the bulk, while bosonic degrees of freedom, related to the fermionic ones by bosonization, are captured by an alternative but equivalent description in terms of D3--D3 brane system in the bulk, as realized in \cite{GomisPasserini200604,GomisPasserini200612}. 

\smallskip If we switch to bosonic degrees of freedom as
\begin{equation}
    \bpsii{i}_a\mapsto \ovl{\phi}^i_a, \qquad \psi^a_i\mapsto\phi^a_i,
\end{equation}
the computations in \cite{IshtiaqueMoosavianZhou201809} will not be affected, as they should not. This is due to the fact that although we have now commuting degrees of freedom, the propagators $\langle\ovl{\phi}^i_a\phi^b_j\rangle$ which enters the computations involves an extra minus sign, as explained in \cite[Remark 3, pg. 14]{IshtiaqueMoosavianZhou201809}. Therefore, if we go through the same exercise as the fermionic case, we end up with a  Poisson structure on $\mcal{A}_{\text{ob}}$, which is now generated by the gauge-invariant functions $\mcal{O}^i_j[n]:=\ovl{\phi}^i\B^n\phi_j$ and $\mcal{O}[n]:=\tr_{\C^N}\B^n$. However, this Poisson structure is not graded anymore. The only difference in the computations similar to \eqref{eq:detailed computation of {O^i_j[0],O^k_l[0]}}, \eqref{eq:detailed computation of {O^i_j[0],O^k_l[1]}}, and \eqref{eq:detailed computation of {O^i_j[1],O^k_l[1]}} is that all signs are now $+$. However, the Poisson structure, which we denote as $\poisb{\cdot}{\cdot}$, will be similar to $\pois{\cdot}{\cdot}$. They can be read-off from \eqref{eq:Poisson structure of elementary fields in the fermionic case}
\begin{equation}\label{eq:Poisson structure of elementary fields from bosonic quantum mechanics}
        \begin{gathered}
            \poisb{\phi^a_i}{\phi^b_j}=0,\qquad \poisb{\ovl{\phi}^{i}_a}{\phi^b_j}=+\delta_i^j\delta^a_b, \qquad  \poisb{\ovl{\phi}^i_a}{\ovl{\phi}^j_b}=0,
            \\
            \poisb{\phi^a_i}{\B^b_c}=+\alpha\delta^a_c\phi^i_b, \qquad \poisb{\ovl{\phi}^{i}_a}{\B^b_c}=-\alpha\delta_a^b\ovl{\phi}^{\,i}_c
            \\
            \poisb{\B^a_b}{\B^c_d}:=2\alpha\left(\delta^a_d\B^c_b-\delta^c_b\B^a_d\right)+\frac{1}{4}\left(\delta^a_d\ovl{\phi}^{\,m}_b\phi^c_m-\delta^c_b\ovl{\phi}^{\,m}_d\phi^a_m\right),
        \end{gathered}
    \end{equation} 
with $\alpha$ as in \eqref{eq:values of alpha}. Furthermore, none of the Jacobi identity computations is dependent on the fermionic nature of the degrees of freedom (only all $-\mapsto +$). Hence, $\poisb{\cdot}{\cdot}$ is indeed a Poisson structure on $\C[\mcal{M}(N,K)]$, which is now not graded as well. Furthermore, as in the case of fermionic degrees of freedom, this Poisson structure equips $\mcal{A}_{\text{ob}}$ with the structure of a commutative Poisson algebra. We thus see that the only effect of switching from fermionic to bosonic degrees of freedom on the line is that we end up with an ordinary Poisson structure, rather than a graded one, on the algebra of classical observables. 

\subsection{Equivalence of Poisson Structures \texorpdfstring{\eqref{Poisson before quotient}}{} and \texorpdfstring{\eqref{eq:Poisson structure of elementary fields from bosonic quantum mechanics}}{}} 
A natural question is to compare the Poisson structures we introduced above, that is, \eqref{Poisson before quotient} and \eqref{eq:Poisson structure of elementary fields from bosonic quantum mechanics}. To relate the two, we introduce a new $\B$ field, which we denote as $\wt{\B}$
\begin{equation}\label{eq:generic form of tildeB}
    \wt{\B}^a_b:=p\B^a_b+q\ovl{\phi}^{\,m}_b\phi^a_m,
\end{equation}
for some $p$ and $q$ to be determined below. Then, 
\begin{eqaligned}
    \poisb{\phi^a_i}{\wt{\B}^b_c}&=\poisb{\phi^a_i}{p\B^b_c+q\ovl{\phi}^{\,m}_c\phi^b_m}
    \\
    &=p\poisb{\phi^a_i}{\B^b_c}+q\poisb{\phi^a_i}{\ovl{\phi}^{\,m}_c\phi^b_m}
    \\
    &=\alpha p\delta_c^a\phi^b_i+q\delta^m_i\delta^a_c\phi^b_m
    \\
    &=\delta^a_c(\alpha p+ q)\phi^b_i.
\end{eqaligned}
To be able to compare this with \eqref{Poisson before quotient}, we set this to zero which gives the constraint
\begin{equation}\label{eq:relation between p and q}
    q=-\alpha p.
\end{equation}
Next, consider 
\begin{eqaligned}
    \poisb{\wt{\B}^a_b}{\wt{\B}^c_d}&=p^2\poisb{\B^a_b}{\B^c_d}+pq\poisb{\B^a_b}{\ovl{\phi}^{\,m}_d\phi^c_m}+pq\poisb{\ovl{\phi}^{\,m}_b\phi^a_m}{\B^c_d}+q^2\poisb{\ovl{\phi}^{\,m}_b\phi^a_m}{\ovl{\phi}^{\,n}_d\phi^c_n}
    \\
    &=2\alpha p^2(\delta^a_d\B^c_b-\delta^c_b\B^a_d)+\frac{p^2}{4}\left(\delta^a_d\ovl{\phi}^{\,m}_b\phi^c_m-\delta^c_b\ovl{\phi}^{\,m}_d\phi^a_m\right)+\alpha pq(\delta^a_d\ovl{\phi}^{\,m}_b\phi^c_m-\delta^c_b\ovl{\phi}^{\,m}_d\phi^a_m)
    \\
    &+\alpha pq(\delta^a_d\ovl{\phi}^{\,m}_b\phi^c_m-\delta^c_b\ovl{\psi}^{\,m}_d\phi^a_m)+q^2(\delta^a_d\ovl{\phi}^{\,m}_b\phi^c_m-\delta^c_b\ovl{\phi}^{\,m}_d\phi^a_m)
    \\
    &=2\alpha p(\delta^a_d(p\B^c_b+q\ovl{\phi}^{\,m}_b\phi^c_m)-\delta^c_b(p\B^a_d+q\ovl{\phi}^{\,m}_d\phi^a_m))+\left(\frac{p^2}{4}+q^2\right)(\delta^a_d\ovl{\phi}^{\,m}_b\phi^c_m-\delta^c_b\ovl{\phi}^{\,m}_d\phi^a_m).
    \\
    &=2\alpha p(\delta^a_d\wt{\B}^c_b-\delta^c_b\wt{\B}^a_d)+\left(\frac{p^2}{4}+q^2\right)(\delta^a_d\ovl{\phi}^{\,m}_b\phi^c_m-\delta^c_b\ovl{\phi}^{\,m}_d\phi^a_m).
\end{eqaligned}
First of all, notice that
\begin{equation}
    \frac{p^2}{4}+q^2=\frac{q^2}{4\alpha^2}+q^2=-q^2+q^2=0,
\end{equation}
where we have used \eqref{eq:values of alpha}. Then, comparing with \eqref{Poisson before quotient}, shows that if we take
\begin{equation}
    p=\frac{1}{2\alpha},
\end{equation}
then, we have
\begin{equation}
    \poisb{\wt{\B}^a_b}{\wt{\B}^c_d}=\delta^a_d\wt{\B}^c_b-\delta^c_b\wt{\B}^a_d.
\end{equation}
Therefore, we end-up with the following relation between symplectic structures in \eqref{Poisson before quotient} and \eqref{eq:Poisson structure of elementary fields from bosonic quantum mechanics} 
\begin{equation}\label{eq:relation between the natural Poisson brackets vs the one inspired by the twisted holography}
    \text{$(B,I,J,\{\cdot,\cdot\})$ in \eqref{Poisson before quotient}}, \qquad \Longleftrightarrow \qquad \text{$(\wt{\B}_\pm,\ovl{\phi},\phi;\poisb{\cdot}{\cdot})$ in \eqref{eq:Poisson structure of elementary fields from bosonic quantum mechanics}},
\end{equation}
where from \eqref{eq:generic form of tildeB}, \eqref{eq:relation between p and q}, and \eqref{eq:values of alpha}, we have
\begin{equation}\label{eq:redefinition of B field}
    (\wt{\B}_\pm)^a_b=\pm\mfk{i}\B^a_b-\frac{1}{2}\ovl{\phi}^{\,m}_b\phi^a_m,
\end{equation}
and either sign can be used but has to be fixed once and for all. 

\smallskip Finally, let us point out that since the Poisson bracket $\pois{\cdot}{\cdot}$ in \eqref{eq:Poisson structure of elementary fields in the fermionic case} is graded, it cannot be directly compared with \eqref{Poisson before quotient}. That was the reason we compared the latter with \eqref{eq:Poisson structure of elementary fields from bosonic quantum mechanics}, which is not graded.

\section{Hall--Littlewood Polynomials}
\label{sec: Hall-Littlewood Polynomials}

In this appendix we review some background on symmetric functions, following \cite[\S 3]{haiman2003combinatorics}.
\begin{definition}
For a tuple of integers $\lambda=(\lambda_1\ge\cdots\ge \lambda_n)\in \bN^n$, we denote its associated partition by $(1^{\alpha_1},2^{\alpha_2},\cdots)$, then the Hall--Littlewood polynomial $P_{\lambda}(x;q)$ in the variables $x=(x_1,\cdots,x_n)$ and $q$ is defined by the formula
\begin{align}
    P_{\lambda}(x;q):=\frac{1}{\prod_{i\ge 0}[\alpha_i]_q!}\sum_{w\in \mathfrak S_n}w\left(x^{\lambda}\prod_{i<j}\frac{1-qx_j/x_i}{1-x_j/x_i}\right).
\end{align}
Here $\alpha_0=n-\sum_{i\ge 1}\alpha_i$, and $x^{\lambda}=x_1^{\lambda_1}\cdots x_n^{\lambda_n}$, and we use the following $q$-number notation
\begin{align*}
    [n]_q=\frac{1-q^n}{1-q},\;  [n]_q!=[n]_q[n-1]_q\cdots [1]_q,\; \left[\begin{matrix} n \\ k \end{matrix}\right]_q=\frac{[n]_q!}{[k]_q![n-k]_q!}.
\end{align*}
\end{definition}
The Hall--Littlewood polynomial $P_{\lambda}(x;q)$ is an interpolation between Schur symmetric functions $s_{\lambda}(x)$ and monomial symmetric functions $m_{\lambda}(x)$, in fact we have 
\begin{align}\label{P(0) and P(1)}
    P_{\lambda}(x;0)=s_{\lambda}(x),\; P_{\lambda}(x;1)=m_{\lambda}(x).
\end{align}

\begin{definition}
The Kostka-Foulkes functions are coefficients of the expansion
\begin{align}
    s_{\lambda}(x)=\sum_{\mu}K_{\lambda \mu}(q)P_{\mu}(x;q).
\end{align}
In particular, by \eqref{P(0) and P(1)} we have
\begin{align*}
    K_{\lambda \mu}(0)=\delta_{\lambda \mu}.
\end{align*}
\end{definition}

\paragraph{Jing Operators and Transformed Hall--Littlewood Polynomials.} N. Jing found a definition of Hall--Littlewood polynomials using vertex algebra \cite{jing1991vertex}. Before giving his definition, we recall some plethystic notations.

\smallskip The ring of symmetric functions, denoted by $\Lambda$, is freely generated by power sum functions $p_k$, that is 
\begin{align*}
    \Lambda=\bC[p_k:k\in \mathbb Z_{\ge 1}].
\end{align*}
Let $A$ be a formal Laurent series in indeterminates $a_1, a_2,\cdots$, we define $p_k[A]$ to be the result of replacing each indeterminate $a_i$ in A by $a^k_i$. Then for any $f\in \Lambda$, the plethystic substitution of $A$ into $f$, denoted $f[A]$, is the image of $f$ under the homomorphism sending $p_k$ to $p_k[A]$.

\begin{example}
We list some special cases here.

\begin{itemize}
    \item Let $A=a_1+\cdots +a_n$, then $p_k[A]=a_1^k+\cdots +a_n^k=p_k(a_1,\cdots,a_n)$, and thus for any $f\in \Lambda$, we have $f[A]=f(a_1,\cdots,a_n)$.
    \item Let $A,B$ be formal Laurent series, then $p_k[A\pm B]=p_k[A]\pm p_k[B]$.
    \item Let $\mathrm{PE}=\exp{\left(\sum_{k=1}^{\infty}p_k/k\right)}$, then we have
    \begin{align*}
        \mathrm{PE}[A+B]=\mathrm{PE}[A]\mathrm{PE}[B],\; \mathrm{PE}[A-B]=\mathrm{PE}[A]/\mathrm{PE}[B].
    \end{align*}
    For a single variable $x$, we have $\mathrm{PE}(x)=\frac{1}{1-x}$, thus for a summation $X=x_1+x_2+\cdots$,
    \begin{align*}
        \mathrm{PE}(X)=\prod_{i\ge 1}\frac{1}{1-x_i},\; \mathrm{PE}(-X)=\prod_{i\ge 1}(1-x_i).
    \end{align*}
\end{itemize}
\end{example}

\noindent For the rest of this section, we fix the notation $X=x_1+x_2+\cdots$.
\begin{definition}
The Jing operators are the coefficients $S^q(u)=\sum_{m\in \bZ}S^q_m u^m$ of the operator generating function $S^q(u)$ defined by 
\begin{align}
    S^q(u)f=f[X+(q-1)u^{-1}]\mathrm{PE}[uX].
\end{align}
\end{definition}

\begin{proposition}[{\cite[Proposition 2.12]{jing1991vertex}}]
Jing operators $S^q_m$ satisfy relations \footnote{We note that our $q$ is denoted by $t$ there and our $S^q_m$ is denoted by $H_{-m}$ there.}$:$
\begin{align}\label{eqn: relation of Jing operators}
    S^q_{n}S^q_{m+1}-qS^q_{m+1}S^q_{n}=qS^q_{n+1}S^q_m-S^q_mS^q_{n+1}.
\end{align}
\end{proposition}

\begin{definition}
Let $\mu=(\mu_1\ge\cdots \ge \mu_l)\in \mathbb Z_{\ge 0}^l$ be a tuple of non-increasing integers, define the \textit{transformed Hall--Littlewood polynomial} by
\begin{align}\label{definition: transformed HL}
    H_{\mu}(x;q)=S^q_{\mu_1}S^q_{\mu_2}\cdots S^q_{\mu_l}(1).
\end{align}
For a general array $\mu=(\mu_1,\cdots,\mu_l)\in \mathbb Z_{\ge 0}^l$, we define the \textit{generalized} transformed Hall--Littlewood polynomial by the same formula above.
\end{definition}
Using relations \eqref{eqn: relation of Jing operators} recursively, we can bring a product of operators $S^q_{\mu_1}\cdots S^q_{\mu_l}$ for an array $\mu=(\mu_1,\cdots,\mu_l)\in \mathbb Z_{\ge 0}^l$ into a linear combination of operators $S^q_{\mu_1'}\cdots S^q_{\mu_l'}$ such that $\mu_1'\ge \cdots\ge \mu_l'$, in other words, a generalized transformed Hall--Littlewood polynomial can be written as linear combination of usual transformed Hall--Littlewood polynomials.

The following proposition summarizes the fundamental properties of transformed Hall--Littlewood polynomials.
\begin{proposition}[{\cite[3.4.3]{haiman2003combinatorics}}]
The transformed Hall--Littlewood polynomials $H_{\mu}$ are related to the classical Hall--Littlewood polynomials $P_{\mu}$ by
\begin{align}
    H_{\mu}[(1-q)X;q]=(1-q)^{l(\mu)}\prod_{i=1}^{\mu_1}[\alpha_i(\mu)]_q!P_{\mu}(x;q).
\end{align}
They are uniquely characterized by the following properties.
\begin{itemize}
    \item[(i)] $H_{\mu}(x;q)\in s_{\mu}(x)+\mathbb Z[q]\cdot \{s_{\lambda}(x):\lambda>\mu\}$,
    \item[(ii)] $H_{\mu}[(1-q)x;q]\in \mathbb Z[q]\cdot \{s_{\lambda}(x):\lambda\le \mu\}$.
\end{itemize}
And $H_{\mu}$ is related to Schur functions by
\begin{align}
    H_{\mu}(x;q)=\sum_{\lambda}K_{\lambda \mu}(q)s_{\lambda}(x).
\end{align}
\end{proposition}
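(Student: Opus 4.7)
The proposition packages three assertions: (a) the plethystic identity relating $H_{\underline\mu}$ to the classical Hall--Littlewood polynomial $P_{\underline\mu}$, (b) unique characterization by properties (i) and (ii), and (c) the Schur expansion $H_{\underline\mu}=\sum_{\underline\lambda}K_{\underline\lambda\underline\mu}(q)\,s_{\underline\lambda}$. The plan is to establish (a) by conjugating the Jing operator by a plethystic automorphism, deduce (c) by combining (a) with the Cauchy identity, and finally extract (b) from the combination of (a) and (c).

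For (a), let $\phi:\Lambda_{\mathbb Q(q)}\to\Lambda_{\mathbb Q(q)}$ denote the ring automorphism $p_k\mapsto (1-q^k)p_k$, equivalently $\phi(f)=f[(1-q)X]$, which is invertible since no $1-q^k$ vanishes in $\mathbb Q(q)$. A direct computation on the generating function shows
\begin{equation*}
(\phi\circ S^q(u)\circ\phi^{-1})\,g \;=\; g[X-u^{-1}]\,\prod_{i}\frac{1-qux_i}{1-ux_i},
\end{equation*}
which is the classical Hall--Littlewood vertex operator of \cite{jing1991vertex}. Jing's theorem identifies the iterated action of this classical operator on $1$ with $b_{\underline\mu}(q)\,P_{\underline\mu}(x;q)$, where $b_{\underline\mu}(q)=(1-q)^{l(\underline\mu)}\prod_{i}[\alpha_i(\underline\mu)]_q!$. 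Applying $\phi$ to the definition $H_{\underline\mu}=S^q_{\mu_1}\cdots S^q_{\mu_l}(1)$ then yields (a).

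For (c), apply $\phi$ on the $x$-variables of the Cauchy identity $\sum_{\underline\lambda}s_{\underline\lambda}(x)s_{\underline\lambda}(y)=\mathrm{PE}[XY]$:
\begin{equation*}
\sum_{\underline\lambda}\phi(s_{\underline\lambda})(x)\,s_{\underline\lambda}(y) \;=\; \mathrm{PE}[(1-q)XY] \;=\; \sum_{\underline\lambda}P_{\underline\lambda}(x;q)\,Q_{\underline\lambda}(y;q),
\end{equation*}
the right-hand side being the Hall--Littlewood reproducing kernel with $Q_{\underline\lambda}=b_{\underline\lambda}(q)P_{\underline\lambda}$. Writing $P_{\underline\lambda}=\sum_{\underline\nu\le\underline\lambda}L_{\underline\lambda\underline\nu}(q)s_{\underline\nu}$ with $L=K^{-1}$ and matching coefficients of $s_{\underline\nu}(y)$ gives the key formula $\phi(s_{\underline\nu})=\sum_{\underline\lambda\ge\underline\nu}b_{\underline\lambda}(q)L_{\underline\lambda\underline\nu}(q)P_{\underline\lambda}$. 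Setting $H'_{\underline\mu}:=\sum_{\underline\lambda}K_{\underline\lambda\underline\mu}(q)s_{\underline\lambda}$ and using $LK=I$, the double sum $\phi(H'_{\underline\mu})$ collapses to $b_{\underline\mu}(q)P_{\underline\mu}$, matching $\phi(H_{\underline\mu})$ from (a); invertibility of $\phi$ yields $H_{\underline\mu}=H'_{\underline\mu}$, which is (c). Properties (i) and (ii) are then immediate: (i) from the Kostka--Foulkes unitriangularity $K_{\underline\mu\underline\mu}(q)=1$ with $K_{\underline\lambda\underline\mu}(q)=0$ unless $\underline\lambda\ge\underline\mu$, and (ii) from (a) combined with $P_{\underline\mu}\in\{s_{\underline\lambda}:\underline\lambda\le\underline\mu\}$-span.

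For uniqueness (b), suppose $G$ satisfies (i) and (ii), and put $h=G-H_{\underline\mu}$. Then $h\in\mathbb Z[q]\{s_{\underline\lambda}:\underline\lambda>\underline\mu\}$-span, so by the formula for $\phi(s_{\underline\nu})$ above, $\phi(h)\in\{P_{\underline\lambda}:\underline\lambda>\underline\mu\}$-span. On the other hand (ii) together with the triangularity $P_{\underline\lambda}\in s_{\underline\lambda}+\{s_{\underline\nu}:\underline\nu<\underline\lambda\}$-span gives $\phi(h)\in\{s_{\underline\lambda}:\underline\lambda\le\underline\mu\}$-span $=\{P_{\underline\lambda}:\underline\lambda\le\underline\mu\}$-span. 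These two $P$-spans intersect trivially, so $\phi(h)=0$ and hence $h=0$. The main technical subtlety is that $\phi$ is \emph{not} triangular in the Schur basis, so a direct uniqueness argument in $\{s_{\underline\lambda}\}$-coordinates fails; the resolution is to pass through the $\{P_{\underline\lambda}\}$-basis via the Cauchy duality above, where $\phi$ becomes triangular with nonzero diagonal entries $b_{\underline\nu}(q)$, making (b) and (c) fall out in parallel.
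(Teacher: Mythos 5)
Your proof is correct. Note that the paper does not actually prove this proposition in the text --- it defers entirely to Haiman's lecture notes (the reference given as [3.4.3]) --- so what you have written is a sound, self-contained reconstruction of the standard argument rather than a deviation from an in-paper proof. The pivotal computation checks out: with $\phi(f)=f[(1-q)X]$ one gets $S^q(u)(\phi^{-1}g)=g[X/(1-q)-u^{-1}]\,\mathrm{PE}[uX]$, and applying the outer $\phi$ restores the alphabet and rescales the kernel to $\mathrm{PE}[(1-q)uX]=\prod_i\frac{1-qux_i}{1-ux_i}$, which is exactly Jing's classical vertex operator whose iterated action on $1$ creates $Q_{\underline\mu}=b_{\underline\mu}(q)P_{\underline\mu}$ with $b_{\underline\mu}(q)=\prod_i\phi_{\alpha_i}(q)=(1-q)^{l(\underline\mu)}\prod_{i=1}^{\mu_1}[\alpha_i(\underline\mu)]_q!$, matching the constant in the statement. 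The Cauchy-kernel step correctly produces $\phi(s_{\underline\nu})=\sum_{\underline\lambda\ge\underline\nu}b_{\underline\lambda}(q)L_{\underline\lambda\underline\nu}(q)P_{\underline\lambda}$, the telescoping $LK=I$ gives the Kostka--Foulkes expansion, and your uniqueness argument is the right one: the two relevant subspaces are spanned by disjoint subsets of the basis $\{P_{\underline\lambda}\}$ (partitions incomparable to $\underline\mu$ occur in neither), so they meet only in $0$; your remark that $\phi$ is triangular in the $P$-basis but not the $s$-basis is precisely why one must route the uniqueness through Hall--Littlewood coordinates. Two small caveats worth recording: the proposition is implicitly restricted to partitions $\underline\mu$ (the paper extends $H_{\underline\mu}$ to arbitrary arrays, but $P_{\underline\mu}$, $K_{\underline\lambda\underline\mu}$ and the dominance comparisons only make sense for partitions), which your proof respects; and the external inputs you invoke --- Jing's creation theorem for $Q_{\underline\mu}$, the Hall--Littlewood Cauchy identity, and the $\mathbb Z[q]$-unitriangularity of the Kostka--Foulkes matrix --- are exactly the classical facts the cited reference itself relies on, so nothing circular is being assumed.
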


We can write down the action of Jing operators $S^q_m$ explicitly as in the following lemma.
\begin{lemma}
For an $n$-variable function $f\in \bC[p_1,\cdots,p_n](q)$, where $p_{k}(x)=x_1^k+\cdots +x_n^k$, Jing operator $S^q_m$ acts on it as
\begin{align}\label{Jing Operator}
    (S^q_mf)(x;q)=\sum_{i=1}^nf(x_1,\cdots,qx_i,\cdots,x_n; q)\frac{x_i^m}{\prod_{j\neq i}(1-x_j/x_i)}.
\end{align}
\end{lemma}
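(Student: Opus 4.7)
The plan is to recognize $(S^q_m f)(x;q)$ as the coefficient of $u^m$ in a rational function of $u$, and then extract that coefficient by the residue calculus on $\mathbb{CP}^1$. Define
\[
G(u) := f[X+(q-1)u^{-1}]\,\mathrm{PE}[uX] = f[X+(q-1)u^{-1}]\cdot\prod_{j=1}^n\frac{1}{1-ux_j},
\]
so that by the definition of Jing operators, $S^q_m f = [u^m]G(u)$. Because $f\in\mathbb F(q)[p_1,\dots,p_n]$, the plethystic rule $p_k\mapsto p_k(x)+(q-1)^k u^{-k}$ shows that $f[X+(q-1)u^{-1}]$ is a Laurent polynomial in $u$ whose coefficients are symmetric polynomials in $x_1,\dots,x_n$ over $\mathbb F(q)$. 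Consequently $G(u)$ is a rational function of $u$ whose only poles are a pole of finite order at $u=0$ and, for pairwise distinct $x_i$, simple poles at $u=1/x_i$.

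The main computation is a residue sum. As $u\to\infty$, the first factor tends to $f[X]$ while $\prod_j(1-ux_j)^{-1}\sim(-1)^n u^{-n}(x_1\cdots x_n)^{-1}$, so $G(u)u^{-m-1}=O(u^{-n-m-1})$ and the meromorphic $1$-form $G(u)u^{-m-1}\,du$ has no residue at infinity whenever $n\ge 1$ and $m\ge 0$. The residue theorem on $\mathbb{CP}^1$ then gives
\[
(S^q_m f)(x;q)=\mathrm{Res}_{u=0}\bigl(G(u)u^{-m-1}\bigr) = -\sum_{i=1}^n\mathrm{Res}_{u=1/x_i}\bigl(G(u)u^{-m-1}\bigr).
\]
The central plethystic observation is that at $u=1/x_i$ one has $(q-1)u^{-1}=(q-1)x_i$, so the extra alphabet $(q-1)x_i$ merges into the $i$-th slot of $X$:
\[
\left.f[X+(q-1)u^{-1}]\right|_{u=1/x_i} = f[x_1+\cdots+x_n+(q-1)x_i] = f(x_1,\dots,qx_i,\dots,x_n;q).
\]
Combined with $1-ux_i=-x_i(u-1/x_i)$, this yields
\[
\mathrm{Res}_{u=1/x_i}\bigl(G(u)u^{-m-1}\bigr) = -\,x_i^m\,f(x_1,\dots,qx_i,\dots,x_n;q)\prod_{j\ne i}\frac{1}{1-x_j/x_i},
\]
and summing over $i$ reproduces the right-hand side of \eqref{Jing Operator}.

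This establishes the formula on the Zariski-open locus where the $x_i$ are pairwise distinct. Since both sides of \eqref{Jing Operator} are symmetric polynomials in $x_1,\dots,x_n$ over $\mathbb F(q)$ (the left-hand side is manifestly a polynomial in $x$, and the right-hand side becomes one after the symmetric sum clears its denominators), the identity extends to all $x$. The only step that requires genuine care is the vanishing of the residue at infinity, which is precisely what forces the closed-form expression; as a sanity check, $f=1$ recovers the classical partial-fraction identity $h_m(x)=\sum_i x_i^m/\prod_{j\ne i}(1-x_j/x_i)$, matching what one gets by expanding $\mathrm{PE}[uX]$ directly.
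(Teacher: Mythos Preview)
Your proof is correct, modulo a small typo: the plethystic rule is $p_k\mapsto p_k(x)+(q^k-1)u^{-k}$, not $(q-1)^k u^{-k}$. This slip is harmless since you never use the incorrect form again, and your key evaluation $f[X+(q-1)x_i]=f(x_1,\dots,qx_i,\dots,x_n)$ is right.

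Your argument and the paper's are two packagings of the same idea. The paper first writes the partial-fraction decomposition
\[
\mathrm{PE}[uX]=\sum_{i=1}^n\frac{1}{1-ux_i}\prod_{j\ne i}\frac{1}{1-x_j/x_i},
\]
reduces by linearity to monomials $f=p_{k_1}\cdots p_{k_s}$, and then reads off the $u^m$-coefficient of each summand directly from the geometric series for $(1-ux_i)^{-1}$. You instead view $G(u)$ globally on $\mathbb{CP}^1$ and trade the residue at $0$ for the residues at the simple poles $u=1/x_i$, checking that the residue at infinity vanishes. The partial-fraction coefficients \emph{are} these residues, so the two computations coincide termwise; your route avoids the reduction to monomials and makes the vanishing at infinity (equivalently, the absence of extra terms) explicit, while the paper's route is slightly more elementary in that it stays at the level of formal power series without invoking the residue theorem.
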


\begin{proof}
Notice that
\begin{align}
    \mathrm{PE}(uX)=\prod_{i= 1}^n\frac{1}{1-ux_i}=\sum_{i=1}^n\frac{1}{1-ux_i}\prod_{j\neq i}\frac{1}{1-x_j/x_i}.
\end{align}
Without loss of generality, we assume that $f=p_{k_1}\cdots p_{k_s}$, then by definition, $S^q_m$ is the coefficient of $u^m$ in the series expansion 
\begin{align*}
    (p_{k_1}+(q^{k_1}-1)u^{-k_1})\cdots (p_{k_s}+(q^{k_s}-1)u^{-k_s})\sum_{i=1}^n\frac{1}{1-ux_i}\prod_{j\neq i}\frac{1}{1-x_j/x_i}.
\end{align*}
For the $i$-th summand, its $u^m$ coefficient is
\begin{align*}
    &(p_{k_1}+(q^{k_1}-1)x_i^{k_1})\cdots (p_{k_s}+(q^{k_s}-1)x_i^{k_s})\frac{x_i^m}{\prod_{j\neq i}(1-x_j/x_i)}\\
    &~=(x_1^{k_1}+\cdots+q^{k_1}x_i^{k_1}+\cdots+x_n^{k_1})\cdots (x_s^{k_s}+\cdots+q^{k_s}x_i^{k_s}+\cdots+x_n^{k_s})\frac{x_i^m}{\prod_{j\neq i}(1-x_j/x_i)}\\
    &~=f(x_1,\cdots,qx_i,\cdots,x_n; q)\frac{x_i^m}{\prod_{j\neq i}(1-x_j/x_i)}.
\end{align*}
Summing over $i$ gives the desired formula \eqref{Jing Operator}.
\end{proof}

\section{Affine Grassmannians and Geometrization of Jing Operators}
\label{appsec:affine grassamannians and geometrization of Jing operators}
In this appendix, we give a geometric definition of Jing operators $S^q_m$. Recall that 
\begin{align}
    K_{\GL_n\times \mathbb C^{\times}}(\mathrm{pt})\otimes_\bZ \bC=\bC[x_1^{\pm},\cdots,x_n^{\pm},q^{\pm}]^{\mathfrak S_n}.
\end{align}
To simplify notation, we will abbreviate $K_{\GL_n\times \mathbb C^{\times}}(\mathrm{pt})\otimes_\bZ \bC$ to $K_{\GL_n\times \mathbb C^{\times}}(\mathrm{pt})$. Notice that there is a subalgebra $\bC[p_1,\cdots,p_n,q^{\pm}]\subset K_{\GL_n\times \mathbb C^{\times}}(\mathrm{pt})$, where $p_i=x_1^i+\cdots+x_n^i$ is the $i$-th power sum polynomial.

Consider the affine Grassmannian $\Gr_{\GL_n}=\GL_n(\mathscr K)/\GL_n(\mathscr O)$, and let $\omega_1=(1,0,\cdots,0)$ be the first fundamental coweight of $\GL_n$, then the $\GL_n(\mathscr O)$-orbit $\Gr^{\omega_1}$ is isomorphic to $\mathbb P^{n-1}$ and it is fixed by the $\mathbb C^{\times}$-rotation.

Consider the $\GL_n(\mathscr O)\rtimes \mathbb C^{\times}$-equivariant bounded derive category of coherent sheaves on $\Gr_{\GL_n}$, denoted by $D^b_{\GL_n(\mathscr O)\rtimes \mathbb C^{\times}}(\Gr_{\GL_n})$. Here coherent sheaves on ind-scheme like $\Gr_{\GL_n}$ are defined to have finite type support. In particular, for any $\mathcal F\in D^b_{\GL_n(\mathscr O)\rtimes \mathbb C^{\times}}(\Gr_{\GL_n})$, we have $\chi (\mathcal F)\in K_{\GL_n(\mathscr O)\rtimes \mathbb C^{\times}}(\mathrm{pt})=K_{\GL_n\times \mathbb C^{\times}}(\mathrm{pt})$.

There is a convolution product on affine Grassmannian, defined as:
\begin{align}\label{eqn: convolution}
    m:\Gr_{\GL_n}\widetilde{\times} \Gr_{\GL_n}=\GL_n(\mathscr K)\overset{\GL_n(\mathscr O)}{\times}\GL_n(\mathscr K)/\GL_n(\mathscr O)\to \GL_n(\mathscr K)/\GL_n(\mathscr O).
\end{align}
Here the map sends $(g_1,g_2)$ to $g_1g_2$. The convolution map of $\Gr_{\GL_n}$ induces a functor $\star : D^b_{\GL_n(\mathscr O)\rtimes \mathbb C^{\times}}(\Gr_{\GL_n})\times D^b_{\GL_n(\mathscr O)\rtimes \mathbb C^{\times}}(\Gr_{\GL_n})\to D^b_{\GL_n(\mathscr O)\rtimes \mathbb C^{\times}}(\Gr_{\GL_n})$ defined as 
\begin{align*}
    \mathcal F\star \mathcal G=\mathbf Rm_*(\mathcal F\widetilde{\boxtimes} \mathcal G).
\end{align*}
Passing to the $K$-theory, we obtain an map 
\begin{align}
    \star: K_{\GL_n(\mathscr O)\rtimes \mathbb C^{\times}}(\Gr_{\GL_n})\otimes K_{\GL_n(\mathscr O)\rtimes \mathbb C^{\times}}(\Gr_{\GL_n})\longrightarrow K_{\GL_n(\mathscr O)\rtimes \mathbb C^{\times}}(\Gr_{\GL_n}).
\end{align}
It is known that the $\star$-product on $K_{\GL_n(\mathscr O)\rtimes \mathbb C^{\times}}(\Gr_{\GL_n})$ is associative and its classical limit $K_{\GL_n(\mathscr O)}(\Gr_{\GL_n})$ is commutative, this is an example of K-theoretic Coulomb branch in the sense of \cite{braverman2016coulomb}.

Using the convolution algebra $K_{\GL_n(\mathscr O)\rtimes \mathbb C^{\times}}(\Gr_{\GL_n})$ we can realize Jing operators $S^q_m$ geometrically as follows. The determinant line bundle $\mathcal O(1)$ on $\Gr_{\GL_n}$ \cite[1.5]{zhu2016introduction} is $\GL_n(\mathscr O)\rtimes \mathbb C^{\times}$-equivariant construction. Let us use $\mathcal O_{\Gr^{\omega_1}}(m)$ to denote $i_*i^*\mathcal O(1)^{\otimes m}$ where $i:\Gr^{\omega_1}\hookrightarrow \Gr_{\GL_n}$ is the natural embedding. Since $i$ is $\GL_n(\mathscr O)\rtimes \mathbb C^{\times}$-equivariant, $\mathcal O_{\Gr^{\omega_1}}(m)$ is also $\GL_n(\mathscr O)\rtimes \mathbb C^{\times}$-equivariant.

\begin{proposition}
For $\mathcal F\in D^b_{\GL_n(\mathscr O)\rtimes \mathbb C^{\times}}(\Gr_{\GL_n})$, let $\chi=\chi(\mathcal F)\in \bC[x_1^{\pm},\cdots,x_n^{\pm},q^{\pm}]^{S_n}$ be the equivariant Euler characteristic of $\mathcal F$, similarly let $\widetilde{\chi}=\chi(\mathcal O_{\Gr^{\omega_1}}(m)\star \mathcal F)$. Then
\begin{align}\label{Convolution}
    \widetilde{\chi}(x;q)=\sum_{i=1}^n\chi(x_1,\cdots,qx_i,\cdots,x_n; q)\frac{x_i^m}{\prod_{j\neq i}(1-x_j/x_i)}.
\end{align}
\end{proposition}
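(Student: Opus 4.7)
The plan is to prove the formula by $T\times \mathbb C^\times_q$-equivariant $K$-theoretic (Atiyah--Bott) localization on the projective space $\Gr^{\omega_1}\cong \mathbb P^{n-1}$. First I would unfold the convolution as a pushforward along $m$, so that
$$\widetilde{\chi}=\chi\!\left(\Gr^{\omega_1}\widetilde{\times}\Gr_{\GL_n},\; \mathcal O(m)|_{\Gr^{\omega_1}}\,\widetilde{\boxtimes}\,\mathcal F\right),$$
and then use the first-factor projection $p:\Gr^{\omega_1}\widetilde{\times}\Gr_{\GL_n}\to \Gr^{\omega_1}$, which realizes the twisted product as a Zariski-locally-trivial fiber bundle with fiber $\Gr_{\GL_n}$. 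By the projection formula this reduces the computation to
$$\widetilde{\chi}=\chi\!\left(\Gr^{\omega_1},\; \mathcal O(m)|_{\Gr^{\omega_1}}\otimes Rp_*\widetilde{\mathcal F}\right),$$
where $\widetilde{\mathcal F}$ denotes the twisted-product lift of $\mathcal F$ along the convolution bundle.

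Second, I would apply Atiyah--Bott localization on $\mathbb P^{n-1}$ with respect to $T$. The $T$-fixed points are the $n$ Weyl translates $P_i=[z^{e_i}]$ for $i=1,\dots,n$. Standard computations on $\mathbb P^{n-1}$ give the $T$-weight of $\mathcal O(m)|_{\Gr^{\omega_1}}$ at $P_i$, namely $x_i^m$, and the $T$-character of the tangent space $T_{P_i}\mathbb P^{n-1}$, namely $\sum_{j\neq i}x_j/x_i$, which produces the expected denominator $\prod_{j\neq i}(1-x_j/x_i)$.

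The main step, and the one I expect to require the most care, is computing the fiber character $(Rp_*\widetilde{\mathcal F})_{P_i}$. Under the parameterization $[z^{e_i},g_2]\leftrightarrow g_2\GL_n(\mathcal O)$ the fiber $p^{-1}(P_i)$ is identified with $\Gr_{\GL_n}$, carrying $\mathcal F$ as its underlying sheaf; the $T$-action on the fiber is standard because $T$ commutes with $z^{e_i}$, but the loop rotation $\rho_q:z\mapsto qz$ acquires a twist. Indeed $\rho_q(z^{e_i})=q^{e_i}z^{e_i}$ with $q^{e_i}\in T\subset \GL_n(\mathcal O)$, and absorbing $q^{e_i}$ through the $\GL_n(\mathcal O)$-equivalence relation defining the twisted product shows that $\mathbb C^\times_q$ acts on $p^{-1}(P_i)$ as the standard loop rotation composed with the $T$-action by the cocharacter $q^{e_i}$. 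Since shifting the $T$-action by $q^{e_i}$ corresponds to the character substitution $x_i\mapsto q x_i$ (and leaves $q$ alone), I expect the fiber character to be $\chi(x_1,\dots,qx_i,\dots,x_n;q)$. Assembling the Atiyah--Bott contributions
$$\frac{x_i^m\,\chi(x_1,\dots,qx_i,\dots,x_n;q)}{\prod_{j\neq i}(1-x_j/x_i)}$$
and summing over $i=1,\dots,n$ then yields \eqref{Convolution}.
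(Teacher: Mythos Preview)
Your proposal is correct and follows essentially the same argument as the paper: reduce via the projection formula to an Euler characteristic on $\Gr^{\omega_1}\cong\mathbb P^{n-1}$, then apply $T$-equivariant $K$-theoretic localization, the key point being that loop rotation on the fiber over the $i$-th fixed point is twisted by the cocharacter $e_i$, yielding the substitution $x_i\mapsto qx_i$. Your justification of that twist (absorbing $q^{e_i}$ through the $\GL_n(\mathcal O)$-equivalence in the twisted product) is in fact more explicit than the paper's.
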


\begin{proof}
Let $p:\Gr_{\GL_n}\widetilde{\times} \Gr_{\GL_n}\to \Gr_{\GL_n}$ be the projection to the first component map, i.e. $p(g_1,g_2)=g_1$, this is a fibration with fibers isomorphic to $\Gr_{\GL_n}$. Then by the projection formula we have
\begin{align}\label{Projection}
    \chi(\mathcal O_{\Gr^{\omega_1}}(m)\star \mathcal F)=\chi(\mathbb P^{n-1},\mathcal O(m)\otimes \mathbf L i^*\mathbf R p_*\widetilde{\mathcal F}).
\end{align}
Here $\widetilde{\mathcal F}=\mathcal O\widetilde \boxtimes \mathcal F$ is the twist of $\mathcal F$ on $\Gr_{\GL_n}\widetilde{\times} \Gr_{\GL_n}$. We use the localization on $\mathbb P^{n-1}$ to compute the right hand side of \eqref{Projection} as following. Let the maximal torus of $\GL_n$ be $T$, then $T$-fixed points of $\mathbb P^{n-1}$ are $[1,0,\cdots,0],\cdots, [0,\cdots,1,\cdots,0],\cdots,[0,\cdots,1]$ (in homogeneous coordinates of $\mathbb P^{n-1}$), label these points by $e_1,\cdots ,e_n$. Observe that
\begin{itemize}
    \item[(1)] The fiber of determinant line bundle $\mathcal O(1)$ at $e_i$ has $T$-weight $x_i$,
    \item[(2)] The tangent space at $e_i$ has $T$-weights $x_i/x_j,j\in \{1,\cdots,n\}\backslash\{i\}$,
    \item[(3)] The fiber of $\mathbf L i^*\mathbf R p_*\widetilde{\mathcal F}$ at $e_i$ has the same $T$-weights as $\chi (\mathcal F)$, but the $\mathbb C^{\times}$-action is different, because the fiber $p^{-1}(e_i)$ is is identified with $\Gr_{\GL_n}$ via a translation $g\mapsto z^{\omega_i-\omega_{i-1}}g$ and the new $\mathbb C^{\times}$ acts through the diagonal of $\mathbb C^{\times}_{\mathrm{rotation}}\times T_i$, where $T_i$ is the $i$'th $\mathbb C^{\times}$-component of $T$. In other word, the fiber of $\mathbf L i^*\mathbf R p_*\widetilde{\mathcal F}$ at $e_i$ has the $T\times \mathbb C^{\times}$-weights $$\chi(\mathcal F)(x_1,\cdots,qx_i,\cdots,x_n; q).$$
\end{itemize}
Then \eqref{Convolution} follows from applying localization to $\mathcal O(m)\otimes \mathbf L i^*\mathbf R p_*\widetilde{\mathcal F}$ using three observations made above.
\end{proof}

Comparing \eqref{Convolution} and \eqref{Jing Operator}, we have the following
\begin{corollary}
If $\chi(\mathcal F)\in \bC[p_1,\cdots,p_n,q^{\pm}]\subset K_{\GL_n\times \mathbb C^{\times}}(\mathrm{pt})$, then 
\begin{align}\label{geometric Jing operator}
    \chi(\mathcal O_{\Gr^{\omega_1}}(m)\star \mathcal F)=S^q_m\chi(\mathcal F).
\end{align}
\end{corollary}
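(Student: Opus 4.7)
The plan is essentially a direct comparison of the two explicit formulas already established just above, so the argument will be short. First, I would invoke the preceding proposition applied to the given $\mathcal F$ to obtain
\[ \chi(\mathcal O(m)|_{\Gr^{\omega_1}}\star \mathcal F)(x;q)=\sum_{i=1}^n \chi(\mathcal F)(x_1,\ldots,qx_i,\ldots,x_n;q)\,\frac{x_i^m}{\prod_{j\neq i}(1-x_j/x_i)}. \]
This is the substantive input, derived there via the projection formula along $p:\Gr_{\GL_n}\widetilde{\times} \Gr_{\GL_n}\to \Gr_{\GL_n}$ combined with $T\times \mathbb{C}^{\times}$-equivariant localization on the fiber $\Gr^{\omega_1}\cong\mathbb{P}^{n-1}$.

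Next, I would appeal to the explicit Jing-operator formula \eqref{Jing Operator} established in the preceding lemma. The hypothesis $\chi(\mathcal F)\in\mathbb{Q}[p_1,\ldots,p_n,q^{\pm}]$ is precisely the condition placing $\chi(\mathcal F)$ in the domain $\mathbb{F}[p_1,\ldots,p_n](q)$ on which that formula is stated, so substituting $f=\chi(\mathcal F)$ one reads off
\[ (S^q_m \chi(\mathcal F))(x;q)=\sum_{i=1}^n \chi(\mathcal F)(x_1,\ldots,qx_i,\ldots,x_n;q)\,\frac{x_i^m}{\prod_{j\neq i}(1-x_j/x_i)}. \]
Comparing the two displays term-by-term yields the claimed identity $\chi(\mathcal O(m)|_{\Gr^{\omega_1}}\star \mathcal F)=S^q_m\chi(\mathcal F)$.

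There is no real obstacle at this last step, since the content lies entirely in the two preceding results. The only point worth flagging is that the hypothesis on $\chi(\mathcal F)$ cannot be dropped: the Jing operator is defined a priori only on the polynomial algebra in the power sums, whereas $\chi(\mathcal F)$ lives a priori in the larger ring $\mathbb{Q}[x_1^{\pm},\ldots,x_n^{\pm},q^{\pm}]^{S_n}$. Conceptually, the reason the two computations agree is that the substitution $x_i\mapsto qx_i$ sends $p_k(x)$ to $p_k+(q^k-1)x_i^k$, which is exactly the plethystic substitution $p_k[X+(q-1)x_i]$ built into the generating-function definition $S^q(u)f=f[X+(q-1)u^{-1}]\,\mathrm{PE}[uX]$ after extracting the $u^m$ coefficient.
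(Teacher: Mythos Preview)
Your proof is correct and follows exactly the paper's approach: the paper simply writes ``Comparing \eqref{Convolution} and \eqref{Jing Operator}, we have the following'' before stating the corollary, and your argument is a faithful expansion of that one-line justification. Your added remarks on why the hypothesis $\chi(\mathcal F)\in\mathbb Q[p_1,\ldots,p_n,q^{\pm}]$ is needed and the conceptual explanation via plethystic substitution are helpful elaborations but not additional logical steps.
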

From this corollary we see that the operator $\mathcal O_{\Gr^{\omega_1}}(m)\star (-)$ is a geometrization of the Jing operator $S^q_m$. In fact, it extends the domain of $S^q_m$ to $K_{\GL_n\times \mathbb C^{\times}}(\mathrm{pt})$, and negative $m$ is also allowed.

\begin{corollary}\label{corollary: character of convolution}
Let $\mu=(\mu_1,\cdots ,\mu_l)$ be an array of nonnegative integers, then
\begin{align}
    H_{\mu}(x;q)=\chi(\Gr_{\GL_n},\mathcal O_{\Gr^{\omega_1}}(\mu_1)\star\cdots \star\mathcal O_{\Gr^{\omega_1}}(\mu_l)).
\end{align}
\end{corollary}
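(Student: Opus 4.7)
The plan is to derive this as an iterated application of the preceding corollary, which asserts that convolution with $\mathcal O(m)|_{\Gr^{\omega_1}}$ realizes the Jing operator $S^q_m$ at the level of Euler characteristics, provided the input character lies in the polynomial subring $\mathbb Q[p_1,\ldots,p_n,q^{\pm}]\subset K_{\GL_n\times\mathbb C^{\times}}(\mathrm{pt})$. Combined with the associativity of the convolution product $\star$ on $K_{\GL_n(\mathcal O)\rtimes \mathbb C^{\times}}(\Gr_{\GL_n})$ and the fact that the skyscraper sheaf $\mathcal O_{e}$ at the identity $e\in \Gr_{\GL_n}$ is the unit for $\star$ with $\chi(\mathcal O_{e}) = 1$, this should give the formula by a direct induction on $l$.

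Concretely, I would first handle the base case $l=1$ by writing $\mathcal O(\mu_1)|_{\Gr^{\omega_1}} = \mathcal O(\mu_1)|_{\Gr^{\omega_1}}\star \mathcal O_{e}$ and invoking the preceding corollary with $\chi(\mathcal O_{e}) = 1 \in \mathbb Q[p_1,\ldots,p_n,q^{\pm}]$, which yields $\chi(\mathcal O(\mu_1)|_{\Gr^{\omega_1}}) = S^q_{\mu_1}(1) = H_{(\mu_1)}(x;q)$ by the definition \eqref{definition: transformed HL}. For the induction step, associativity lets me rewrite
\begin{equation*}
  \mathcal O(\mu_1)|_{\Gr^{\omega_1}}\star\cdots \star \mathcal O(\mu_l)|_{\Gr^{\omega_1}} = \mathcal O(\mu_1)|_{\Gr^{\omega_1}} \star \bigl(\mathcal O(\mu_2)|_{\Gr^{\omega_1}}\star\cdots \star \mathcal O(\mu_l)|_{\Gr^{\omega_1}}\bigr),
\end{equation*}
and then apply the preceding corollary once more with $\mathcal F$ set to the inner convolution, whose character by induction equals $S^q_{\mu_2}\cdots S^q_{\mu_l}(1)$. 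This produces $S^q_{\mu_1}S^q_{\mu_2}\cdots S^q_{\mu_l}(1)$, which is exactly $H_{\underline\mu}(x;q)$.

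The only genuine thing to check, and what I expect to be the main subtlety, is that at each stage of the induction the intermediate character $S^q_{\mu_{i+1}}\cdots S^q_{\mu_l}(1)$ lies in the polynomial subring $\mathbb Q[p_1,\ldots,p_n,q^{\pm}]$ required by the hypothesis of the preceding corollary. This follows from the explicit definition $S^q(u)f = f[X+(q-1)u^{-1}]\mathrm{PE}[uX]$: power sums transform as $p_k \mapsto p_k + (q^k-1)u^{-k}$, so the resulting expansion is manifestly polynomial in the $p_i$ with coefficients in $\mathbb Q[q^{\pm}][\![u,u^{-1}]\!]$; extracting the coefficient of $u^{\mu_i}$ keeps us in $\mathbb Q[p_1,\ldots,p_n,q^{\pm}]$, so the subring is preserved under each $S^q_{\mu_i}$, and the induction goes through cleanly.
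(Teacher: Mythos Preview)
Your proposal is correct and takes essentially the same approach as the paper, which simply says to combine the geometric realization of the Jing operator \eqref{geometric Jing operator} with the iterative definition \eqref{definition: transformed HL} of $H_{\underline\mu}$. You spell out the induction explicitly and, helpfully, verify the hypothesis that each intermediate character stays in $\mathbb Q[p_1,\ldots,p_n,q^{\pm}]$, a point the paper's one-line proof leaves implicit.
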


\begin{proof}
Combine \eqref{geometric Jing operator} with the definition of $H_{\mu}$ in terms of iterative action of $S^q_{\mu_i}$ \eqref{definition: transformed HL}.
\end{proof}

\begin{corollary}\label{corollary: character of O(k)}
Let $\overline{\Gr}^{N\omega_1}$ be the closure of the $\GL_n(\mathscr O)$-orbit through $z^{N\omega_1}$, then 
\begin{align}
    \chi(\overline{\Gr}^{N\omega_1},\mathcal O(k))=H_{(k^N)}(x;q).
\end{align}
Here $(k^N)$ is is the short-hand notation for $N$-tuples of $k$, i.e. $(k^N):=(k,k,\cdots,k)$.
\end{corollary}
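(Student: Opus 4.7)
The plan is to deduce this from Corollary \ref{corollary: character of convolution} applied to the constant array $\underline\mu=(k^N)$, combined with the fact that the iterated convolution map is a resolution of the Schubert variety $\overline{\Gr}^{N\omega_1}$. Specializing that corollary gives
$$H_{(k^N)}(x;q) = \chi\left(\Gr_{\GL_n},\, \underbrace{\mathcal O(k)|_{\Gr^{\omega_1}}\star\cdots\star \mathcal O(k)|_{\Gr^{\omega_1}}}_{N \text{ factors}}\right),$$
so the task reduces to identifying the right-hand side with $\chi(\overline{\Gr}^{N\omega_1}, \mathcal O(k))$.

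For this I would introduce the $N$-fold convolution map
$$m:\Gr^{\omega_1}\widetilde{\times}\cdots\widetilde{\times}\Gr^{\omega_1}\longrightarrow \overline{\Gr}^{N\omega_1},$$
which is a resolution of singularities (the image lies in $\overline{\Gr}^{N\omega_1}$ by additivity of coweights under convolution, and it is birational since on the open cell through $z^{N\omega_1}$ the map is an isomorphism). By definition of the twisted external product,
$$\mathcal O(k)|_{\Gr^{\omega_1}}\star\cdots\star\mathcal O(k)|_{\Gr^{\omega_1}} = Rm_*\left(\mathcal O(k)\widetilde{\boxtimes}\cdots\widetilde{\boxtimes}\mathcal O(k)\right).$$
I would then invoke two standard facts about the affine Grassmannian of $\GL_n$: (i) the determinant line bundle is multiplicative under convolution, so $m^*\mathcal O(k) = \mathcal O(k)\widetilde{\boxtimes}\cdots\widetilde{\boxtimes}\mathcal O(k)$; and (ii) $Rm_*\mathcal O = \mathcal O$ on $\overline{\Gr}^{N\omega_1}$, obtained by iterating the identity \eqref{eqn: pushforwrd of structure sheaf} along the Bott--Samelson tower. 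Combining these with the projection formula yields
$$Rm_*\left(\mathcal O(k)\widetilde{\boxtimes}\cdots\widetilde{\boxtimes}\mathcal O(k)\right) = Rm_* m^*\mathcal O(k) = \mathcal O(k)\otimes^L Rm_*\mathcal O = \mathcal O(k),$$
and passing to Euler characteristics gives $\chi(\Gr_{\GL_n}, \mathcal O(k)|_{\Gr^{\omega_1}}^{\star N}) = \chi(\overline{\Gr}^{N\omega_1}, \mathcal O(k))$, which concludes the argument.

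The main step to verify is (i), the multiplicativity of $\mathcal O(1)$ under the convolution product. This is a classical but non-trivial input: one typically constructs $\mathcal O(1)$ from a natural character of the arc group (or equivalently from a central extension of the loop group) and checks that this construction is compatible with the twisted product, so that $m^*\mathcal O(1)$ splits canonically as a twisted external tensor product of the restrictions of $\mathcal O(1)$ to each factor. Once (i) is in hand, (ii) is a straightforward induction on $N$ using \eqref{eqn: pushforwrd of structure sheaf} together with the smoothness of each factor $\Gr^{\omega_1}\cong \mathbb P^{n-1}$.
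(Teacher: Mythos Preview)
Your proposal is correct and follows essentially the same route as the paper: reduce to Corollary~\ref{corollary: character of convolution} with $\underline\mu=(k^N)$, use multiplicativity $m^*\mathcal O(1)\cong \mathcal O(1)\widetilde\boxtimes\cdots\widetilde\boxtimes\mathcal O(1)$, the fact that $Rm_*\mathcal O\cong\mathcal O$, and the projection formula.

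One organizational point: you justify $Rm_*\mathcal O\cong\mathcal O$ by iterating \eqref{eqn: pushforwrd of structure sheaf}, but in the paper's logical structure that isomorphism is itself explained \emph{in the proof of this very corollary}, so citing it here would be circular. The paper instead appeals directly to the fact that $\overline{\Gr}^{N\omega_1}$ has rational singularities (a result valid for all $G(\mathcal O)$-orbit closures, quoted from \cite{kamnitzer2014yangians}), which immediately gives $Rm_*\mathcal O\cong\mathcal O$ for the Bott--Samelson resolution. Your inductive suggestion can be made to work too, but it still needs an independent base input (e.g.\ rational singularities, or a direct check that each partial convolution has trivial higher pushforward), so the cleanest fix is simply to invoke rational singularities as the paper does.
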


\begin{proof}
Let $m:\Gr_{\GL_n}\widetilde{\times} \Gr_{\GL_n}\cdots\widetilde{\times} \Gr_{\GL_n}\to \Gr_{\GL_n}$ be the convolution map of $N$-copies of $\Gr_{\GL_n}$, it follows from the definition of determinate line bundle that there is a $\GL_n(\mathscr O)\rtimes \mathbb C^{\times}$-equivariant isomorphism 
\begin{align}\label{eq:pullback of O(1)}
    m^*\mathcal O(1)\cong\mathcal O(1)\widetilde\boxtimes\cdots \widetilde\boxtimes \mathcal O(1).
\end{align}
It is known that $m(\Gr^{\omega_1}\widetilde{\times} \cdots\widetilde{\times} \Gr^{\omega_1})=\overline{\Gr}^{N\omega_1}$, and it is birational, thus $m$ is a resolution of singularities. It is also known that $\overline{\Gr}^{N \omega_1}$ has rational singularities (this is true for all $G(\mathscr O)$-orbit closure on affine Grassmannian of any reductive group $G$, see \cite[Theorem 2.7]{kamnitzer2014yangians}), therefore $\mathbf Rm_*\mathcal{O}\cong \mathcal O$ and $\mathbf Rm_*m^*\mathcal{O}(k)\cong \mathcal O(k)$, thus
\begin{align*}
    \chi(\overline{\Gr}^{N\omega_1},\mathcal O(k))&=\chi(\Gr^{\omega_1}\widetilde{\times} \cdots\widetilde{\times} \Gr^{\omega_1},m^*\mathcal O(k))\\
    \text{\small by \eqref{eq:pullback of O(1)}}\quad &=\chi(\Gr_{\GL_n},\mathcal O_{\Gr^{\omega_1}}(k)\star\cdots \star\mathcal O_{\Gr^{\omega_1}}(k))\\
    \text{\small by \eqref{geometric Jing operator}}\quad &=H_{(k^N)}(x;q).
\end{align*}
\end{proof}

\section*{Declarations}

\paragraph{Authors’ contributions.} All authors designed the research, performed the research, and wrote the paper. All authors gave their final approval for publication.

\paragraph{Competing Interests.} The authors have no competing interests.

\paragraph{Data Availability.} Data sharing is not applicable to this article, as no datasets were generated.

\bibliographystyle{JHEP}
\bibliography{References}

\end{document}